\documentclass[12pt]{article}
\usepackage[utf8]{inputenc}

\usepackage{geometry}
\usepackage{amsmath}
\allowdisplaybreaks
\usepackage{mathtools}
\usepackage{blkarray}
\usepackage{amssymb}
\usepackage{amsthm}
\usepackage[shortlabels]{enumitem}
\usepackage{setspace}
\usepackage[colorlinks,linkcolor=blue,citecolor=blue,urlcolor=blue,bookmarks=false,hypertexnames=true]{hyperref} 
\usepackage{xcolor}
\usepackage{url}
\usepackage{etoolbox}

\usepackage[round]{natbib}
\newtheorem{theorem}{Theorem}
\newtheorem{theorem-app}{Theorem}[section]
\newtheorem{definition}{Definition}
\newtheorem{definition-app}{Definition}[section]
\newtheorem{axiom}{Axiom}
\newtheorem{axiom-app}{Axiom}[section]

\newtheorem{proposition}{Proposition}

\newtheorem{lemma-app}{Lemma}[section]
\newtheorem{corollary}{Corollary}
\newtheorem{example}{Example}
\newtheorem{example-app}{Example}[section]

\newtheorem{assumption}{Assumption}

\newtheorem{observation}{Observation}

\newcommand{\R}{\mathbb{R}}

\newcommand{\cI}{\mathcal{I}}

\newcommand{\CVLAM}{\underline{V}_{\mathrm{LAM},\theta_0}}
\newcommand{\AVLAM}{\overline{V}_{\mathrm{LAM},\theta_0}}

\makeatletter
\def\@fnsymbol#1{\ensuremath{\ifcase#1\or \dagger\or \ddagger\or
   \mathsection\or \mathparagraph\or \|\or **\or \dagger\dagger
   \or \ddagger\ddagger \else\@ctrerr\fi}}
    \makeatother

\title{Axiomatizing Local Asymptotic Minimax Risk}
\author{\makebox[.25\linewidth]{Ricky Li\thanks{This version: September 17, 2026. MIT Department of Economics; email: \protect\texttt{rickyli@mit.edu}. \\~\\
We are especially grateful to Isaiah Andrews, Anna Mikusheva, and Drew Fudenberg for their insightful feedback. We also thank participants of the MIT Theory and Econometrics lunch seminars for helpful comments.
}} \\
{MIT}}
\date{}

\begin{document}

\maketitle

\begin{abstract}
Local asymptotic minimax (LAM) risk is a foundational efficiency criterion in statistics and econometrics. The literature uses two definitions of LAM risk: one which appears in classical lower bounds and another which appears in arguments establishing attainment of those bounds. Conventional efficiency arguments are consistent with \emph{any} estimator-dependent weighted average of the two, and consequently do not reveal which of these \emph{generalized $\alpha$-LAM} risk indices describes researchers' actual preferences. We take a decision-theoretic approach to systematically resolve this ambiguity. We axiomatically characterize the set of preferences with a generalized $\alpha$-LAM representation, and we document that such preferences may violate basic rationality requirements such as Monotonicity. Motivated by this, we axiomatically characterize the subset with constant-weight representations. Within this class, a \emph{Sample Uncertainty Aversion} axiom uniquely selects attainment LAM. We argue that Sample Uncertainty Aversion is normatively appealing, and we therefore recommend attainment LAM as the functional form for LAM risk.
\end{abstract} 

\clearpage

\section{Introduction}
\label{sec:intro}
The \emph{local asymptotic minimax} (LAM) risk criterion is a foundational notion of efficiency in parametric and semiparametric statistics \citep{hajek1972local, bickel1993efficient, van2000asymptotic}. Econometricians use it to evaluate efficient estimation in structural models \citep{hirano2003asymptotic}, efficient estimation of non-smooth functionals \citep{song2014local,fang2014optimal}, efficient treatment assignment rules \citep{hirano2009asymptotics}, and efficient policies in sequential decision problems \citep{hirano2023asymptotic,adusumilli2025risk}. Despite this central role, existing motivations for adopting LAM are heuristic. A standard motivation is that localizing the parameter at the rate at which sampling uncertainty vanishes ensures that the statistical difficulty of the estimation problem remains non-negligible in large samples, thereby yielding more useful comparisons than fixed-parameter asymptotics.\footnote{See Sections 7.3 and 8.3 of \cite{van2000asymptotic} and Section 2.5.2 of \cite{hirano2020asymptotic}. LAM also reveals superefficient procedures' poor performance near points of superefficiency (e.g., Hodges' estimator in \citealt{lehmann1998theory} pg. 442 and the ``oracle property'' in \citealt{leeb2008sparse}).} Although this intuitively motivates a \emph{local asymptotic} perspective towards optimality, it does not uniquely recommend the LAM criterion, nor does it specify a functional form for LAM risk. This note aims to fill these gaps using a decision-theoretic approach.

A key starting point for our analysis is that, while finite-sample optimality criteria evaluate estimators, asymptotic optimality criteria evaluate \emph{estimator sequences} indexed by sample size. Consequently, such criteria necessarily take a stance on how to trade off risk across sample sizes, even though the researcher's actual estimation problem features a fixed sample size. This is a convenient fiction for asymptotic statistics, but it poses a challenge for understanding the LAM criterion, since existing motivations do not fully specify how to aggregate risk across sample sizes.\footnote{They also do not explain taking the \emph{worst-case} risk over local parameters for a fixed sample size. A naive motivation is to view LAM as a limiting analog of finite-sample minimaxity and point to previous decision-theoretic work studying the latter \citep{stoye2012new,gilboa1989maxmin}. However, ``[LAM] optimality$\ldots$ is conceptually different from finite sample minmaxity, and neither$\ldots$ implies the other'' (\citealt{hirano2020asymptotic} pg. 311). Our main axiomatic results make this ``conceptual difference'' precise.} Indeed, the literature uses two functional forms for LAM risk: classical lower bounds \citep{hajek1972local,le1979theorem} apply to the \emph{limiting best-case} local risk over sample size subsequences ($\liminf$), while subsequent arguments which establish attainment of these bounds \citep{van2002semiparametric,song2014local} use the corresponding \emph{limiting worst-case} ($\limsup$). Observation \ref{obs:gen-alpha-LAM-iff-obv-cons} shows that standard efficiency arguments are \emph{transparently consistent} with any estimator-dependent weighted average of these two endpoints. It is therefore unclear which functional form for LAM risk from this class of \emph{generalized $\alpha$-LAM} risk indices (if any) describes researchers' actual preferences.

Proposition~\ref{prop:stable-local-risks} and Corollary~\ref{cor:regular-stable-losses} provide one reason why this issue may not have been formally addressed: all generalized $\alpha$-LAM risk indices coincide for estimator sequences with limiting local estimation error laws and asymptotically uniformly integrable local losses, which are standard regularity conditions. Consequently, observing the researcher's rankings of such estimator sequences cannot distinguish among this class of functional forms. Nevertheless, we exhibit a simple estimator sequence (Example \ref{ex:gaussian}) which does not satisfy these conditions, such that the difference among generalized $\alpha$-LAM risk indices is maximally stark. Given that a commonly stated appeal of the LAM criterion is that it evaluates \emph{any} estimator sequence,\footnote{See, e.g., \cite{van2000asymptotic} Section 8.7 and \cite{van2002semiparametric} pg. 348.} our research question remains relevant: which index describes researchers' preferences?

We use the toolkit of microeconomic decision theory to systematically answer this question. For each class of LAM risk indices of interest, we offer a set of conditions (called \emph{axioms}) that are necessary and sufficient for the researcher's rankings to be consistent with some index in the class. Our axioms specify how the researcher aggregates risk across sample sizes and local parameters. Since the researcher ranks \emph{sequences} indexed by sample size, we interpret the decision environment as one in which the researcher forms their rankings \emph{before} learning the sample size of their dataset, and thus faces \emph{ex-ante uncertainty} about the sample size (as well as the local parameter). This is the decision-theoretic interpretation of the convenient fiction described above that is inherent in any asymptotic optimality criterion. 

With this interpretation in mind, we briefly summarize and interpret our main results. As an initial benchmark, Theorem~\ref{thm-main-result} axiomatically characterizes the class of generalized $\alpha$-LAM preferences, thereby providing a complete understanding of the behavioral content of merely being (transparently) consistent with standard efficiency arguments. With this result in hand, we document that generalized $\alpha$-LAM preferences are sufficiently broad to allow for violations of basic rationality requirements like \emph{Monotonicity}. Our examples suggest that these violations are due to unrestricted dependence of the weights on the procedure being evaluated.

To address this limitation, we focus on the subset of \emph{$\alpha$-LAM} preferences represented by a \emph{fixed constant}-weighted average of classical and attainment LAM. Theorem~\ref{thm:fixed-alpha-LAM} restores the basic rationality restrictions, in addition to other axioms, to axiomatically characterize $\alpha$-LAM. Within this subclass, axioms governing the researcher's attitudes towards \emph{sample size uncertainty} play a key role: \emph{Sample Uncertainty Aversion} (which requires a \emph{preference for hedging}) uniquely selects attainment LAM, while \emph{Sample Uncertainty Affinity} (which requires an \emph{aversion to hedging}) uniquely selects classical LAM (Corollary \ref{cor:CLAM-and-ALAM}). This lays bare the difference between risk indices in terms of a concrete and intuitive consequence for researcher behavior. 

Finally, we argue that Sample Uncertainty Aversion is normatively appealing. Consequently, out of the candidate LAM risk representations we study, we recommend attainment LAM. Our axiomatic characterization of attainment LAM therefore complements and refines the heuristic motivations for the LAM criterion described above.

\paragraph{Related literature and key contributions.} 
Providing decision-theoretic foundations for notions of statistical optimality has a long tradition \citep{wald1950statistical,hurwicz1951optimality,savage1954foundations}. More recent work has studied finite-sample statistical optimality criteria from an axiomatic perspective. \cite{stoye2011statistical} draws upon previous microeconomic decision theory work to characterize Bayes, minimaxity, minimax regret, and admissibility, as well as the \emph{Hurwicz} or \emph{$\alpha$-maxmin} criterion which mixes between the best- and worst-case according to a fixed weight, while \cite{stoye2012new} studies these and other finite-sample criteria in a decision environment consisting of risk functions. \cite{andrews2026misspecification} provide axioms which characterize various classes of misspecification-averse criteria used in the literature; we adapt some of their technical tools to elicit and aggregate conditional preferences. In contrast to the finite-sample axiomatic analysis of these papers, our focus is on \emph{local asymptotic} optimality criteria. Rather than studying an established class of representation functional forms, we begin with the largest class that is in a precise sense transparently consistent with standard efficiency arguments in econometrics, and we use our axiomatic analysis to make a principled recommendation for a functional form for LAM risk.

Many of our axioms and results are adapted from or build upon related axioms and representation results in the microeconomic decision theory literature. Key building blocks include the Consistency and Caution axioms \citep{gilboa2010objective,cerreia2024making}, maxmin expected utility \citep{gilboa1989maxmin}, and the complete patience representation studied in \cite{marinacci1998axiomatic} Theorem 7. The axiomatic results in this note contribute to the literature which studies $\alpha$-maxmin and generalized $\alpha$-maxmin representations \citep{arrow1972optimality,ghirardato2004differentiating,cerreia2011rational}. In particular, \cite{chateauneuf2007choice} study a class of representations which mixes between expected utility, the best-case, and the worst-case. The special case of their subclass which puts zero weight on expected utility (their \emph{Hurwicz capacity}) is analogous to our $\alpha$-LAM preferences restricted to a particular subdomain of our choice environment, although they do not axiomatize this case. \cite{drugeon2023alpha} study a class of asymptotic $\alpha$-maxmin criteria in a choice environment of intertemporal utility streams. Their analysis takes as primitive two preference relations, which they interpret as the preferences of society for the near and far future.

Relative to these papers, we view our key contributions as follows. First, we define and axiomatically characterize a collection of asymptotic Hurwicz-type criteria whose functional forms are tightly connected to standard efficiency arguments in statistics and econometrics, and we use our axioms to provide a unique recommendation. Given that asymptotic efficiency is the benchmark notion of optimality among practitioners in these fields, we view this as a compelling use case of the microeconomic decision theory toolkit. Second, our axioms make transparent the role that attitudes towards sample size risk aggregation play in adjudicating among different candidate LAM risk indices. On a technical level, we streamline the axiomatic analysis by taking only one preference as primitive---the researcher's actual ranking of procedures---and elicit other preferences of interest from the primitive preference. The multidimensional nature of uncertainty in our setting (in which outcomes jointly depend on the sample size and local parameter) requires specialized techniques which we hope are useful for future work in decision-making under multidimensional uncertainty. 

\paragraph{Roadmap.}
The rest of this note proceeds as follows. Section~\ref{sec:setup} sets up the standard parametric point estimation environment and shows that conventional efficiency arguments do not pin down a functional form for LAM risk. Section~\ref{sec:axioms} states and interprets our axioms, characterizes generalized $\alpha$-LAM and $\alpha$-LAM preferences, and then characterizes classical and attainment LAM. Appendix~\ref{app:LAN} provides a brief exposition of local asymptotic normality for non-familiar readers. Appendix~\ref{app-sec-2-proofs} proves the results in Section~\ref{sec:setup}. Appendix~\ref{app:AA} develops axiomatic foundations for the main text's decision environment in a standard Anscombe--Aumann setup. Appendix~\ref{app:util-act-lemmas} proves the axiomatic representation results in Section~\ref{sec:axioms}.

\section{Functional forms for LAM risk}
\label{sec:setup}

\paragraph{Statistical decision environment.} We adapt the standard parametric point estimation setup of \cite{van2000asymptotic} Chapter 8. Let $\Theta\subseteq \mathbb{R}^k$ be an open set of parameters, and let $\psi: \Theta \rightarrow \mathbb{R}^m$ be the estimand of interest. For each sample indexed $n\geq 1$, the researcher observes data $X_n \in \mathcal{X}_n$ drawn from a distribution in the model $\mathcal{P}_n=\{P_{\theta,n} : \theta \in \Theta\} \subseteq \Delta(\mathcal{X}_n)$,\footnote{Assume that $\Theta$ and each $\mathcal{X}_n$ are Polish spaces endowed with their Borel $\sigma$-algebras. Endow $\mathbb{N}$ with the power set $\sigma$-algebra, and endow product spaces with their product $\sigma$-algebras.} chooses an estimate $a \in \mathbb{R}^m$, and incurs loss $\ell_n(a-\psi(\theta))$. An \emph{estimator} for the sample $n$ is a measurable function $\delta_n : \mathcal{X}_n \times [0,1] \to \mathbb{R}^m$.\footnote{To allow for estimators which randomize conditional on the sample, we let $\delta(X_n,\cdot)$ additionally depend on the realization of an auxillary independent $\text{Unif}[0,1]$ random variable.} An \emph{estimator sequence} is a sequence of estimators $\delta = (\delta_n)_{n \geq 1}$. Let $\mathcal{D}$ be the set of estimator sequences. We begin by stating three standard assumptions on the statistical primitives which collect regularity conditions on the sequence of loss functions, the sequence of models, and the estimand of interest.

\begin{assumption}\label{as:1-loss}
$\ell_n(e)=\ell(\sqrt{n}e)$ for some Borel measurable loss function $\ell: \mathbb{R}^m \rightarrow \mathbb{R}_+$ whose sublevel sets are convex and symmetric about the origin.
\end{assumption}

Assumption \ref{as:1-loss} ensures nontrivial comparisons between $\sqrt{n}$-consistent estimator sequences, which is a standard property in parametric settings. Throughout the main text, unless otherwise stated, fix a centering value $\theta_0 \in \Theta$. Let $\overset{\theta_0}{\rightsquigarrow}$ denote convergence in distribution under the sequence of laws $P_{\theta_0,n}$. Let $\overset{\theta_0,n}{\sim}$ denote distribution under $P_{\theta_0,n}$. Say that a sequence of random variables is $o_{P_{\theta_0,n}}(1)$ if it converges in probability to $0$ under the sequence of laws $P_{\theta_0,n}$.

\begin{assumption}[\citealt{van2000asymptotic} Definition 7.14]\label{as:1}
$(\mathcal{P}_n)_n$ is \emph{locally asymptotically normal (LAN)} at $\theta_0$: there exists an invertible matrix $I_{\theta_0} \in \mathbb{R}^{k\times k}$ and random vectors $\Delta_{\theta_0,n} \overset{\theta_0}{\rightsquigarrow} N(0,I_{\theta_0})$ such that, for each converging sequence $h_n \to h$,
\begin{equation}\label{eq:LAN}
    \log \frac{dP_{\theta_0+h_n/\sqrt{n},n}}{dP_{\theta_0,n}}=h'\Delta_{\theta_0,n}-\frac{1}{2}h'I_{\theta_0}h+o_{P_{\theta_0,n}}(1) \tag{LAN}
\end{equation}
\end{assumption}

To interpret Assumption \ref{as:1}, we consider the following sequence of \emph{local informational environments} at the fixed centering value $\theta_0 \in \Theta$. Suppose that the researcher conditions on the value $\theta_0$ and instead faces uncertainty about the value of the \emph{local parameter} $h\in H=\mathbb{R}^k$,\footnote{This interpretation is often motivated by the observation that ``asymptotically, [$\theta_0$] can be known with unlimited precision," whereas the ``true statistical difficulty is$\ldots$ determined by the nature of the measures'' local to $P_{\theta_0}$ (\citealt{van2000asymptotic} Section 7.3).} where for each $n\geq 1$, $h \in H$ parametrizes the $\sqrt{n}$-local value $\theta_0+h/\sqrt{n} \in \Theta$.\footnote{Here and throughout, whenever $\theta_0+h/\sqrt{n} \notin \Theta$, set it equal to an arbitrary value in $\Theta$.} We reinterpret the data $X_n \in \mathcal{X}_n$ as arising from the \emph{local experiment} $\mathcal{P}_{\theta_0,n,H}=\{P_{\theta_0+h/\sqrt{n},n}: h\in H\}$ which is informative about the unknown local parameter $h$.\footnote{Since we have fixed the centering value $\theta_0$, $P_{\theta_0+h/\sqrt{n},n}$ is fully determined up to the local parameter $h$.} Note that as $n$ increases, two forces are at play: the informativeness of the experiment $\{P_{\theta,n}: \theta \in \Theta\}$ changes, and each parameter $\theta_0+h/\sqrt{n}$ shrinks towards $\theta_0$. \eqref{eq:LAN} requires that these two forces balance each other out, such that the sequence of local experiments $\mathcal{P}_{\theta_0,n,H}$ converges in an informational sense to the \emph{limit experiment} $\mathcal{G}_{\theta_0,H}=\{N(h,I_{\theta_0}^{-1}): h\in H\}$. For readers who are less familiar with local asymptotics, Appendix \ref{app:LAN} provides a more detailed exposition of LAN.

A leading setting satisfying Assumption \ref{as:1} is i.i.d. sampling from a smooth parametric model (\citealt{van2000asymptotic} Chapter 7).\footnote{In this case, the \emph{Fisher information} of the model $\mathcal{P}_1$ at $\theta_0$ provides the matrix $I_{\theta_0}$.} With this case in mind, we henceforth refer to the index $n\geq 1$ as the \emph{sample size}, with the understanding that our results nevertheless apply to more general statistical environments which satisfy Assumption \ref{as:1}. We conclude this subsection with a smoothness condition on the estimand of interest.

\begin{assumption}\label{as:diff-estimand}
$\psi$ is differentiable at $\theta_0$ with derivative $\dot{\psi}_{\theta_0}: \mathbb{R}^k \rightarrow \mathbb{R}^m$.  
\end{assumption}

\paragraph{Local risk function sequences.} To interpret Assumption \ref{as:diff-estimand} and future results, we introduce a unifying expositional framework for the note. We consider a sequence of \emph{local decision problems} at the fixed centering value $\theta_0$ in which the researcher conditions on the value $\theta_0$ and, for each sample size $n\geq 1$, observes data $X_n$ from the local experiment $\mathcal{P}_{\theta_0,n,H}$ about the unknown local parameter $h$, chooses an estimate $a_{\text{local}} \in \mathbb{R}^m$, and incurs loss $\ell(a_{\text{local}}-\dot{\psi}_{\theta_0,n}(h))$, where $\dot{\psi}_{\theta_0,n}: H \rightarrow \mathbb{R}^m$ is the \emph{local estimand of interest} at $\theta_0$ defined as $\dot{\psi}_{\theta_0,n}(h)=\sqrt{n}\left(\psi\left(\theta_0+h/\sqrt{n}\right)-\psi(\theta_0) \right)$. Note that $\dot{\psi}_{\theta_0,n}$ is a finite-step size approximation to the derivative $\dot{\psi}_{\theta_0}$. In this sense, Assumption \ref{as:diff-estimand} ensures the existence of a \emph{limiting} local estimand of interest.

It is convenient to summarize the performance of an estimator sequence $\delta$ in the sequence of local decision problems described above. To this end, for an estimator $\delta_n$, define its \emph{local version at $\theta_0$} to be the function $\delta_{\theta_0,n}: \mathcal{X}_n \rightarrow \Delta(\mathbb{R}^m)$ defined as: $\delta_{\theta_0,n}(X_n)=\sqrt{n}(\delta_n(X_n)-\psi(\theta_0))$. We allow $\delta_{\theta_0,n}$ to depend on $\theta_0$ since the researcher has already conditioned on the value of $\theta_0$. Loosely speaking, $\delta_{\theta_0,n}$ translates $\delta_n=\psi(\theta_0)+\delta_{\theta_0,n}/\sqrt{n}$ ``into local coordinates.'' Let $P_{\theta_0,n,h}=P_{\theta_0+h/\sqrt{n},n}$. Let $E_{\theta_0,n,h}$ denote expectation under $P_{\theta_0,n,h}$,\footnote{More precisely for randomized estimators, under $P_{\theta_0,n,h} \times \mathcal{L}(\text{Unif}[0,1])$, where $\mathcal{L}(\text{Unif}[0,1])$ is the law of an auxillary $\text{Unif}[0,1]$ random variable.} and let $\overset{\theta_0,n,h}{\sim}$ denote distribution under $P_{\theta_0,n,h}$. Let $\overline{\mathbb{R}}_+=[0,+\infty]$.

\begin{definition}\label{defn:lrfs}
The \emph{local risk function sequence of $\delta$ at $\theta_0$} is the function $R_{\theta_0,\delta}: \mathbb{N} \times H \rightarrow \overline{\mathbb{R}}_+$ defined as:
\[
R_{\theta_0,\delta}(n,h)=E_{\theta_0,n,h}\big[\ell(\delta_{\theta_0,n}(X_n)-\dot{\psi}_{\theta_0,n}(h)) \big]
\]
\end{definition}

In words, $R_{\theta_0,\delta}(n,h)$ is the expected loss (\emph{risk}) of using the local estimator $\delta_{\theta_0,n}$ to estimate the local estimand $\dot{\psi}_{\theta_0,n}(h)$ when the data is drawn according to $P_{\theta_0,n,h}$.\footnote{By substituting definitions, we may write $R_{\theta_0,\delta}(n,h)=E_{\theta_0,n,h}[\ell(\sqrt{n}(\delta_n(X_n)-\psi(\theta_0+h/\sqrt{n})))]$. This version of the expression appears in standard references (\citealt{van2000asymptotic} Ch. 8).} In this sense, $R_{\theta_0,\delta}$ encodes the performance of $\delta$ in the sequence of local decision problems at $\theta_0$, for each sample size $n$ and local parameter $h$. Say that a local risk function sequence \emph{$R$ is induced at $\theta_0$ by $\delta$} if $R=R_{\theta_0,\delta}$.

\begin{example}
\label{ex:gaussian}
\normalfont We consider as a running example the problem of estimating the mean of a one-dimensional Gaussian shift model under squared error loss with i.i.d. data. In our notation, let $\Theta=\R$, $\psi(\theta)=\theta$, $\mathcal{X}_n=\mathbb{R}$ for each $n\geq 1$, $P_{\theta,n}=N(\theta,1)^{\otimes n}$, and $\ell_n(x) = nx^2$. Assumption \ref{as:1-loss} follows since squared-error loss $\ell(x)=x^2$ satisfies the stated conditions. Fix any centering value $\theta_0 \in \mathbb{R}$, and note that the Fisher information at $\theta_0$ is $I_{\theta_0}=1$. A straightforward computation yields:
\[
\log \frac{dP_{\theta_0+h_n/\sqrt{n},n}}{dP_{\theta_0,n}}(X_n)=h_n \sqrt{n}(\Bar{X}_n-\theta_0)-\frac{h_n^2}{2}
\]
where $\Bar{X}_n$ is the sample mean and $\sqrt{n}(\Bar{X}_n-\theta_0) \overset{\theta_0,n}{\sim} N(0,1)$. Since the sequence of random variables $(h_n-h)\sqrt{n}(\Bar X_n-\theta_0)+(h^2-h_n^2)/2$ is $o_{P_{\theta_0,n}}(1)$, Assumption \ref{as:1} follows. Assumption \ref{as:diff-estimand} holds with $\dot{\psi}_{\theta_0}(h)=h$. 
The sample mean induces the constant function $R_{\theta_0,\Bar{X}}=1$.
\end{example}

\paragraph{Classical LAM risk.} The literature contains two key functional forms of LAM risk, both of which evaluate estimator sequences by their local risk function sequences. We refer to the first one as \emph{classical} LAM risk since it appears in the classical LAM lower bounds developed by \cite{hajek1972local} and \cite{le1979theorem}. Let $\cI$ be the set of nonempty finite subsets of $H$.

\begin{definition}\label{def:LAM-estimator}
The \emph{classical LAM risk} of estimator sequence $\delta$ at $\theta_0$ is:
\begin{equation}\label{eq:LAM-risk}
    \CVLAM(\delta) := \sup_{I \in \cI} \liminf_{n \to \infty} \sup_{h \in I} R_{\theta_0,\delta}(n,h) \tag{CLAM}
\end{equation}
\end{definition}

In words: for each \emph{local neighborhood} $I$ and sample size $n$, $\delta_n$ is evaluated by the \emph{worst-case} risk in the $n$-th local decision problem at $\theta_0$ over local parameters $h\in I$, the resulting sequence of worst-case risks is aggregated by its \emph{limiting best-case} across sample size subsequences, and finally the \emph{worst-case} is taken over all local neighborhoods $I$. The \emph{local asymptotic minimax theorem} (\citealt{hajek1972local,le1979theorem}; \citealt{van2000asymptotic} Theorem~8.11) lower bounds the classical LAM risk of \emph{every} estimator sequence $\delta \in \mathcal{D}$ at $\theta_0$:
\begin{equation}
\label{eq:LAM thm}
    \CVLAM(\delta) \geq B_{\mathrm{LAM},\theta_0}:=\int \ell \, dN(0, \dot{\psi}_{\theta_0}I_{\theta_0}^{-1}\dot{\psi}_{\theta_0}') \tag{LAM bound}
\end{equation}
Note that $B_{\mathrm{LAM},\theta_0}$ (or \emph{LAM bound}) is the optimal minimax risk for estimating $\dot{\psi}_{\theta_0}(h)$ with data from the limit experiment $\mathcal{G}_{\theta_0,H}$. In this sense, \eqref{eq:LAM thm} establishes that the sequence of local decision problems at $\theta_0$ is \emph{weakly more difficult than} this limit decision problem at $\theta_0$.

\paragraph{Attainment LAM risk.} 
Standard arguments in statistics and econometrics show that a proposed estimator sequence attains the LAM bound and conclude that it is efficient \citep{chamberlain1992efficiency,hirano2003asymptotic,graham2011efficiency,adusumilli2025risk}. Many of these arguments (e.g., pg. 348 of \citealt{van2002semiparametric} and Theorem~4 of \citealt{song2014local}) use an alternative risk index.

\begin{definition}
The \emph{attainment LAM risk} of estimator sequence $\delta$ at $\theta_0$ is:
\begin{equation}
\label{eq:LAM-limsup-estimator}
    \AVLAM(\delta):=\sup_{I \in \cI} \limsup_{n \to \infty} \sup_{h \in I} R_{\theta_0,\delta}(n,h) \tag{ALAM}
\end{equation}
\end{definition}

The interpretation of attainment LAM risk is exactly analogous with that of classical LAM risk, except that each sequence of worst-case risks is aggregated by its \emph{limiting worst-case} across sample size subsequences. Note that we may also write
\[
\AVLAM(\delta)=\sup_{h\in H} \limsup_{n\to\infty} R_{\theta_0,\delta}(n,h)
\]
so attainment LAM risk is also the \emph{worst-case limiting} risk taken over all local parameters $h\in H$ and all sample size subsequences. Following the above references, say that an estimator sequence $\delta$ \emph{attains the LAM bound at $\theta_0$} if $\AVLAM(\delta)\leq B_{\mathrm{LAM},\theta_0}$.

\addtocounter{example}{-1}
\begin{example}[Continued]
\label{ex:CLAM-ALAM-bounds}
\normalfont Fix any centering value $\theta_0\in \mathbb{R}$. By our previous computations, the LAM bound is $B_{\mathrm{LAM},\theta_0}=\int x^2 \, dN(0,1) = 1$. Since the local risk function sequence $R_{\theta_0,\Bar{X}}=1$ is constant, both risk indices coincide: $\AVLAM(\Bar{X})=\CVLAM(\Bar{X})=1$. Hence, the sample mean attains the LAM bound at $\theta_0$.\footnote{More generally, under mild conditions the \emph{maximum likelihood estimator} attains the LAM bound at each $\theta_0\in \Theta$. See, e.g., \cite{hirano2020asymptotic} page 311.}
\end{example}

\paragraph{Researcher preferences.} The appearance of classical and attainment LAM risk in the literature raises a natural question: when a researcher professes to apply the LAM criterion, what risk index describes their actual preferences? Although the motivation described in Section \ref{sec:intro} of normalizing the difficulty of the estimation problem to be the same order as sampling-based uncertainty suggests a local asymptotic approach in a broad sense, it is too coarse to distinguish between these two functional forms for LAM risk. Indeed, in this section we demonstrate that conventional efficiency arguments are consistent with a \emph{much larger class} of LAM-based risk indices.

To make this claim precise, recall that standard efficiency arguments conclude that an estimator sequence $\delta^*$ is optimal whenever it attains the LAM bound. Hence, say that a risk index $V_{\theta_0}: \mathcal{D} \rightarrow \overline{\mathbb{R}}_+$ is \emph{consistent} with such arguments at $\theta_0$ if, for each $\delta^* \in \mathcal{D}$,
\begin{equation}
\label{eq:cons}
    \delta^* \text{ attains the LAM bound at } \theta_0 \implies V_{\theta_0}(\delta)\geq V_{\theta_0}(\delta^*) \quad \forall \delta\in\mathcal{D} \tag{Cons.}
\end{equation}
The set of consistent risk indices is the largest set of functional forms for risk which does not contradict the property that attaining the LAM bound is a sufficient condition for global optimality. However, \eqref{eq:cons} by itself is not an appealing desideratum, since it places no restrictions on how $V_{\theta_0}$ ranks estimator sequences which do not attain the LAM bound. This means that the revealed preference exercise embodied by \eqref{eq:cons} has little bite, allowing for, e.g., the risk index which assigns $1$ to any $\delta^*$ which attains the bound and $+\infty$ to anything else. We therefore focus on the following subset of \emph{transparently consistent} risk indices.

\begin{definition}
\label{defn:obv-cons}
A risk index $V_{\theta_0}: \mathcal{D} \rightarrow \overline{\mathbb{R}}_+$ is \emph{transparently consistent} with standard efficiency arguments at $\theta_0$ if
\begin{equation}
\label{eq:obv-cons}
    \CVLAM(\delta)\leq V_{\theta_0}(\delta)\leq \AVLAM(\delta) \quad \forall \delta \in \mathcal{D} \tag{Trans. Cons.}
\end{equation}
\end{definition}

To justify this terminology, note that if $V_{\theta_0}$ satisfies \eqref{eq:obv-cons}, then verifying consistency is immediate, since $\delta^*$ attaining the LAM bound at $\theta_0$ implies
\[
\CVLAM(\delta) \geq \AVLAM(\delta^*) \quad\forall \delta\in \mathcal{D}
\]
by \eqref{eq:LAM thm}, which when combined with \eqref{eq:obv-cons} immediately yields \eqref{eq:cons}. Let $\mathcal{V}_{\theta_0}$ be the set of transparently consistent risk indices at $\theta_0$. In sum, observing that the researcher weakly prefers an estimator sequence which attains the LAM bound to every alternative does not actually specify a functional form for the LAM criterion, since such efficiency arguments are \emph{transparently consistent} with every risk index $V \in \mathcal{V}_{\theta_0}$. Our first results explain why this ambiguity may not have been formally addressed: on the choice domain of estimator sequences whose local risk function sequences converge pointwise, every transparently consistent risk index coincides.

\begin{proposition}
\label{prop:stable-local-risks}
Fix any estimator sequence $\delta$. Suppose there exists a \emph{pointwise limiting local risk function} $r_{\theta_0,\delta}: H\rightarrow \overline{\mathbb{R}}_+$ such that
\begin{equation}
\label{eq:stable-local-risks}
R_{\theta_0,\delta}(n,h)
\to r_{\theta_0,\delta}(h)
\qquad\text{for each }h\in H 
\end{equation}
Then $V_{\theta_0}(\delta)=V_{\theta_0}'(\delta)$ for all $V_{\theta_0},V_{\theta_0}'\in \mathcal{V}_{\theta_0}$.
\end{proposition}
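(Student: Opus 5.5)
Since every $V_{\theta_0}\in\mathcal{V}_{\theta_0}$ is sandwiched by \eqref{eq:obv-cons} between $\CVLAM(\delta)$ and $\AVLAM(\delta)$, it suffices to show that under \eqref{eq:stable-local-risks} these two endpoints coincide. Then any two transparently consistent indices are squeezed to the same value, and $V_{\theta_0}(\delta)=V_{\theta_0}'(\delta)$ follows. Because $\liminf\leq\limsup$ termwise, we always have $\CVLAM(\delta)\leq\AVLAM(\delta)$. So the whole task is to prove the reverse inequality $\AVLAM(\delta)\leq\CVLAM(\delta)$; I will show both equal $\sup_{h\in H} r_{\theta_0,\delta}(h)$.

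\textbf{Step 1: the inner limit over a finite neighborhood.} Fix a finite set $I\in\cI$. For each of the finitely many $h\in I$, the sequence $R_{\theta_0,\delta}(n,h)$ converges in $\overline{\mathbb{R}}_+$ to $r_{\theta_0,\delta}(h)$. The map $(x_h)_{h\in I}\mapsto\max_{h\in I}x_h$ is continuous on $\overline{\mathbb{R}}_+^{I}$ with the order topology, including at $+\infty$. Hence
\[
\lim_{n\to\infty}\sup_{h\in I}R_{\theta_0,\delta}(n,h)=\max_{h\in I}r_{\theta_0,\delta}(h),
\]
and in particular the $\liminf$ and $\limsup$ of $\sup_{h\in I}R_{\theta_0,\delta}(n,h)$ both equal this maximum.

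\textbf{Step 2: the outer supremum.} Taking $\sup_{I\in\cI}$ in \eqref{eq:LAM-risk} and \eqref{eq:LAM-limsup-estimator} now gives
\[
\CVLAM(\delta)=\AVLAM(\delta)=\sup_{I\in\cI}\max_{h\in I}r_{\theta_0,\delta}(h)=\sup_{h\in H}r_{\theta_0,\delta}(h),
\]
where the last equality uses that singletons belong to $\cI$. Combining with \eqref{eq:obv-cons} closes the proof.

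The only point needing care, and the closest thing to an obstacle, is handling infinite values in Step 1. If some $r_{\theta_0,\delta}(h)=+\infty$ for $h\in I$, then $R_{\theta_0,\delta}(n,h)\to+\infty$ forces the finite maximum to diverge as well, so the argument goes through. I would check this directly rather than rely on an appeal to continuity. The key structural fact is that $I$ is \emph{finite}: without it, pointwise convergence would not commute with the supremum.
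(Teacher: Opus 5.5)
Your proposal is correct and follows the paper's proof essentially step for step. For each finite $I$, the worst-case local risk converges to $\max_{h\in I} r_{\theta_0,\delta}(h)$; the paper handles this by splitting into the finite and infinite cases you flag, using singletons to get $\sup_{h\in H} r_{\theta_0,\delta}(h)$ for both classical and attainment LAM risk, after which the sandwich \eqref{eq:obv-cons} forces every transparently consistent index to that common value.
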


Two transparently consistent risk indices can differ only if there exists some local parameter where local risk fails to converge across sample sizes. Our next result shows that estimator sequences whose local estimation errors possess limit laws and whose local losses are asymptotically uniformly integrable satisfy the pointwise convergence of local risk functions in \eqref{eq:stable-local-risks}. Define the \emph{local estimation error} of estimator sequence $\delta$ at sample size $n\geq 1$ and local parameter $h\in H$ to be the random vector:
\[
Z_{\theta_0,\delta,n}(h):=\delta_{\theta_0,n}(X_n)-\dot{\psi}_{\theta_0,n}(h)
\]
Recall that this is the random estimation error which arises from using the local version of $\delta_n$ at $\theta_0$ to estimate the local estimand at $\theta_0$.

\begin{corollary}
\label{cor:regular-stable-losses}
Fix any estimator sequence $\delta$. Suppose that, for each $h\in H$, there exists a random vector $Z_{\theta_0,\delta}(h)$ such that $Z_{\theta_0,\delta,n}(h) \overset{\theta_0,n,h}{\rightsquigarrow} Z_{\theta_0,\delta}(h)$, $\ell$ is a.s. continuous under the law of $Z_{\theta_0,\delta}(h)$, and the sequence of local losses $\{\ell(Z_{\theta_0,\delta,n}(h))\}_{n\geq 1}$ under the sequence of laws $P_{\theta_0,n,h}$ is \emph{asymptotically uniformly integrable}:
\begin{equation}
\label{eq:local-loss-AUI}
\lim_{M\to\infty}\limsup_{n\to\infty}
E_{\theta_0,n,h}\!\big[
\ell\!\left(Z_{\theta_0,\delta,n}(h)\right)
\mathbf 1\!\left\{
\ell\!\left(Z_{\theta_0,\delta,n}(h)\right)>M
\right\}
\big]
=0
\end{equation}
Then $V_{\theta_0}(\delta)=V_{\theta_0}'(\delta)$ for all $V_{\theta_0},V_{\theta_0}'\in \mathcal{V}_{\theta_0}$.
\end{corollary}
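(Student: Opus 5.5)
The plan is to reduce the statement to Proposition~\ref{prop:stable-local-risks}: we verify that the hypotheses force the pointwise convergence in \eqref{eq:stable-local-risks}, with $r_{\theta_0,\delta}(h)=E[\ell(Z_{\theta_0,\delta}(h))]$, and then invoke that proposition. Since $R_{\theta_0,\delta}(n,h)=E_{\theta_0,n,h}[\ell(Z_{\theta_0,\delta,n}(h))]$ by Definition~\ref{defn:lrfs}, everything reduces to convergence of means for a fixed $h\in H$. This is a standard weak-convergence-plus-uniform-integrability argument (as in \citealt{van2000asymptotic} Theorem 2.20).

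\textbf{Step 1: weak convergence of the losses.} Fix $h$. Since $\ell$ is continuous almost surely under the law of $Z_{\theta_0,\delta}(h)$, the continuous mapping theorem gives
\[
\ell(Z_{\theta_0,\delta,n}(h)) \rightsquigarrow \ell(Z_{\theta_0,\delta}(h))
\]
under $P_{\theta_0,n,h}$, where $\ell$ is merely Borel. For each truncation level $M$, the map $x\mapsto \min(x,M)$ is bounded and continuous on $\mathbb{R}_+$, so by the portmanteau theorem
\[
E_{\theta_0,n,h}[\min(\ell(Z_{\theta_0,\delta,n}(h)),M)] \to E[\min(\ell(Z_{\theta_0,\delta}(h)),M)].
\]

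\textbf{Step 2: controlling the tails.} Write $Y_n=\ell(Z_{\theta_0,\delta,n}(h))$ and $Y=\ell(Z_{\theta_0,\delta}(h))$. Since $0\le Y_n-\min(Y_n,M)\le Y_n\mathbf 1\{Y_n>M\}$, condition \eqref{eq:local-loss-AUI} gives
\[
\lim_{M\to\infty}\limsup_{n\to\infty}\bigl|E_{\theta_0,n,h}[Y_n]-E_{\theta_0,n,h}[\min(Y_n,M)]\bigr|=0 .
\]
We also need $E[Y]<\infty$. This follows from Fatou's lemma for weak convergence: $E[\min(Y,M)]=\lim_n E[\min(Y_n,M)]\le \liminf_n E[Y_n]$. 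The right-hand side is finite because asymptotic uniform integrability gives $\limsup_n E[Y_n]\le M_0+1$ for some large $M_0$. Monotone convergence then gives $E[\min(Y,M)]\uparrow E[Y]<\infty$.

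\textbf{Step 3: the $\varepsilon/3$ argument.} Combining the previous steps,
\[
\limsup_n\bigl|E[Y_n]-E[Y]\bigr|\le \limsup_n\bigl|E[Y_n]-E[\min(Y_n,M)]\bigr|+\bigl|E[\min(Y,M)]-E[Y]\bigr| .
\]
The middle term $\bigl|E[\min(Y_n,M)]-E[\min(Y,M)]\bigr|$ vanishes as $n\to\infty$ by Step 1. Letting $M\to\infty$, both remaining terms vanish. Hence $R_{\theta_0,\delta}(n,h)\to r_{\theta_0,\delta}(h)$ for every $h\in H$, and Proposition~\ref{prop:stable-local-risks} yields the conclusion.

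The only delicate point is the finiteness of $E[Y]$ together with the handling of the $\limsup$ in \eqref{eq:local-loss-AUI}. Finiteness is not strictly needed for the conclusion, since the proposition allows $\overline{\mathbb{R}}_+$-valued limits, but it is what makes the subtraction in Step 3 legitimate. I expect Step 2 to be the only step that requires care; the rest is routine.
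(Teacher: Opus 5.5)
Your proposal is correct and follows the paper's proof: the paper likewise shows $R_{\theta_0,\delta}(n,h)\to E[\ell(Z_{\theta_0,\delta}(h))]$ for each $h$ by citing Theorem~2.20 of \cite{van2000asymptotic}, and then applies Proposition~\ref{prop:stable-local-risks}. The only difference is that you prove the needed direction of that theorem inline (continuous mapping, truncation at $M$, and asymptotic uniform integrability), and your argument is sound, including the check that $E[\ell(Z_{\theta_0,\delta}(h))]<\infty$.
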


Definition \ref{defn:obv-cons}, Proposition~\ref{prop:stable-local-risks} and Corollary~\ref{cor:regular-stable-losses} make precise the issue at the heart of this note: conventional efficiency arguments in econometrics (and more generally, researcher rankings on a class of procedures satisfying standard regularity conditions) do not distinguish among a class of risk indices bounded by classical and attainment LAM risk. This does not imply, however, that the distinction is not meaningful. While the regularity conditions in Corollary~\ref{cor:regular-stable-losses} are standard, simple examples violate them and make the choice of $V_{\theta_0}\in \mathcal{V}_{\theta_0}$ consequential. Furthermore, a stated appeal of the LAM criterion (as opposed to, e.g., convolution arguments which restrict attention to regular estimator sequences\footnote{See, e.g., \cite{van2000asymptotic} Theorem 8.8.}) is that it provides a \emph{universal} theory of statistical optimality which does not rely on restrictions to special classes of estimators, such as the ones studied in Proposition~\ref{prop:stable-local-risks} and Corollary~\ref{cor:regular-stable-losses}. It is therefore important for the LAM criterion to evaluate such simple examples. Indeed, the following example shows that the distinction can be maximally stark.

\addtocounter{example}{-1}

\begin{example}[Continued]
\label{ex:powers-of-10}
\normalfont For ease of exposition, fix the centering value $\theta_0=0$. For any arbitrarily large but finite constant $C>0$, consider the estimator sequence 
\[
\delta_n(X_n)=\begin{cases} 
      \Bar{X}_n & n=10^k \text{ for some } k\in \mathbb{N} \\
      \Bar{X}_n+C/\sqrt{n} & \text{ else} 
\end{cases}
\]
which outputs the sample mean when the sample size is a power of 10 and otherwise outputs the sample mean with constant local bias $C$. By previous computations, 
\[
R_{\theta_0,\delta}(n,h)=\begin{cases} 
      1 & n=10^k \text{ for some } k\in \mathbb{N} \\
      1+C^2 & \text{ else} 
\end{cases}
\]
Since $\delta$ achieves the same performance as the sample mean along the subsequence $n_k=10^k$ and incurs risk $1+C^2$ otherwise,
\[
1+C^2=\AVLAM(\delta)>\CVLAM(\delta)=B_{\mathrm{LAM},\theta_0}=1
\]
which implies that $\delta$ is a \emph{best} estimator sequence under classical LAM risk for any value of $C>0$, but an arbitrarily bad estimator sequence under attainment LAM risk for sufficiently large values of $C$. More generally, a transparently consistent risk index may assign $\delta$ any risk in $[1,1+C^2]$.
\end{example}

\paragraph{Generalized $\boldsymbol{\alpha}$-LAM risk.} We conclude this section with a useful reparametrization of the set of transparently consistent risk indices, in terms of a set of \emph{estimator sequence-dependent weights} on classical and attainment LAM risk.

\begin{definition}
\label{defn-gen-alpha-LAM-estimators}
A risk index $V_{\theta_0}: \mathcal{D} \rightarrow \overline{\mathbb{R}}_+$ is \emph{generalized $\alpha$-LAM} at $\theta_0$ if there exists a function $\alpha_{\theta_0}: \mathcal{D} \rightarrow [0,1]$ such that\footnote{We use the convention $0\cdot(+\infty)=0$, such that $\alpha_{\theta_0}=1$ selects classical LAM risk and $\alpha_{\theta_0}=0$ selects
attainment LAM risk, even when an endpoint is infinite.} 
\begin{equation}
    V_{\theta_0}(\delta)=V_{\mathrm{gen}\text{-}\alpha\text{-}\mathrm{LAM},\theta_0}(\delta):=\alpha_{\theta_0}(\delta) \CVLAM(\delta)+(1-\alpha_{\theta_0}(\delta))\AVLAM(\delta) \tag{gen $\alpha$-LAM}
\end{equation}
\end{definition}

The following observation verifies that Definitions \ref{defn:obv-cons} and \ref{defn-gen-alpha-LAM-estimators} are essentially equivalent.

\begin{observation}
\label{obs:gen-alpha-LAM-iff-obv-cons}
\begin{itemize}
    \item[(i)] Every generalized $\alpha$-LAM risk index at $\theta_0$ is transparently consistent at $\theta_0$. 
    \item[(ii)] For every risk index $V_{\theta_0}$ which is transparently consistent at $\theta_0$, there exists $\alpha_{\theta_0}: \mathcal{D} \rightarrow [0,1]$ such that: 
    \[
    V_{\theta_0}(\delta)=V_{\mathrm{gen}\text{-}\alpha\text{-}\mathrm{LAM},\theta_0}(\delta) \quad \forall \delta \text{ s.t. } \CVLAM(\delta)=+\infty \text{ or } \AVLAM(\delta)<+\infty
    \]
\end{itemize}
\end{observation}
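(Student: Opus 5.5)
The plan is to establish the pointwise ordering $\CVLAM(\delta)\leq \AVLAM(\delta)$ first and then treat the two parts separately. For each fixed $I\in\cI$, $\liminf_n \sup_{h\in I} R_{\theta_0,\delta}(n,h)\leq \limsup_n \sup_{h\in I} R_{\theta_0,\delta}(n,h)$. Taking $\sup_{I\in\cI}$ on both sides preserves this inequality in $\overline{\mathbb{R}}_+$. Write $\underline{v}=\CVLAM(\delta)$ and $\overline{v}=\AVLAM(\delta)$, so that $0\leq\underline{v}\leq\overline{v}\leq+\infty$.

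For part (i), fix $\alpha=\alpha_{\theta_0}(\delta)\in[0,1]$ and split into cases using the convention $0\cdot(+\infty)=0$.
\begin{itemize}
\item If $\overline{v}<\infty$, then $\alpha\underline{v}+(1-\alpha)\overline{v}$ is an ordinary convex combination of two reals, so it lies in $[\underline{v},\overline{v}]$.
\item If $\overline{v}=\infty$ and $\alpha<1$, the value is $+\infty$. This lies in $[\underline{v},+\infty]$.
\item If $\overline{v}=\infty$ and $\alpha=1$, the convention gives the value $\underline{v}$, which again lies in the interval.
\end{itemize}
Hence every generalized $\alpha$-LAM index satisfies \eqref{eq:obv-cons}.

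For part (ii), let $V_{\theta_0}\in\mathcal{V}_{\theta_0}$ and construct $\alpha_{\theta_0}$ estimator by estimator.
\begin{itemize}
\item If $\underline{v}=+\infty$, then \eqref{eq:obv-cons} forces $V_{\theta_0}(\delta)=+\infty$. Setting $\alpha_{\theta_0}(\delta)=1$ gives $1\cdot\infty+0\cdot\infty=\infty$, as required.
\item If $\overline{v}<\infty$ and $\underline{v}<\overline{v}$, set $\alpha_{\theta_0}(\delta)=(\overline{v}-V_{\theta_0}(\delta))/(\overline{v}-\underline{v})$. This lies in $[0,1]$ because $V_{\theta_0}(\delta)\in[\underline{v},\overline{v}]$. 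Rearranging gives $V_{\theta_0}(\delta)=\alpha\underline{v}+(1-\alpha)\overline{v}$.
\item If $\overline{v}<\infty$ and $\underline{v}=\overline{v}$, then $V_{\theta_0}(\delta)=\underline{v}$, and any $\alpha$ works, say $\alpha=1$.
\item On the remaining $\delta$, where $\underline{v}<\infty=\overline{v}$, define $\alpha_{\theta_0}(\delta)$ arbitrarily, say $1$. The statement makes no claim there.
\end{itemize}

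There is no real obstacle here. The only care needed is the extended-real arithmetic. The excluded case $\underline{v}<\infty=\overline{v}$ with $\underline{v}<V_{\theta_0}(\delta)<\infty$ genuinely cannot be represented: any $\alpha<1$ yields $+\infty$, while $\alpha=1$ yields $\underline{v}$. This explains the restriction in (ii) and the word ``essentially'' in the text.
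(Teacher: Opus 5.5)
Your proof is correct and follows the paper's argument. In part (i) you show the weighted average lies between the two endpoints; your case split only makes the $0\cdot(+\infty)=0$ convention explicit. In part (ii) you use the same weight $(\overline{v}-V_{\theta_0}(\delta))/(\overline{v}-\underline{v})$, with an arbitrary choice when the endpoints coincide (the paper takes $1/2$ there, including when $\underline{v}=\overline{v}=+\infty$, where you take $1$).
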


Hence, excluding estimator sequences which have finite classical LAM risk but infinite attainment LAM risk, a risk index is transparently consistent if and only if it is generalized $\alpha$-LAM. In particular, the equivalence holds over the set of estimator sequences which induce bounded local risk function sequences at $\theta_0$. Restating transparent consistency in the equivalent language of weighted averages of classical and attainment LAM risk is useful for our forthcoming decision theoretic-approach.

\section{Axiomatic results}
\label{sec:axioms}

\subsection{Roadmap of Section \ref{sec:axioms}}
In the context of Example~\ref{ex:powers-of-10}, it seems intuitive to consider $\delta$ unappealing, since it behaves well only when the sample size happens to be a power of 10. Consequently, it seems like classical LAM risk may be a poor fit for researcher preferences. To make this intuition precise, and to comprehensively understand which transparently consistent risk indices (if any) are a better fit, we take a decision-theoretic approach.

With the equivalence from Observation \ref{obs:gen-alpha-LAM-iff-obv-cons} in mind, we begin by seeking an axiomatic characterization of the set of researcher preferences with a generalized $\alpha$-LAM risk representation (Theorem \ref{thm-main-result}). We view this exercise as an initial benchmark for understanding which conditions on the researcher's rankings are necessary and sufficient to \emph{merely} be (transparently) consistent with standard efficiency arguments. Given the flexibility of the procedure-dependent weights $\alpha$, we find that this is not a particularly demanding restriction (although it does still carry behavioral content). In particular, we document that generalized $\alpha$-LAM preferences may violate basic rationality requirements, such as \emph{Monotonicity} and \emph{Mixture Continuity}. This suggests that requiring that $\alpha$ be a \emph{fixed} weight may be a more appealing fit, a class of risk representations we denote as $\alpha$-LAM.

Therefore, we next seek an axiomatic characterization of the set of $\alpha$-LAM preferences (Theorem \ref{thm:fixed-alpha-LAM}). Our axioms, which now include the basic rationality requirements mentioned above, determine how the researcher aggregates risk across sample sizes and local parameters. Within this class, two axioms specifying contrasting attitudes towards aggregating risk across sample sizes uniquely select classical LAM and attainment LAM. We argue that the latter axiom, which requires a \emph{preference for hedging}, is normatively appealing. We therefore recommend attainment LAM as the functional form for the LAM criterion. 

\subsection{Shared axioms}
We begin by stating and interpreting several variants of axioms which fulfill a shared purpose for each of the axiomatic characterizations we pursue. The first axiom clarifies our focus on optimality criteria which evaluate estimator sequences by their performance \emph{local to $\theta_0$}. For ease of exposition, we state this axiom informally in the main text and defer the formal statement to Axiom \ref{ax:cond-risk-relevance} in Appendix \ref{app:util-act-lemmas}, where we additionally take as primitive the researcher's preference over estimator sequences conditional on $\theta_0$.

\begin{axiom}[Local Risk Sufficiency, informally stated]\label{as:2}
Each estimator sequence $\delta$ is evaluated by its induced local risk function sequence $R_{\theta_0,\delta}$.
\end{axiom}

We view (the formal statement of) Axiom \ref{as:2} as a precise statement of the heuristic motivation for local asymptotics given in Section~\ref{sec:intro}. Recall that $R_{\theta_0,\delta}$ encodes $\delta$'s performance in the sequence of local decision problems at $\theta_0$ defined in Section~\ref{sec:setup}, which localize the parameter of interest around $\theta_0$ at the same rate as sampling uncertainty vanishes. Axiom \ref{as:2} therefore ensures that performance in this sequence of decision problems is \emph{all that matters} for ranking estimator sequences. Recall that classical and attainment LAM risk both satisfy Axiom \ref{as:2}. More generally, this holds for every generalized $\alpha$-LAM risk index where $\alpha_{\theta_0}(\delta)$ depends on $\delta$ only through $R_{\theta_0,\delta}$. Given that such heuristic motivations for local asymptotics are standard, we view Axiom \ref{as:2} as a natural restriction for studying local asymptotic optimality.\footnote{At this stage, the distinction between the local risk function sequence induced by $\delta$ at $\theta_0$ and the risk function sequence induced by $\delta$ at $\theta_0$ is vacuous: one may be obtained from the other by the usual reparametrization defined in Section \ref{sec:setup}. In this sense, Local Risk Sufficiency is equivalent to Risk Sufficiency. However, the distinction has bite when paired with our forthcoming axioms.} However, it rules out criteria which evaluate estimator sequences by other desiderata beyond their performance local to $\theta_0$, such as simplicity or computational complexity.

\addtocounter{example}{-1}
\begin{example}[Continued]
\normalfont Axiom \ref{as:2} requires that $(\Bar{X}_n)_n$ is preferred to $\delta$ if and only if $R_{\theta_0,\Bar{X}}$ is preferred to $R_{\theta_0,\delta}$.
\end{example}

\paragraph{Preferences over local risk function sequences.} With Axiom \ref{as:2} in hand, it is without loss of generality to directly study the researcher's preferences at $\theta_0$ over \emph{local risk function sequences}. A \emph{local risk function} is a bounded, measurable function $r: H \rightarrow \mathbb{R}_+$. Let $R_H$ be the set of local risk functions. A \emph{local risk function sequence} is a bounded, measurable function $R: \mathbb{N} \times H \rightarrow \mathbb{R}_+$. Let $\mathcal{R}_H$ be the set of local risk function sequences. We take as primitive a binary relation $\succsim_{\theta_0}$ on $\mathcal{R}_H$, which we interpret as the preferences of a researcher facing the sequence of local informational environments defined in Section \ref{sec:setup}, in which the researcher has conditioned on the fixed centering value $\theta_0$. 

Recall from Section \ref{sec:intro} that, although in practice the researcher’s problem features a fixed sample size, the criteria we study operate under a convenient fiction in which the researcher ranks local risk function \emph{sequences}
indexed by sample size. We may therefore interpret the decision environment as one in
which $\succsim_{\theta_0}$ is formed \emph{before} the researcher learns the sample size of their dataset. Under this interpretation, at the time of their decision, the researcher faces \emph{uncertainty} about the sample size $n$ (in addition to the local parameter $h$),\footnote{Since $\succsim_{\theta_0}$ is conditional on $\theta_0$, the representation results developed in the main text are also stated conditional on $\theta_0$. Appendix \ref{app:util-act-lemmas} studies a more primitive setup which does not condition on a fixed $\theta_0$, and adds an \emph{unanimity} axiom across centering values (Axiom \ref{ax:Theta-unanimity}) to ensure that the criterion considers performance conditional on $\theta_0$ for each $\theta_0\in \Theta$.} and their stance towards how to trade off risk across $n$ and $h$ is determined by their \emph{attitudes towards uncertainty} about $n$ and $h$. Following standard conventions in the microeconomic decision theory literature, we draw on this uncertainty-based interpretation to organize and discuss our axioms.

We briefly discuss the decision environment. While the boundedness restriction on the choice domain $\mathcal{R}_H$ is standard in the literature on decision-making under uncertainty, it rules out some local risk function sequences induced by estimator sequences where risk tends to infinity as the sample gets large (or yields infinite risk in finite sample). The LAM risk representations we obtain from our axioms therefore do not apply to such local risk function sequences. As is standard in the literature studying optimality criteria from a decision-theoretic perspective \citep{stoye2012new,andrews2026misspecification}, we study the researcher's preferences over \emph{all} local risk function sequences, even though the set of (bounded and measurable) local risk function sequences induced by some estimator sequence is a strict subset. To illustrate our axioms, we sometimes ask the researcher to consider their preferences over infeasible local risk function sequences, including those induced by \emph{oracle estimator sequences} which may condition their estimates on $h$. We have endeavored to make the sequences we draw on simple and interpretable, regardless of their feasibility.

We introduce some useful notation. For each $k\geq 0$, let $\overrightarrow{k} \in R_H$ denote the \emph{constant local risk function} which yields risk $k$ for each $h \in H$.\footnote{When the context is clear, we abuse notation and also use $\overrightarrow{k}$ to refer to the \emph{constant sequence} of constant local risk functions $(\overrightarrow{k},\overrightarrow{k},\ldots) \in \mathcal{R}_H$.} Estimators which are \emph{equivariant-in-law}\footnote{An estimator $\delta_n$ is \emph{equivariant-in-law} if the law of $\sqrt{n}(\delta_n(X_n)-\psi(\theta))$ under $X_n \sim P_{\theta,n}$ does not depend on $\theta$.} (such as the sample mean in Example \ref{ex:gaussian}) yield constant local risk functions. For local risk function sequences $R,R' \in \mathcal{R}_H$ and $\alpha \in [0,1]$, define the mixture $\alpha R+(1-\alpha)R' \in \mathcal{R}_H$ as: $(\alpha R+(1-\alpha)R')(n,h)=\alpha R(n,h)+(1-\alpha)R'(n,h)$. Note that this mixture occurs pointwise in $(n,h)$. To interpret this object, suppose $R$ is induced at $\theta_0$ by $\delta$ and $R'$ is induced at $\theta_0$ by $\delta'$. Then, $\alpha R+(1-\alpha)R'$ is induced at $\theta_0$ by the estimator sequence which flips a coin (independently of the data) whose probability of heads is $\alpha$, and uses $\delta$ if heads and $\delta'$ if tails.\footnote{Since the local risk function sequence induced by an estimator sequence only depends on the set of \emph{marginal distributions} of estimation errors for each $(n,h)$, it does not matter whether the coin is flipped before or after $(n,h)$ is realized (as long as the true value of $(n,h)$ does not depend on the outcome of the coin flip).}

We emphasize that comparing two local risk function sequences $R,R'\in \mathcal{R}_H$ is a nontrivial task, since $R$ may yield lower risk than $R'$ at some values of $(n,h)$ and higher risk at other values. Any complete comparison must therefore take a stance on how to trade off risk across sample sizes $n\geq 1$ and local parameters $h\in H$. A \emph{risk index} $V: \mathcal{R}_H \rightarrow \mathbb{R}$ provides a complete ranking of local risk function sequences in $\mathcal{R}_H$, where $V(R)$ aggregates the performance of $R$ across each $(n,h)$ into a single index of risk. A risk index $V: \mathcal{R}_H \rightarrow \mathbb{R}$ is a \emph{risk representation of $\succsim_{\theta_0}$ on $\mathcal{R}_H$} if: for each $R,R' \in \mathcal{R}_H$, $R\succsim_{\theta_0} R'$ if and only if $V(R)\leq V(R')$. The following definition collects the key LAM-based risk representations of interest.

\begin{definition}\label{defn:LAM-risk-main-text}
\begin{itemize}
    \item[(i)] $\succsim_{\theta_0}$ has a \emph{generalized $\alpha$-LAM risk representation on $\mathcal R_H$} if there exists a function $\alpha: \mathcal{R}_H \rightarrow [0,1]$ such that the function
    \begin{equation}\label{eq:gen-LAM-risk-act}
        V_{\mathrm{gen}\text{-}\alpha\text{-}\mathrm{LAM}}(R)
        :=\alpha(R)\underline V_{\mathrm{LAM}}(R)+(1-\alpha(R)) \overline V_{\mathrm{LAM}}(R) \tag{gen-$\alpha$-LAM risk}
    \end{equation}
    represents it, where
    \begin{equation}
    \label{eq:clam-risk}
        \underline V_{\mathrm{LAM}}(R):=\sup_{I\in\mathcal I}\liminf_{n\to\infty}\max_{h\in I}R(n,h) \tag{CLAM risk}
    \end{equation}
    and
    \begin{equation}
    \label{eq:alam-risk}
        \overline V_{\mathrm{LAM}}(R):=\sup_{I\in\mathcal I}\limsup_{n\to\infty}\max_{h\in I}R(n,h) \tag{ALAM risk}
    \end{equation}
    \item[(ii)] $\succsim_{\theta_0}$ has an \emph{$\alpha$-LAM risk representation on $\mathcal R_H$} if it has a generalized $\alpha$-LAM risk representation on $\mathcal R_H$ where $\alpha \in [0,1]$ is a constant.
\end{itemize}
\end{definition}

Recall that the set of generalized $\alpha$-LAM researcher preferences on $\mathcal{R}_H$ is precisely the set of researcher preferences on $\mathcal{R}_H$ that have a risk representation that is \emph{transparently consistent} with standard efficiency arguments.\footnote{Since local risk function sequences in $\mathcal{R}_H$ are bounded, the equivalence holds exactly: $V$ is generalized $\alpha$-LAM on $\mathcal{R}_H$ if and only if $\underline{V}_{\mathrm{LAM}}\leq V\leq \overline{V}_{\mathrm{LAM}}$ on $\mathcal{R}_H$.} As an initial benchmark, we axiomatically characterize this set. Our axioms reveal a lack of basic rationality requirements such as Monotonicity and Mixture Continuity. Amending our axioms to account for these will lead us to the (fixed) $\alpha$-LAM class, and eventually to the classical and attainment LAM risk representations \eqref{eq:clam-risk} and \eqref{eq:alam-risk}.

Our axiomatic exercise is tightly connected to the motivation and decision environment of Section \ref{sec:setup}. In particular, each risk index $V: \mathcal{R}_H \rightarrow \mathbb{R}_+$ studied in Definition \ref{defn:LAM-risk-main-text}(i) and (ii) induces a corresponding optimality criterion for estimator sequences conditional on $\theta_0$, by defining the risk index of $\delta$ as $V_{\theta_0}(\delta):=V(R_{\theta_0,\delta})$. We therefore seek axiomatic characterizations of the set of binary relations $\succsim_{\theta_0}$ on $\mathcal{R}_H$ which have the risk representations in Definition \ref{defn:LAM-risk-main-text}(i) and (ii) with the understanding that, under Axiom~\ref{as:2}, obtaining such a representation for local risk function sequences produces the desired representation for estimator sequences.\footnote{More precisely, for estimator sequences whose induced local risk function sequences at $\theta_0$ are bounded, measurable
functions $\mathbb{N} \times H\rightarrow \mathbb{R}_+$.}

\paragraph{Basic axioms.} We begin with two variants of standard regularity conditions on preferences. We use the conditions in Axiom \ref{ax: basics-main-text} as part of our characterization of $\alpha$-LAM preferences. However, in this section we document that generalized $\alpha$-LAM need not satisfy even these standard conditions, including Monotonicity and Mixture Continuity. Consequently, Axiom \ref{ax:gen-basics-main-text} provides a weaker variant to use for our characterization of generalized $\alpha$-LAM.

\begin{axiom}[Basic axioms]\label{ax: basics-main-text}
\begin{itemize}
    \item[(i)] \textnormal{Weak Order:} $\succsim_{\theta_0}$ is complete and transitive on $\mathcal{R}_H$. 
    \item[(ii)] \textnormal{Strong Monotonicity:} For each $R,R' \in \mathcal{R}_H$,
    \[
    R'(n,h)\geq R(n,h) \quad \forall (n,h)\in \mathbb{N} \times H \implies R \succsim_{\theta_0} R'
    \]
    For any $j>k$, $\overrightarrow{k} \succ_{\theta_0} \overrightarrow{j}$.
    \item[(iii)] \textnormal{Mixture Continuity:} For each $R,R',R'' \in \mathcal{R}_H$, the following sets are closed:
    \[
    \{\alpha \in [0,1]: \alpha R+(1-\alpha)R' \succsim_{\theta_0} R''\} \quad \text{and} \quad \{\alpha \in [0,1]: R'' \succsim_{\theta_0} \alpha R+(1-\alpha)R'\}
    \]
\end{itemize}
\end{axiom}

Axiom \ref{ax: basics-main-text}(i) requires that the researcher can rank any two local risk function sequences, and that their rankings are transitive.\footnote{Although completeness is a standard axiom in the microeconomic decision theory literature, it may reasonably be viewed as strong in the context of comparing local risk function sequences. Nevertheless, completeness is required by any risk representation.} Axiom \ref{ax: basics-main-text}(ii) requires that a local risk function sequence which yields weakly lower risk at every sample $n$ and local parameter value $h$ is weakly preferred, and strictly lower values of constant risk are strictly preferred. Axiom \ref{ax: basics-main-text}(iii) requires that small perturbations in mixtures of local risk function sequences do not alter their ranking.

The following example documents that, due to the arbitrary dependence of $\alpha(R)$ on $R$, generalized $\alpha$-LAM preferences may violate Monotonicity and Mixture Continuity.

\addtocounter{example}{-1}
\begin{example}[Continued]
\label{ex:violations}
\normalfont Consider the statistical environment of our running example. For a bounded sequence $(a_n)_n$, consider the estimator sequence $\gamma_n=\Bar{X}_n+a_n/\sqrt{n}$, which shifts the sample mean by local bias $a_n$. Since each $\gamma_n$ is equivariant-in-law, analogous computations as before yield that $R_{\theta_0,\gamma}=(\overrightarrow{1+a_n^2})_{n}$. This implies that every bounded sequence of constant local risk functions $K=(\overrightarrow{k_n})_n$ with each $k_n\geq 1$ is induced by some $\gamma$ and $(a_n)_n$. We therefore work directly with such local risk function sequences. For ease of notation, we drop the overhead arrow and refer to $K=(k_n)_n$ as a \emph{local risk sequence}.
\begin{itemize}
    \item[(i)] Consider the local risk sequences $K=(1,5,1,5,\ldots)$ and $K'=(2,10,2,10,\ldots)$, respectively. Set $\alpha(K)=0$ and $\alpha(K')=1$, and extend $\alpha$ arbitrarily to $\mathcal{R}_H$. Then $K'\geq K$ pointwise, but $V_{\mathrm{gen}\text{-}\alpha\text{-}\mathrm{LAM}}(K)=5>2=V_{\mathrm{gen}\text{-}\alpha\text{-}\mathrm{LAM}}(K')$, which implies $K'\succ_{\theta_0}K$: the sequence with strictly higher risk at every sample size is strictly preferred. This violates Monotonicity.
    \item[(ii)] For each $\beta \in [0,1]$, define $R_\beta:=\beta K+(1-\beta)K'$. Set $\widetilde\alpha(R_\beta)=1$ for rational $\beta$ and $0$  otherwise, and extend $\widetilde\alpha$ to $\mathcal{R}_H$ arbitrarily. Then
    \[
    \{\beta \in [0,1]:R_\beta\succsim_{\theta_0}\overrightarrow 3\}=\mathbb{Q} \cap [0,1]
    \]
    which is not closed. This violates Mixture Continuity.
\end{itemize}
\end{example}

We must therefore weaken Axiom \ref{ax: basics-main-text} to characterize generalized $\alpha$-LAM.

\begin{axiom}[Basic Axioms II]\label{ax:gen-basics-main-text}
\begin{itemize}
    \item[(i)] \textnormal{Weak Order:} $\succsim_{\theta_0}$ is complete and transitive on $\mathcal R_H$.
    \item[(ii)] \textnormal{Constant Calibration:} For each $k,j\in\mathbb R_+$, $\overrightarrow{k}\succsim_{\theta_0}\overrightarrow{j}
    \iff k\leq j$.
    \item[(iii)] \textnormal{Constant Mixture Continuity:} For each $R\in\mathcal R_H$ and $k,j\in\mathbb R_+$, the following sets are closed:
    \[
    \left\{\beta \in [0,1]: \beta \overrightarrow{k}+(1-\beta)\overrightarrow{j} \succsim_{\theta_0}R\right\}
    \quad \text{and} \quad
    \left\{\beta \in [0,1]:R\succsim_{\theta_0}\beta \overrightarrow{k}+(1-\beta)\overrightarrow{j} \right\}
    \]
\end{itemize}
\end{axiom}

Axiom \ref{ax:gen-basics-main-text}(i) is identical to Axiom \ref{ax: basics-main-text}(i). Axiom \ref{ax:gen-basics-main-text}(ii) merely requires that (strictly) smaller risks are (strictly) better. Axiom \ref{ax:gen-basics-main-text}(iii) requires that small perturbations in mixtures of \emph{constant risks} do not alter their ranking. Note that each of Axiom \ref{ax:gen-basics-main-text}(i)-(iii) is implied by Axiom \ref{ax: basics-main-text}(i)-(iii), respectively.

\addtocounter{example}{-1}
\begin{example}[Continued]
\label{ex:gamma}
\normalfont The estimator sequence $\gamma_n=\Bar{X}_n+1/\sqrt{n}$ induces the constant local risk function sequence $R_{\theta_0,\gamma}=\overrightarrow{2}$. Both Axioms \ref{ax: basics-main-text}(ii) and \ref{ax:gen-basics-main-text}(ii) require that $R_{\theta_0,\Bar{X}} \succ_{\theta_0} R_{\theta_0,\gamma}$. 
\end{example}

\paragraph{Axioms for uncertainty towards $\boldsymbol{I}$.} Observe that each of the risk representations we consider may be written as $V(R)=\sup_{I\in \mathcal{I}} V_I(R)$ for a collection of risk indices $\{V_I\}_{I\in \mathcal{I}}$, which takes the \emph{worst-case risk} over all local neighborhoods $I\in \mathcal{I}$. In this section, we state axioms which ensure that the researcher's attitudes towards uncertainty about $I$ are described by this worst-case evaluation. Representations related to this form are axiomatically studied in \cite{gilboa2010objective}, and our axioms are partially adapted from their analysis.\footnote{\cite{gilboa2010objective} study the \emph{maxmin expected utility} model of \cite{gilboa1989maxmin}, which takes the worst case over \emph{beliefs}. In their case, preferences conditional on a belief are \emph{subjective expected utility}. By contrast, in our setting the researcher takes the worst case over \emph{local neighborhoods} $I\in \mathcal{I}$, and conditional on $I$, preferences have representations which require their own axiomatic analysis. \cite{gilboa2010objective} also make the additional assumption that the unanimity rule induced by conditional preferences is separately observed, whereas we derive the $I$-conditional preferences directly from the researcher's actual preference. For these reasons, our approach requires different techniques (such as Lemma \ref{lem:I-rep} in Appendix \ref{app:util-act-lemmas}).} However, since we study generalized $\alpha$-LAM risk and $\alpha$-LAM risk, we again require two variants of these axioms specialized to our setting and functional forms of interest.

To define these axioms, we first derive the researcher's preferences over $\mathcal{R}_H$ \emph{conditional} on each local neighborhood $I \in \mathcal{I}$, denoted $\succsim_{\theta_0,I}$, from the researcher's actual preferences $\succsim_{\theta_0}$. Our axioms are then jointly imposed on $\succsim_{\theta_0}$ and $\{\succsim_{\theta_0,I}\}_{I\in\mathcal{I}}$. We begin by defining a useful \emph{splicing} operation on the set of local risk function sequences.

\begin{definition}
For each $R,R' \in \mathcal{R}_H$ and $I \in \mathcal{I}$, define $R_I R' \in \mathcal{R}_H$ as:
\[
(R_IR')(n,h)=\begin{cases}
    R(n,h) & \text{if } h\in I \\
    R'(n,h) & \text{else}
\end{cases}
\]
\end{definition}

To interpret the splicing operation, suppose $R$ and $R'$ are induced at $\theta_0$ by $\delta$ and $\delta'$. Then $R_I R'$ is induced by an oracle estimator sequence which uses $\delta$ when $h\in I$ and $\delta'$ otherwise. $I$-conditional preferences are derived by comparing spliced local risk function sequences after fixing $R'$ to be the best constant risk $\overrightarrow{0}$.

\begin{definition}\label{def:I-cond}
For each $I\in\mathcal I$, define the binary relation $\succsim_{\theta_0,I}$ on $\mathcal R_H$ as:
\[
R\succsim_{\theta_0,I}R'
\iff
R_I\overrightarrow 0\succsim_{\theta_0}R'_I\overrightarrow 0
\]
\end{definition}

We refer to $\succsim_{\theta_0,I}$ as the researcher's \emph{$I$-conditional preference}. To interpret it, let $R$ and $R'$ be induced at $\theta_0$ by $\delta$ and $\delta'$, and suppose that for each $h\notin I$, the oracle perfectly reveals the local estimand at $h$ for every $n\geq 1$, such that the researcher incurs \emph{zero} risk. Then, the researcher $I$-conditionally prefers $R$ to $R'$ when they nevertheless prefer to use $\delta$ over $\delta'$ for local parameters $h\in I$ where the local estimand is not revealed.

As discussed, each of our axioms in this section relate the preferences $\succsim_{\theta_0}$ and $\{\succsim_{\theta_0,I}\}_{I\in \mathcal{I}}$ to specify a \emph{maximally pessimistic} attitude towards uncertainty about $I\in \mathcal{I}$.

\begin{axiom}[Consistency and Caution]\label{ax:cons-and-caution-main-text}
\begin{itemize}
    \item[(i)] \emph{Consistency:} For each $R,R' \in \mathcal{R}_H$, 
    \[
    R \succsim_{\theta_0,I} R' \quad \forall I \in \mathcal{I} \implies R \succsim_{\theta_0} R'
    \]
    \item[(ii)] \emph{Caution:} For each $R \in \mathcal{R}_H$ and $k\in \mathbb{R}_+$,
    \[
    \overrightarrow{k}\succ_{\theta_0,I} R \quad \text{ for some } I\in\mathcal{I} \implies \overrightarrow{k}\succsim_{\theta_0} R
    \]
\end{itemize}
\end{axiom}

Axiom \ref{ax:cons-and-caution-main-text}(i) requires that if a local risk function sequence is $I$-conditionally preferred \emph{for each local neighborhood $I\in \mathcal{I}$}, then it is unconditionally preferred. Since Axiom \ref{ax:cons-and-caution-main-text}(i) only requires that $\succsim_{\theta_0}$ respects $\{\succsim_{\theta_0,I}\}_{I\in\mathcal{I}}$ \emph{when the latter unanimously agree} (and is silent otherwise), it ensures that $\succsim_{\theta_0}$ is \emph{minimally consistent} with $\{\succsim_{\theta_0,I}\}_{I\in\mathcal{I}}$.\footnote{In this sense, we may also interpret Axiom \ref{ax:cons-and-caution-main-text}(i) as a Pareto principle for aggregating $\{\succsim_{\theta_0,I}\}_{I\in \mathcal{I}}$.} Axiom \ref{ax:cons-and-caution-main-text}(ii) ensures that as long as there exists \emph{some} local neighborhood $I\in \mathcal{I}$ for which the researcher would $I$-conditionally prefer to incur a certain risk over a local risk function sequence, they would unconditionally prefer to do so as well. Put another way, say that a preference exhibits \emph{caution} if it ranks a certain risk over a local risk function sequence (whose risk may depend on the uncertain values of $n$ and $h$). Axiom \ref{ax:cons-and-caution-main-text}(ii) requires that as soon as some $\succsim_{\theta_0,I}$ (strictly) exhibits caution, $\succsim_{\theta_0}$ does so (weakly) as well. In this sense, $\succsim_{\theta_0}$ is \emph{maximally cautious} with respect to $\{\succsim_{\theta_0,I}\}_{I\in\mathcal{I}}$.

We use the conditions in Axiom \ref{ax:cons-and-caution-main-text} as part of our characterization of $\alpha$-LAM. However, as in the previous section, unrestricted dependence of $\alpha(R)$ on $R$ means that generalized $\alpha$-LAM need not satisfy Consistency and Caution.

\addtocounter{example}{-1}
\begin{example}[Continued]
\normalfont Consider the local risk sequences $K=(1,10,1,10,\ldots)$ and $K'=(2,5,2,5,\ldots)$. As previously discussed, each of these local risk sequences is feasible in the statistical environment of our running example. Suppose $\succsim_{\theta_0}$ has a generalized $\alpha$-LAM representation $V_{\mathrm{gen}\text{-}\alpha\text{-}\mathrm{LAM}}$.
\begin{itemize}
    \item[(i)] For each $I\in \mathcal{I}$, set $\alpha(K_I\overrightarrow{0})=\alpha(K'_I\overrightarrow{0})=1$ and $\alpha(K)=\alpha(K')=0$. Then,
    \begin{align*}
        V_{\mathrm{gen}\text{-}\alpha\text{-}\mathrm{LAM}}(K'_I\overrightarrow{0})\geq V_{\mathrm{gen}\text{-}\alpha\text{-}\mathrm{LAM}}(K_I\overrightarrow{0}) \quad \forall I\in \mathcal{I} \\
        \iff 2=\liminf_{n\to\infty} K'_n\geq \liminf_{n\to\infty} K_n=1
    \end{align*}
    but
    \begin{align*}
        10=\limsup_{n\to\infty} K_n=V_{\mathrm{gen}\text{-}\alpha\text{-}\mathrm{LAM}}(K)>V_{\mathrm{gen}\text{-}\alpha\text{-}\mathrm{LAM}}(K')=\limsup_{n\to\infty} K'_n=5
    \end{align*}
    \item[(ii)] Fix any $I\in \mathcal{I}$ and set $\alpha(K_I\overrightarrow{0})=0$, $\alpha(K)=1$, and $k=5$. Then, $k<\limsup_{n\to\infty} K_n=10$ but $k>\liminf_{n\to\infty} K_n=1$.
\end{itemize}
\end{example}

For generalized $\alpha$-LAM, we therefore instead require \emph{limiting versions} of Consistency and Caution.

\begin{axiom}[Asymptotic Consistency and Caution]
\label{ax:asymptotic-benchmark-dominance}
For each $R\in\mathcal R_H$ and $k\in\mathbb R_+$,
\begin{enumerate}[(i)]
    \item[(i)] \emph{Asymptotic Consistency:}
    \[
    R_n \succsim_{\theta_0,I} \overrightarrow{k} \text{ e.v.} \quad \forall I\in \mathcal{I} \implies R\succsim_{\theta_0}\overrightarrow{k}
    \]
    \item[(ii)] \emph{Asymptotic Caution:}
    \[
    \overrightarrow{k}\succsim_{\theta_0,I} R_n \text{ e.v.} \quad \text{ for some } I \in \mathcal{I} \implies \overrightarrow{k}\succsim_{\theta_0}R
    \]
\end{enumerate}
\end{axiom}

Axiom \ref{ax:asymptotic-benchmark-dominance}(i)-(ii) are asymptotic analogs of Axioms \ref{ax:cons-and-caution-main-text}(i)-(ii). Axiom \ref{ax:asymptotic-benchmark-dominance}(i) requires that if each local risk function along a sequence is \emph{eventually} $I$-conditionally preferred to a constant risk for \emph{every} local neighborhood $I\in \mathcal{I}$, then the entire sequence is unconditionally preferred. Axiom \ref{ax:asymptotic-benchmark-dominance}(ii) requires that as soon as each local risk function along a sequence is \emph{eventually} $I$-conditionally dominated by a certain risk for \emph{some} local neighborhood, then the entire sequence is unconditionally dominated.

Since Axioms \ref{ax:cons-and-caution-main-text} and \ref{ax:asymptotic-benchmark-dominance} characterize the aggregator $\sup_{I\in \mathcal{I}}$ in the functional forms of $\alpha$-LAM risk and generalized $\alpha$-LAM risk, our remaining axioms are applied to $\succsim_{\theta_0,I}$ for each $I\in \mathcal{I}$. We make a further simplifying observation: $R$'s ranking according to $\succsim_{\theta_0,I}$ depends on $R$ only through its restriction to $I$, denoted $R_I:\mathbb N\times I\to\mathbb R_+$. The interpretation of this observation is straightforward: if two estimator sequences yield the same risk at every $(n,h)$ with $h\in I$, an oracle who is constrained to provide zero risk outside $I$ has no scope to alter the risk profile. Hence, the researcher is $I$-conditionally indifferent between them.

Let $\mathcal{R}_I$ be the set of bounded, measurable functions $R_I: \mathbb{N} \times I \rightarrow \mathbb{R}_+$. An \emph{$I$-local risk function sequence} $R_I \in \mathcal{R}_I$ may be interpreted analogously as before: if $R$ is induced at $\theta_0$ by $\delta$, $R_I$ records $\delta$'s performance in the sequence of local decision problems at $\theta_0$ for local parameters $h\in I$. The above observation ensures that we may impose the remaining axioms on the induced $I$-conditional preference $\succsim_{\theta_0,I}$ on the simpler domain $\mathcal{R}_I$.\footnote{This follows from the axioms we have imposed on $\succsim_{\theta_0}$ and by definition of $\succsim_{\theta_0,I}$. Lemma \ref{lem:I-relevance} in Appendix \ref{app:util-act-lemmas} formalizes this observation.} 

\paragraph{Axioms for uncertainty towards $\boldsymbol{h}$.} Observe that, for a fixed local neighborhood $I\in \mathcal{I}$ and sample size $n\geq 1$, each of the risk representations we study depend on $R(n,\cdot)$ only through its \emph{worst-case risk} over $h\in I$: $\max_{h\in I} R(n,h)$. In this section, we offer axioms which ensure that the researcher's $I$-conditional attitudes towards uncertainty about $h\in I$ are described by this worst-case evaluation, and which apply to each of the risk representations we consider. Representations related to the form $R(n,\cdot) \mapsto \max_{h\in I} R(n,h)$ were axiomatized by \cite{gilboa1989maxmin} (henceforth GS89), and our axioms adapt their analysis to our setting. 

To isolate the researcher's attitudes towards uncertainty about $h\in I$, we focus on the researcher's ranking of elements of $\mathcal{R}_I$ which are \emph{constant in $n$}. Formally, an \emph{$I$-local risk function} is a function $r_I: I \rightarrow \mathbb{R}_+$. Let $\mathbb{R}_+^I \subseteq \mathcal{R}_I$ denote the set of $I$-local risk functions.\footnote{By identifying each $r_I \in \mathbb{R}_+^I$ with the $I$-local risk function sequence $(r_I,r_I,\ldots) \in \mathcal{R}_I$, we may view $\mathbb{R}_+^I$ as a subset of $\mathcal{R}_I$. Hence, any binary relation on $\mathcal{R}_I$ induces a binary relation on $\mathbb{R}_+^I$.} To interpret this object, suppose $R_I$ is induced at $\theta_0$ by $\delta$. For each fixed $n\geq 1$, $(R_I)_n$ is the $I$-local risk function induced by $\delta_n$ at $\theta_0$, which describes $\delta_n$'s performance in the $n$-th local decision problem at $\theta_0$ for local parameters $h\in H$.

\begin{axiom}[GS89 Axioms]\label{ax:gs-main-text}
$\succsim_{\theta_0,I}$ on $\mathbb{R}_+^I$ satisfies:
\begin{itemize}
    \item[(i)] \textnormal{Nontrivial Weak Order:} $\succsim_{\theta_0,I}$ is nontrivial, complete and transitive on $\mathbb{R}_+^I$.
    \item[(ii)] \textnormal{Monotonicity:} For each $r_I,r_I' \in \mathbb{R}_+^I$, $r_I'(h)\geq r_I(h)$ for all $h\in I$ implies $r_I\succsim_{\theta_0,I} r_I'$.
    \item[(iii)] \textnormal{Mixture Continuity:} For each $r_I,r_I',r_I'' \in \mathbb{R}_+^I$, the following sets are closed:
    \[
    \{\alpha \in [0,1]: \alpha r_I+(1-\alpha)r_I' \succsim_{\theta_0,I} r_I''\} \quad \text{and} \quad \{\alpha \in [0,1]: r_I'' \succsim_{\theta_0,I} \alpha r_I+(1-\alpha)r_I'\}
    \]
    \item[(iv)] \textnormal{Certainty Independence:} For each $r_I,r_I' \in \mathbb{R}_+^I$, $k \in \mathbb{R}_+$, and $\alpha \in (0,1)$, 
    \[
    r_I \succsim_{\theta_0,I} r_I' \iff 
    \alpha r_I + (1-\alpha)\overrightarrow{k} \succsim_{\theta_0,I} \alpha r_I' + (1-\alpha)\overrightarrow{k}
    \]
    \item[(v)] \textnormal{Uncertainty Aversion}: For each $r_I,r_I' \in \mathbb{R}_+^I$ and $\alpha \in (0,1)$, 
    \[
    r_I \sim_{\theta_0,I} r_I' \implies \alpha r_I + (1-\alpha)r_I' \succsim_{\theta_0,I} r_I
    \]
\end{itemize}
\end{axiom}

Axiom \ref{ax:gs-main-text}(i)-(iii) have analogous interpretations to Axiom \ref{ax: basics-main-text}(i)-(iii), adapted to rankings of $I$-local risk functions. Axiom \ref{ax:gs-main-text}(iv) requires that rankings of $I$-local risk functions are preserved under mixing with (identical) constant local risks. To understand this, suppose $r_I$ and $r_I'$ are induced by estimators $\delta_n$ and $\delta_n'$ at $\theta_0$, and let $\gamma_n$ be any equivariant-in-law estimator. Axiom \ref{ax:gs-main-text}(iv) requires that if the researcher $I$-conditionally prefers $\delta_n$ to $\delta_n'$, then they $I$-conditionally prefer the randomized estimator which flips a coin and uses $\delta_n$ if heads and $\gamma_n$ if tails over the randomized estimator flips a coin and uses $\delta_n'$ if heads and $\gamma_n$, regardless of the distribution of the coin. Axiom \ref{ax:gs-main-text}(v) ensures that the researcher exhibits a \emph{preference for hedging} against uncertainty towards $h$: if the researcher is $I$-conditionally indifferent between $\delta_n$ and $\delta_n'$, then they $I$-conditionally prefer the randomized estimator which flips a coin between them to either $\delta_n$ or $\delta_n'$, regardless of the distribution of the coin. Intuitively, the coin acts as an external randomization device which flattens the risk profile across $h\in I$.\footnote{In statistics and econometrics, we often have the intuition (motivated by Jensen’s inequality and convex loss functions) that adding extra randomness is undesirable. Axioms such as Axiom \ref{ax:gs-main-text}(v) do not contradict this intuition: they do not require randomized estimators to be \emph{undominated}, merely that some randomized estimators dominate other estimators of a particular form.}

The GS89 axioms ensure that $\succsim_{\theta_0,I}$ on $\mathbb{R}_+^I$ has a representation of the form $r_I \mapsto \max_{p\in C} \int r_I \ dp$ for some set of priors $C\subseteq \Delta(I)$. To ensure that the researcher takes the worst-case over \emph{all} $h\in I$, we impose the following axiom, adapted from \cite{ghirardato2002ambiguity}.

\begin{axiom}[$I$-Maximal Caution]\label{ax:I-Caution-Main-Text}
For each $r_I \in \mathbb R_+^I$ and $k\in \mathbb{R}_+$,
\[
r_I \succsim_{\theta_0,I} \overrightarrow{k} \implies \overrightarrow{r_I(h)} \succsim_{\theta_0,I} \overrightarrow{k}  \quad \forall h\in I
\]
\end{axiom}

Axiom \ref{ax:I-Caution-Main-Text} requires that whenever the researcher $I$-conditionally prefers an estimator $\delta_n$ to an equivariant-in-law estimator $\gamma_n$, $\delta_n$ must incur less risk than $\gamma_n$ for every $h\in I$.

\paragraph{Characterizing generalized $\boldsymbol{\alpha}$-LAM.} The following result provides an axiomatic characterization of generalized $\alpha$-LAM.

\begin{theorem}[Generalized $\alpha$-LAM]\label{thm-main-result}
The following are equivalent:
\begin{itemize}
    \item[(i)] $\succsim_{\theta_0}$ satisfies Axiom~\ref{ax:gen-basics-main-text}; $\succsim_{\theta_0}$ and $\{\succsim_{\theta_0,I}\}_{I\in \mathcal{I}}$ satisfy Axiom \ref{ax:asymptotic-benchmark-dominance}; and each $\succsim_{\theta_0,I}$ satisfies Axioms \ref{ax:gs-main-text}-\ref{ax:I-Caution-Main-Text}.
    \item[(ii)] $\succsim_{\theta_0}$ has a generalized $\alpha$-LAM risk representation on $\mathcal R_H$.
\end{itemize}
\end{theorem}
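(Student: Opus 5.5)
The plan is to reduce everything to a \emph{certainty-equivalent} construction and then sandwich the certainty equivalent between $\underline V_{\mathrm{LAM}}$ and $\overline V_{\mathrm{LAM}}$. By the footnote after Definition~\ref{defn:LAM-risk-main-text}, on the bounded domain $\mathcal R_H$ a preference has a generalized $\alpha$-LAM representation iff it is represented by some $V$ with $\underline V_{\mathrm{LAM}}\le V\le \overline V_{\mathrm{LAM}}$. Indeed, $\alpha(R)$ can be read off whenever the endpoints differ, and chosen arbitrarily otherwise. So for (i)$\Rightarrow$(ii) I will construct such a $V$, and for (ii)$\Rightarrow$(i) I will verify each axiom from the sandwich.

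\textbf{(i)$\Rightarrow$(ii).} First, for each bounded $R$, I fix $M\ge\sup R$ and apply Weak Order, Constant Calibration and Constant Mixture Continuity on the segment $\beta\overrightarrow{0}+(1-\beta)\overrightarrow{M}$. This yields a unique $c(R)\in\mathbb R_+$ with $R\sim_{\theta_0}\overrightarrow{c(R)}$. The two closed sets cover $[0,1]$ by completeness and $[0,1]$ is connected, so they intersect; uniqueness comes from Calibration. By transitivity and Calibration, $V:=c$ represents $\succsim_{\theta_0}$.

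Second, I identify the $I$-conditional preferences on constant-in-$n$ elements. Axiom~\ref{ax:gs-main-text} gives, via GS89, a representation $r_I\mapsto\max_{p\in C_I}\int r_I\,dp$ with $C_I\subseteq\Delta(I)$ closed and convex. The nontriviality and normalization needed to identify the utility on constants with the identity come from Constant Calibration transported through Definition~\ref{def:I-cond}, since $\overrightarrow{k}_I\overrightarrow 0$ ranks monotonically in $k$. Axiom~\ref{ax:I-Caution-Main-Text} then forces $C_I=\Delta(I)$, in the style of Ghirardato--Maccheroni--Marinacci: if some $\delta_h\notin C_I$, separate it and build an $r_I$ violating the axiom. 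Hence $r_I\succsim_{\theta_0,I}\overrightarrow k$ iff $\max_{h\in I}r_I(h)\le k$.

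Third, I substitute this into Axiom~\ref{ax:asymptotic-benchmark-dominance}, reading ``$R_n\succsim_{\theta_0,I}\overrightarrow k$'' as $\max_{h\in I}R(n,h)\le k$.
\begin{itemize}
\item For the upper bound $c(R)\le\overline V_{\mathrm{LAM}}(R)+\varepsilon$, I take $k=\overline V_{\mathrm{LAM}}(R)+\varepsilon$. Then $\max_IR(n,\cdot)\le k$ eventually for every $I$, so Asymptotic Consistency gives $R\succsim\overrightarrow k$; letting $\varepsilon\downarrow0$ gives the bound.
\item For the lower bound $c(R)\ge\underline V_{\mathrm{LAM}}(R)$, I use Asymptotic Caution in contrapositive form, together with the strict/weak bookkeeping, against constants $k<\underline V_{\mathrm{LAM}}(R)$. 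In that case some $I$ has $\liminf_n\max_IR(n,\cdot)>k$, so $R_n$ is eventually strictly $I$-worse than $\overrightarrow k$, and this should force $c(R)\ge k$.
\end{itemize}

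The main obstacle is this quantifier and boundary bookkeeping: matching ``eventually'', $\liminf$ versus $\limsup$, and strict versus weak preferences so that Caution yields exactly the $\liminf$ endpoint and Consistency the $\limsup$ endpoint, including the edge cases where $k$ equals an endpoint. I would handle the edges by $\varepsilon$-perturbation plus Constant Mixture Continuity (closedness of the sets of $\beta$). A secondary technical point is Lemma~\ref{lem:I-relevance}-type reasoning: $\succsim_{\theta_0,I}$ depends only on $R_I$, and the GS89 domain $\mathbb R_+^I$ embeds correctly into $\mathcal R_I$.

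\textbf{(ii)$\Rightarrow$(i).} Weak Order, Constant Calibration and Constant Mixture Continuity are immediate, because $V(\overrightarrow k)=k$ for any $\alpha$ (both endpoints equal $k$). For spliced sequences $R_I\overrightarrow0$ with $R$ constant in $n$, both LAM indices equal $\max_{h\in I}R(h)$, regardless of $\alpha$. Hence each $\succsim_{\theta_0,I}$ on $\mathbb R^I_+$ is represented by $\max_{h\in I}$, which satisfies the GS89 axioms and $I$-Maximal Caution. Asymptotic Consistency and Caution then follow from the sandwich $\underline V_{\mathrm{LAM}}\le V\le\overline V_{\mathrm{LAM}}$ by reversing the computations above.
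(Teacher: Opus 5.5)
Your architecture matches the paper's: you identify each $\succsim_{\theta_0,I}$ on $n$-constant sequences as $\max_{h\in I}$ via GS89 and $I$-Maximal Caution, then use Axiom~\ref{ax:asymptotic-benchmark-dominance} to trap a certainty equivalent between $\underline V_{\mathrm{LAM}}$ and $\overline V_{\mathrm{LAM}}$. Your (ii)$\Rightarrow$(i) is fine. Step 1, however, fails as justified.

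The connectedness argument on $\beta\overrightarrow{0}+(1-\beta)\overrightarrow{M}$ needs both closed sets to be nonempty, i.e.\ $\overrightarrow{0}\succsim_{\theta_0}R\succsim_{\theta_0}\overrightarrow{M}$. This does not follow from Weak Order, Constant Calibration and Constant Mixture Continuity, because Axiom~\ref{ax:gen-basics-main-text} deliberately drops monotonicity. For example, take a preference represented by any $V$ with $V(\overrightarrow{k})=k$ on constants but $V(R^*)=-1$ for one nonconstant $R^*$. It satisfies Axiom~\ref{ax:gen-basics-main-text}, yet $R^*\succ_{\theta_0}\overrightarrow{0}$ has no certainty equivalent.

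The bracket must therefore come from Axiom~\ref{ax:asymptotic-benchmark-dominance}, so your steps need reordering. Once Step 2 is done:
\begin{itemize}
\item Asymptotic Caution with $k=0$ gives $\overrightarrow{0}\succsim_{\theta_0}R$; it applies since $0\le\max_{h\in I}R(n,h)$ for every $n$.
\item Asymptotic Consistency with $k=M$ gives $R\succsim_{\theta_0}\overrightarrow{M}$.
\end{itemize}
Then your certainty-equivalent construction works. Step 3 yields $\underline V_{\mathrm{LAM}}(R)\le c(R)\le \overline V_{\mathrm{LAM}}(R)$ by letting $k$ approach the endpoints. No continuity is needed at the boundary, since $c(R)$ is a real number, and Asymptotic Caution applies there directly, not in contrapositive form.

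The paper uses the same ingredients in the other order. Lemma~\ref{lem:ABD-equivalence} first derives the LAM Bounds $\overrightarrow{\underline V_{\mathrm{LAM}}(R)}\succsim_{\theta_0}R\succsim_{\theta_0}\overrightarrow{\overline V_{\mathrm{LAM}}(R)}$, using Constant Mixture Continuity to reach the endpoints. It then applies connectedness on the segment between them, so the certainty equivalent already lies in $[\underline V_{\mathrm{LAM}}(R),\overline V_{\mathrm{LAM}}(R)]$.

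Step 2 has a similar misattribution. Constant Calibration says nothing about $\overrightarrow{k}_I\overrightarrow{0}$, which is not constant on $H$. Since $\succsim_{\theta_0}$ need not be monotone, you cannot transport calibration to $\succsim_{\theta_0,I}$ this way. The strict ordering of constants under $\succsim_{\theta_0,I}$ must come from Axiom~\ref{ax:gs-main-text} itself, as in the paper's Lemma~\ref{lem-gs}:
\begin{itemize}
\item Monotonicity gives weak monotonicity on constants.
\item Suppose $\overrightarrow{a}\sim_{\theta_0,I}\overrightarrow{b}$ for some $a<b$. Then all constants in $[a,b]$ are indifferent.
\item Mixing any pair of constants with the midpoint of $[a,b]$, with small weight on the pair, lands both inside $[a,b]$. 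The ``$\Leftarrow$'' direction of Certainty Independence then makes all constants indifferent.
\item Monotonicity then makes all of $\mathbb R_+^I$ indifferent, contradicting nontriviality.
\end{itemize}
With that in place, your separation argument for $C_I=\Delta(I)$ is fine. The paper's route is shorter: taking $k=\max_{p\in C_I}\int r_I\,dp$ in $I$-Maximal Caution gives $r_I(h)\le \max_{p\in C_I}\int r_I\,dp$ for all $h\in I$ directly.
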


The axioms in Theorem \ref{thm-main-result} constitute a necessary and sufficient set of conditions on researcher behavior such that they are merely \emph{transparently consistent} with conventional efficiency arguments. In this sense, Theorem \ref{thm-main-result} serves as a useful benchmark towards eventually selecting a functional form for LAM risk. However, as we have previously discussed, transparent consistency by itself is not a sufficiently demanding desideratum, since it allows for preferences which violate some basic tenets of rational decision-making. As Example \ref{ex:violations} suggests, these violations are due to the arbitrary dependence of $\alpha(R)$ on $R\in \mathcal{R}_H$. This motivates the focus of the rest of this note on $\alpha$-LAM, which restores the basic axioms we want and brings us one step closer to pinning down a functional form for LAM risk.

\subsection{Remaining axioms for $\boldsymbol{\alpha}$-LAM}
Taking stock, recall that any risk representation of $\succsim_{\theta_0}$ must take a stance on how to aggregate risk across $n\geq 1$ and $h\in H$, and axioms allow us to translate this stance into concrete implications for researcher behavior. In particular, the functional form for $\alpha$-LAM risk summarizes risk via three aggregators---$\sup_{I\in\mathcal{I}}$, $\alpha \liminf_{n\to\infty}+(1-\alpha)\limsup_{n\to\infty}$, and $\max_{h\in I}$---that determine the criterion's attitudes towards uncertainty about $I\in\mathcal{I}$, $n\geq 1$, and $h\in I$, respectively. We have already imposed axioms relating $\succsim_{\theta_0}$ and $\{\succsim_{\theta_0,I}\}_{I\in\mathcal{I}}$ which ensure that the researcher takes the \emph{worst-case} over $I\in \mathcal{I}$ (Axiom \ref{ax:cons-and-caution-main-text}), as well as axioms on each $\succsim_{\theta_0,I}$ which ensure that the researcher takes the \emph{worst-case} over $h\in I$ (Axioms \ref{ax:gs-main-text}-\ref{ax:I-Caution-Main-Text}). It remains to provide axioms on each $\succsim_{\theta_0,I}$ which ensure that the researcher's $I$-conditional attitudes towards uncertainty about $n\geq1$ are described by the aggregator $\alpha \liminf_{n\to\infty}+(1-\alpha)\limsup_{n\to\infty}$ for some $\alpha \in [0,1]$. Recall that we may impose these axioms on the induced $I$-conditional preference $\succsim_{\theta_0,I}$, which ranks \emph{$I$-local risk function sequences} $\mathcal{R}_I$.

\paragraph{Axioms for uncertainty towards $\boldsymbol{n}$.}
We begin with a basic rationality condition which requires \emph{monotonicity} in sample size conditional on $I$.

\begin{axiom}[$I$-Sample Monotonicity]\label{ax:-sample-size-mono-main-text}
For each $R_I,R_I' \in \mathcal{R}_I$,
\[
(R_I)_n \succsim_{\theta_0,I} (R_I')_n \quad \forall n\geq 1 \implies R_I \succsim_{\theta_0,I} R_I'
\]
\end{axiom}

Suppose $R_I$ and $R_I'$ are induced at $\theta_0$ by $\delta$ and $\delta'$, respectively. Axiom \ref{ax:-sample-size-mono-main-text} requires that if the researcher $I$-conditionally prefers the estimator $\delta_n$ to the estimator $\delta_n'$ at every sample size $n\geq1$, then they $I$-conditionally prefer the entire estimator sequence $\delta$ to the entire estimator sequence $\delta'$.

To isolate the researcher's attitudes towards uncertainty about $n\geq 1$, we focus on the researcher's ranking of elements of elements of $\mathcal{R}_I$ which are \emph{constant in $h$}. Formally (as defined in Example \ref{ex:violations}), a \emph{local risk sequence} is a bounded sequence $K=(k_n)_{n\geq 1}$, where each $k_n\geq 0$. Let $B_+^{\mathbb{N}} \subseteq \mathcal{R}_H$ denote the set of local risk sequences.\footnote{By identifying each $K \in B_+^{\mathbb{N}}$ with the local risk function sequence $(\overrightarrow{k}_n)_{n\geq 1} \in \mathcal{R}_H$, we may view $B_+^{\mathbb{N}}$ as a subset of $\mathcal{R}_H$. Hence, any binary relation on $\mathcal{R}_H$ induces a binary relation on $B_+^{\mathbb{N}}$.} Since constant local risk functions may be generated by equivariant-in-law estimators, a local risk sequence may be generated by a sequence of equivariant-in-law estimators whose (constant in $h$) risk may depend on the sample index $n$. For example, the estimator sequences $\gamma$ defined in Example \ref{ex:violations} for bounded sequences $(a_n)_n$ induce local risk sequences.

We therefore impose our axioms on $\succsim_{\theta_0,I}$ restricted to the domain $B_+^{\mathbb{N}}$. Among other representations, \cite{marinacci1998axiomatic} studies representations of the form $K\mapsto \liminf_{n\to\infty} K_n$. We build upon the analysis in \cite{marinacci1998axiomatic} Theorem~7 to offer axioms for representations of the more general form $K\mapsto \alpha \liminf_{n\to\infty} K_n+(1-\alpha) \limsup_{n\to\infty} K_n$ for some $\alpha \in [0,1]$.

Stating these axioms requires several more definitions. Say that $K,K'\in B_+^{\mathbb{N}}$ are \emph{comonotonic} if $(K_n-K_m)(K'_n-K'_m)\geq 0$ for all $m,n\in\mathbb{N}$, and say that a set of local risk sequences is \emph{pairwise comonotonic} if every pair from it is comonotonic. In words, comonotonic sequences exhibit the same ordering of risks across $n$, which implies that any mixture retains this ordering. A useful special case is when $K$ and $K'$ are each monotone decreasing in $n$. In Example \ref{ex:gaussian}, the estimator sequence $\gamma_n$ defined above yields a local risk sequence which is monotone decreasing in $n$ when $|a_n|\downarrow 0$. Next, for each $K \in B_+^{\mathbb{N}}$ and $m \in \mathbb{N}$, define $K^m = (K_{m+1},K_{m+2},\ldots)$ to be the local risk sequence which truncates the first $m$ risks in $K$. In Example \ref{ex:gaussian}, if $K$ is induced at $\theta_0$ by $\gamma$, $K^m$ is induced at $\theta_0$ by the estimator sequence $(\gamma_n^m)_{n\geq 1}$, where each $\gamma_n^m=\Bar{X}_n+a_{m+n}/\sqrt{n}$. Finally, for each $K \in B_+^{\mathbb{N}}$ and each bijection $\pi:\mathbb{N}\rightarrow \mathbb{N}$, define $K^\pi$ to be the local risk sequence satisfying $K^\pi_n = K_{\pi(n)}$ for all $n\in\mathbb{N}$. If $K$ is induced at $\theta_0$ by an equivariant-in-law estimator sequence $\delta$, $K^\pi$ is induced at $\theta_0$ by an equivariant-in-law estimator sequence which, for each sample $n$, reproduces the constant local risk incurred by $\delta_{\pi(n)}$. In Example \ref{ex:gaussian}, if $K$ is induced by
$\gamma_n=\Bar X_n+a_n/\sqrt n$, then $K^\pi$ is induced by $\gamma_n^\pi
=\Bar X_n+a_{\pi(n)}/\sqrt n$.

\begin{axiom}[\citealt{marinacci1998axiomatic} Axioms]\label{ax:mar-98-main-text}
$\succsim_{\theta_0,I}$ on $B_+^{\mathbb{N}}$ satisfies:
\begin{itemize}
    \item[(i)] \textnormal{Nontrivial Weak Order:} $\succsim_{\theta_0,I}$ is nontrivial, complete, and transitive on $B_+^{\mathbb{N}}$.
    \item[(ii)] \textnormal{Archimedean Continuity:} For each $K,K',K'' \in B_+^{\mathbb{N}}$, if $K \succ_{\theta_0,I} K'$ and $K' \succ_{\theta_0,I} K''$, then there exist $\beta,\gamma \in (0,1)$ such that
    \[
    \beta K + (1-\beta)K'' \succ_{\theta_0,I} K'
    \quad\text{and}\quad
    K' \succ_{\theta_0,I} \gamma K + (1-\gamma)K''
    \]
    \item[(iii)] \textnormal{Comonotonic Independence:} For all pairwise comonotonic $K,K',K'' \in B_+^{\mathbb{N}}$ and all $\beta \in (0,1)$,
    \[
    K \succsim_{\theta_0,I} K' \implies \beta K + (1-\beta)K''
    \succsim_{\theta_0,I}
    \beta K' + (1-\beta)K''
    \]
    \item[(iv)] \textnormal{Truncation Invariance:} For each $K \in B_+^{\mathbb{N}}$ and $m \in \mathbb{N}$, $K \sim_{\theta_0,I} K^m$.
    \item[(v)] \textnormal{Permutation Invariance:} For each $K \in B_+^{\mathbb{N}}$ and each bijection $\pi:\mathbb{N}\rightarrow \mathbb{N}$, $K \sim_{\theta_0,I} K^\pi$.
\end{itemize}
\end{axiom}

Axiom \ref{ax:mar-98-main-text}(i)--(ii) are standard regularity conditions. Axiom \ref{ax:mar-98-main-text}(iii) requires that rankings are preserved under randomization with a common alternative when the local risk sequences being compared and the alternative order risk across sample sizes in the same way. For example, suppose $K$, $K'$, and $K''$ are induced at $\theta_0$ by $\gamma$, $\gamma'$, and $\gamma''$ defined as in Example \ref{ex:violations}, where each sequence of local biases is monotone decreasing in $n$. Axiom \ref{ax:mar-98-main-text}(iii) requires that if the researcher prefers $\delta$ to $\delta'$, then they prefer randomizing between $\delta$ and $\delta''$ to randomizing between $\delta'$ and $\delta''$ (where the probability of $\delta''$ is the same). Axioms \ref{ax:mar-98-main-text}(iv)-(v) enforce the asymptotic nature of the criterion, requiring that the researcher is indifferent to truncating and permuting risk sequences.

\paragraph{Characterizing $\boldsymbol{\alpha}$-LAM.} We may now use the axioms we have discussed to characterize $\alpha$-LAM.

\begin{theorem}[$\alpha$-LAM]\label{thm:fixed-alpha-LAM}
The following are equivalent:
\begin{itemize}
    \item[(i)] $\succsim_{\theta_0}$ satisfies Axiom~\ref{ax: basics-main-text}; $(\succsim_{\theta_0},\{\succsim_{\theta_0,I}\}_{I\in\mathcal I})$ satisfy Axiom~\ref{ax:cons-and-caution-main-text}; and each $\succsim_{\theta_0,I}$ satisfies Axioms~\ref{ax:gs-main-text}-\ref{ax:mar-98-main-text}.
    \item[(ii)] $\succsim_{\theta_0}$ has an $\alpha$-LAM risk representation on $\mathcal R_H$ for some $\alpha\in[0,1]$.
\end{itemize}
\end{theorem}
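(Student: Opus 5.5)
Necessity is a direct verification and sufficiency proceeds by building the representation from the inside out: first over $h\in I$, then over $n$, then over $I\in\mathcal I$.

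\textbf{Necessity.} Take $V=\alpha\underline V_{\mathrm{LAM}}+(1-\alpha)\overline V_{\mathrm{LAM}}$ with a fixed constant $\alpha$.
\begin{itemize}
\item Weak Order, Strong Monotonicity and Mixture Continuity hold because both endpoints are monotone and Lipschitz in the sup norm, and they coincide on constants.
\item For each $I$, $(R_I\overrightarrow 0)(n,h)$ vanishes off $I$. Any finite $J$ then has $\max_{h\in J}$ bounded by $\max_{h\in I}$, with equality at $J=I$. Hence $\succsim_{\theta_0,I}$ is represented by $V_I(R)=\alpha\liminf_n\max_{h\in I}R(n,h)+(1-\alpha)\limsup_n\max_{h\in I}R(n,h)$.
\item On $\mathbb R_+^I$, $V_I$ reduces to $r_I\mapsto\max_{h\in I} r_I(h)$, which is GS89 with $C=\Delta(I)$ and satisfies $I$-Maximal Caution.
\item $I$-Sample Monotonicity follows from termwise comparison of the maxima.
\item On $B_+^{\mathbb N}$, $V_I$ is $\alpha\liminf+(1-\alpha)\limsup$. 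Truncation and permutation invariance are clear. Comonotonic additivity holds since $\liminf$ and $\limsup$ are Choquet-type functionals that are additive on comonotonic bounded sequences, via the ordering argument in \cite{marinacci1998axiomatic}.
\item Consistency: if $V_I(R)\le V_I(R')$ for all $I$, then $V(R)=\sup_I V_I(R)\le V(R')$. This uses that $\sup_I$ commutes with the mixture, which holds because $I\mapsto V_I$ is monotone under inclusion and $\mathcal I$ is directed.
\item Caution is immediate from $V\ge V_I$.
\end{itemize}

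\textbf{Sufficiency, Steps 1--2.} Step 1: for each $I$, apply GS89 together with $I$-Maximal Caution. This gives $\succsim_{\theta_0,I}$ on $\mathbb R_+^I$ represented by $\max_{h\in I}r_I(h)$; Maximal Caution forces the prior set to be all of $\Delta(I)$. Step 2: on $B_+^{\mathbb N}$, combine Marinacci's Theorem 7 style argument with the axioms. This yields a Choquet integral with respect to a capacity $\nu_I$ on $\mathbb N$ that is invariant under finite modification and permutation. I would then show such a capacity takes value $0$ on finite sets, $1$ on cofinite sets, and a common constant $1-\alpha_I$ on infinite–coinfinite sets (permutations act transitively on these). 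For this capacity the Choquet integral equals $\alpha_I\liminf+(1-\alpha_I)\limsup$. Bounded-domain subtleties, such as infinite-sum permutations, need care here.

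\textbf{Sufficiency, Step 3 (main obstacle).} Link the two dimensions on all of $\mathcal R_I$. For general $R_I$, set $K_n=\max_{h\in I}R(n,h)$. Step 1 gives $(R_I)_n\sim_{\theta_0,I}\overrightarrow{K_n}$ for each $n$. $I$-Sample Monotonicity, applied in both directions, gives $R_I\sim_{\theta_0,I}K$. Hence $V_I(R)=\alpha_I\liminf_nK_n+(1-\alpha_I)\limsup_nK_n$. I must also check the calibrations agree, i.e. that constants are ranked consistently, which holds via constant sequences.

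\textbf{Sufficiency, Step 4.} Aggregate over $I$. Define $V(R)=\sup_I V_I(R)$ and use Consistency, Caution and Mixture Continuity to show $R\sim_{\theta_0}\overrightarrow{V(R)}$; constant equivalents exist by continuity and monotonicity. Concretely:
\begin{itemize}
\item For $k>V(R)$, every $I$ has $\overrightarrow k\succ_{\theta_0,I}R$, so Consistency gives $\overrightarrow k\succsim_{\theta_0}R$.
\item For $k<V(R)$, some $I$ has $\overrightarrow{k}\prec_{\theta_0,I}R$.
\end{itemize}
The strict reverse direction is delicate, since Caution only goes one way. I would get it from Consistency applied with $R$ and $\overrightarrow k$ swapped, followed by a limit argument using Mixture Continuity.

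The remaining hard point is showing that $\alpha_I$ does not depend on $I$. I would compare $I\subseteq J$ on local risk sequences via $R_I\overrightarrow0$. If $\alpha_I\ne\alpha_J$, take an alternating sequence $K$ and a constant $k$ strictly between the two values; then Consistency and Caution yield contradictory rankings of $K$ against $\overrightarrow k$. Once $\alpha_I\equiv\alpha$ is established, $V=\alpha\underline V_{\mathrm{LAM}}+(1-\alpha)\overline V_{\mathrm{LAM}}$ represents $\succsim_{\theta_0}$.
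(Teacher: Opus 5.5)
The overall architecture is the paper's: GS89 plus $I$-Maximal Caution over $h$, a Choquet capacity argument over $n$, $I$-Sample Monotonicity to glue them, then Consistency and Caution over $I$. However, Step 4 does not go through as written, because both preference symbols in your two bullets are reversed.

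In risk terms, $k>V(R)\ge V_I(R)$ means $R\succ_{\theta_0,I}\overrightarrow{k}$ for every $I$. Consistency then yields $R\succsim_{\theta_0}\overrightarrow{k}$. Your stated conclusion $\overrightarrow{k}\succsim_{\theta_0}R$ is false under the target representation. Likewise, for $k<V(R)$ what you actually have is some $I$ with $V_I(R)>k$, i.e. $\overrightarrow{k}\succ_{\theta_0,I}R$.

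Because of this reversal you conclude that Caution "only goes one way." You then propose to get $\overrightarrow{k}\succsim_{\theta_0}R$ from Consistency with the roles swapped, followed by a limit. That route fails. Swapped Consistency needs $\overrightarrow{k}\succsim_{\theta_0,I}R$, i.e. $k\le V_I(R)$, for \emph{every} $I$. It therefore only reaches $k\le\inf_I V_I(R)$, which can be far below $\sup_I V_I(R)$. For example, take $R(\cdot,h_0)\equiv 1$ and $R(\cdot,h)\equiv 0$ for $h\neq h_0$: then $V(R)=1$ but $V_{\{h\}}(R)=0$. No continuity argument closes that gap.

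The correct split, which is what the paper does, is:
\begin{itemize}
\item Since $V_I(R)\le V(R)$ for all $I$, Consistency applied directly at $k=V(R)$ gives $R\succsim_{\theta_0}\overrightarrow{V(R)}$; no limit is needed.
\item For each $k<V(R)$, the strict ranking $\overrightarrow{k}\succ_{\theta_0,I}R$ is exactly Caution's hypothesis, so $\overrightarrow{k}\succsim_{\theta_0}R$.
\item Letting $k\uparrow V(R)$ through the certainty equivalent of $R$ (which exists by Mixture Continuity and Strong Monotonicity) gives $\overrightarrow{V(R)}\succsim_{\theta_0}R$. The case $V(R)=0$ is immediate from monotonicity.
\end{itemize}

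With Caution oriented correctly, your nested-neighborhood argument for $\alpha_I\equiv\alpha$ works, and it is a slightly more direct route than the paper's. Take $I\subseteq J$, $K=(1,0,1,0,\ldots)$, and $k$ strictly between $1-\alpha_I$ and $1-\alpha_J$. Let $L$ be whichever of $I,J$ gives the smaller value $1-\alpha_L$. The $L$-conditional representation and Strong Monotonicity give $K_L\overrightarrow{0}\succ_{\theta_0}\overrightarrow{k}_L\overrightarrow{0}\succsim_{\theta_0}\overrightarrow{k}$. The other neighborhood evaluates $K_L\overrightarrow{0}$ at its own value, which exceeds $k$, so Caution gives $\overrightarrow{k}\succsim_{\theta_0}K_L\overrightarrow{0}$, a contradiction. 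Arbitrary pairs then follow via $I\cup J$. The paper instead uses uniqueness of normalized representations of $\succsim_{\theta_0,I}$ to get $\alpha_I=\inf_{J\cap I\neq\varnothing}\alpha_J$.

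One smaller omission in Step 2: Axiom~\ref{ax:mar-98-main-text} contains no monotonicity, yet the Choquet (Schmeidler) representation needs it on $B_+^{\mathbb N}$. You must derive it from $I$-Sample Monotonicity together with Step 1's ordering of constants, as the paper's Lemma~\ref{lem:marinacci-1998-rep} does.
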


\paragraph{Characterizing classical and attainment LAM.} To understand what additional axioms select classical and attainment LAM from this class, we consider a motivating example.

\addtocounter{example}{-1}
\begin{example}[Continued]
\label{ex:uncertainty-affinity}
\normalfont Fix any centering value $\theta_0$. Let $A=\{10^k: k\in \mathbb{N}\}$, and define the estimator sequences $\delta,\delta'$ as:
\[
\delta_n(X_n)=
\begin{cases}\Bar{X}_n,& n\in A,\\\Bar{X}_n+1/\sqrt{n},& \text{ else},\end{cases}
\qquad
\delta_n'(X_n)=
\begin{cases}\Bar{X}_n+1/\sqrt{n},&n\in A,\\\Bar{X}_n,& \text{ else}\end{cases}
\]
In words, $\delta$ outputs the sample mean on sample sizes which are powers of 10 and otherwise outputs the sample mean with constant local bias $1$, while $\delta'$ does the opposite. The induced local risk sequences are:
\[
K_n=
\begin{cases}1,&n\in A,\\2,& \text{ else},\end{cases}
\qquad
K_n'=
\begin{cases}2,&n\in A,\\1,&\text{ else}\end{cases}
\]
The mixture $(K+K')/2$ yields the constant local risk sequence $(3/2,3/2,\ldots)$. Assume $\succsim_{\theta_0}$ has an $\alpha$-LAM representation. Then,
\[
V_{\alpha\text{-}\mathrm{LAM}}(K)
=V_{\alpha\text{-}\mathrm{LAM}}(K')=\alpha+(1-\alpha)2
\quad \text{and} \quad
V_{\alpha\text{-}\mathrm{LAM}}((K+K')/2)=3/2
\]
and hence
\[
\frac{1}{2}K+\frac{1}{2}K' \succsim_{\theta_0} K \sim_{\theta_0} K' \iff \alpha \leq 1/2
\]
Hence, comparing mixtures of two local risk sequences to each puts restrictions on the value of $\alpha$. In particular, $(K+K')/2 \succ_{\theta_0} K$ for $\alpha=0$ (attainment LAM) and $K \succ_{\theta_0} (K+K')/2$ for $\alpha=1$ (classical LAM).
\end{example}

Example \ref{ex:uncertainty-affinity} suggests that the key property of preferences which distinguishes classical LAM risk from attainment LAM risk is the researcher's attitudes towards \emph{hedging against uncertainty about $n$}. We may therefore state the sample-size-analog of Axiom \ref{ax:gs-main-text}(v), as well as its opposite.

\begin{axiom}[Sample Uncertainty Aversion]\label{ax:sample-size-uncertainty-aversion}
For each $K,K'\in B_+^{\mathbb N}$,
\[
K\sim_{\theta_0}K'
\implies
\beta K+(1-\beta)K'\succsim_{\theta_0}K
\quad\forall\beta\in(0,1).
\]
\end{axiom}

\begin{axiom}[Sample Uncertainty Affinity]\label{ax:sample-size-uncertainty-affinity}
For each $K,K'\in B_+^{\mathbb N}$,
\[
K\sim_{\theta_0}K'
\implies
K\succsim_{\theta_0}\beta K+(1-\beta)K'
\quad\forall\beta\in(0,1).
\]
\end{axiom}

Sample Uncertainty Aversion requires the researcher to prefer hedging between indifferent local risk sequences and selects $\alpha=0$, while Sample Uncertainty Affinity requires the researcher to dislike hedging between indifferent local risk sequences and selects $\alpha=1$.

\begin{corollary}[Classical and attainment LAM]\label{cor:CLAM-and-ALAM}
Suppose $\succsim_{\theta_0}$ has an $\alpha$-LAM risk representation on $\mathcal{R}_H$.
\begin{itemize}
    \item[(i)] $\succsim_{\theta_0}$ satisfies Axiom \ref{ax:sample-size-uncertainty-aversion} if and only if $\succsim_{\theta_0}$ is represented by $\overline{V}_{\mathrm{LAM}}$ on $\mathcal{R}_H$.
    \item[(ii)] $\succsim_{\theta_0}$ satisfies Axiom \ref{ax:sample-size-uncertainty-affinity} if and only if $\succsim_{\theta_0}$ is represented by $\underline{V}_{\mathrm{LAM}}$ on $\mathcal{R}_H$.
\end{itemize}
\end{corollary}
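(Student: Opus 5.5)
The argument works directly with the $\alpha$-LAM representation $V=\alpha\underline V_{\mathrm{LAM}}+(1-\alpha)\overline V_{\mathrm{LAM}}$. Since Axioms \ref{ax:sample-size-uncertainty-aversion} and \ref{ax:sample-size-uncertainty-affinity} only concern local risk sequences $K\in B_+^{\mathbb N}$, the first step is to note that on this subdomain the representation simplifies. For $K$ constant in $h$, $\max_{h\in I}K_n=k_n$ for every $I$, so
\[
V(K)=\alpha\liminf_n k_n+(1-\alpha)\limsup_n k_n .
\]
Each part then has two directions, which I handle separately.

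\emph{Sufficiency.} For (i), if $\alpha=0$ then $V(K)=\limsup_n k_n$, and $\limsup$ is subadditive and positively homogeneous. Hence for $K\sim_{\theta_0}K'$, that is, $\limsup K=\limsup K'=c$,
\[
\limsup(\beta K+(1-\beta)K')\le \beta c+(1-\beta)c=c .
\]
This gives the mixture $\succsim_{\theta_0} K$. For (ii), if $\alpha=1$ then $\liminf$ is superadditive, so the mixture has $\liminf\ge c$, which gives $K\succsim_{\theta_0}$ mixture.

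\emph{Necessity.} Here I reuse the construction of Example \ref{ex:uncertainty-affinity}. Take $K$ equal to $1$ on $A=\{10^k\}$ and $2$ elsewhere, and let $K'$ be the swapped sequence. Then $\liminf K=\liminf K'=1$ and $\limsup K=\limsup K'=2$, so $V(K)=V(K')=2-\alpha$ and $K\sim_{\theta_0}K'$ for every $\alpha$. Every mixture $\beta K+(1-\beta)K'$ takes only the two values $\beta+2(1-\beta)$ and $2\beta+(1-\beta)$, each infinitely often. Its $\liminf$ is therefore $\min(2-\beta,1+\beta)$ and its $\limsup$ is $\max(2-\beta,1+\beta)$. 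At $\beta=1/2$ both equal $3/2$, so $V=3/2$.
\begin{itemize}
\item For (i), Sample Uncertainty Aversion requires $3/2\le 2-\alpha$, which gives only $\alpha\le 1/2$, so one more step is needed. I would sharpen it by replacing the values $1,2$ with $0,1$ and using a general $\beta\in(0,1)$. Then the mixture takes the values $\beta$ and $1-\beta$ infinitely often, and with $\beta=1/2$ it equals $1/2$ constantly. The indifferent pair has $V=1-\alpha$, so the axiom requires $1/2\le 1-\alpha$, again only $\alpha\le1/2$. To force $\alpha=0$, I would instead use a pair $K,K'$ with $\liminf K<\liminf K'$ but $\limsup K>\limsup K'$, tuned so that $V(K)=V(K')$, and whose mixture is constant. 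For instance, take $K$ alternating between $0$ and $a$ and $K'$ alternating between $b$ and $c$, in the opposite phase, with $b+0=c+a$ so that the half-mixture is constant at $b/2$. Indifference requires $(1-\alpha)a=\alpha\min(b,c)+(1-\alpha)\max(b,c)$. A direct computation should show that the axiom's requirement $b/2\le V(K)$ fails whenever $\alpha>0$ for a suitable choice of $a,b,c$.
\item For (ii), Sample Uncertainty Affinity on the basic pair gives $2-\alpha\le 3/2$, that is, $\alpha\ge1/2$. The analogous asymmetric family should then push $\alpha$ up to $1$.
\end{itemize}

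The main obstacle is this parameter tuning: the symmetric example only yields the half-line $\alpha\le 1/2$ or $\alpha\ge 1/2$, not the endpoint. Finding a family of indifferent pairs whose constant mixtures separate $V$ for every $\alpha\in(0,1]$ (respectively $[0,1)$) is the delicate step. My first attempt would be sequences with unequal oscillation ranges, using the freedom to choose $a,b,c$ as functions of $\alpha$. I expect the required inequality to reduce to $\alpha\cdot(\text{positive range gap})\le 0$.
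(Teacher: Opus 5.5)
Your sufficiency arguments are correct and match the paper's: subadditivity of $\limsup$ when $\alpha=0$, and superadditivity of $\liminf$ when $\alpha=1$. The gap is in necessity, and the repair you sketch cannot close it.

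In your family, $b=a+c$ with $a>0$ and $c\geq 0$ gives $\min(b,c)=c$ and $\max(b,c)=a+c$. Indifference then reads $(1-\alpha)a=(1-\alpha)a+c$, which forces $c=0$ and collapses back to the symmetric swap.

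More generally, no pair whose mixture is constant can yield more than $\alpha\le 1/2$. If $\beta K+(1-\beta)K'=\overrightarrow{c}$, then $K'=(c-\beta K)/(1-\beta)$. Writing $U:=\limsup_n K_n$ and $L:=\liminf_n K_n$, the indifference condition $V(K)=V(K')=v$ gives $c-v=\beta(2\alpha-1)(U-L)$. So on such pairs Sample Uncertainty Aversion is equivalent to $\alpha\le 1/2$, and Sample Uncertainty Affinity to $\alpha\geq 1/2$, however $a,b,c,\beta$ are tuned.

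The same obstruction holds for any two sequences that are two-valued on a common partition of $\mathbb N$ into two infinite cells. There the index equals $\alpha(x_1+x_2)+(1-2\alpha)\max(x_1,x_2)$. This is convex for $\alpha\le 1/2$ and concave for $\alpha\geq 1/2$, so hedging never hurts (respectively, never helps) across those entire ranges.

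The missing idea is to use three infinite cells, so that the mixture keeps an extreme value on a region where both sequences agree. Partition $\mathbb N$ into infinite sets $A,B,C$.

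For (i), take $K=\mathbf 1_{A\cup B}$ and $K'=\mathbf 1_{A\cup C}$. Both have $\liminf=0$ and $\limsup=1$, so $V(K)=V(K')=1-\alpha$. The half-mixture equals $1$ on $A$ and $\tfrac12$ on $B\cup C$, so its index is $1-\alpha/2$. Aversion requires $1-\alpha/2\le 1-\alpha$, i.e.\ $\alpha=0$.

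For (ii), take $K=\mathbf 1_B$ and $K'=\mathbf 1_C$. Again $V(K)=V(K')=1-\alpha$, but the half-mixture has $\liminf=0$ and $\limsup=\tfrac12$, so its index is $(1-\alpha)/2$. Affinity requires $1-\alpha\le (1-\alpha)/2$, i.e.\ $\alpha=1$.

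This is the paper's construction. In (i), hedging raises the $\liminf$ without lowering the $\limsup$. In (ii), hedging lowers the $\limsup$ without changing the $\liminf$. That is what isolates the endpoints.
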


As the running example demonstrates, Sample Uncertainty Aversion seems normatively appealing: when two procedures are indifferent, randomizing between them hedges their sample-size-dependent risks. We therefore uniquely recommend attainment LAM as the functional form for LAM risk. Corollary~\ref{cor:CLAM-and-ALAM} provides a precise axiomatic foundation for this recommendation.

\setstretch{0.8}
\small 
\setlength{\bibsep}{0pt}
\bibliography{WorksCited}

\newpage 

\setstretch{1.2}
\normalsize 
\appendix

\section{Appendix}
\label{app:proofs}

\subsection{Discussion of LAN}
\label{app:LAN}
In this section, we provide a brief exposition of the \emph{local asymptotic normality} \eqref{eq:LAN} condition in Assumption \ref{as:1} for unfamiliar readers. For a more detailed treatment, see \cite{van2000asymptotic} Chapter 7. The key idea is to observe that the LHS of \eqref{eq:LAN} is a \emph{log-likelihood ratio} in the local experiment $\mathcal{P}_{\theta_0,n,H}$ that measures the relative strength of evidence in favor of the local parameter $h_n$ vs. $0$. Note that the log-likelihood ratio process of the Gaussian shift experiment $\mathcal{G}_{\theta_0,H}$ is 
\[
\log \frac{dN(h,I_{\theta_0}^{-1})}{dN(0,I_{\theta_0}^{-1})}(X)=h' I_{\theta_0}X-\frac{1}{2}h'I_{\theta_0}h
\]
where at the parameter $0$, $I_{\theta_0}X \sim N(0,I_{\theta_0})$. \eqref{eq:LAN} therefore ensures that, for sufficiently large $n$, the distribution of $\mathcal{P}_{\theta_0,n,H}$'s log-likelihood ratio process under the law $P_{\theta_0,n}$ approximates the distribution of $\mathcal{G}_{\theta_0,H}$'s log-likelihood ratio process under the law $N(0,I_{\theta_0}^{-1})$, for any finite subset of $H$. Hence, the limiting statistical difficulty of distinguishing parameters $\theta_0+h_n/\sqrt{n}$ and $\theta_0$ based on the experiment $\mathcal{P}_{\theta_0,n,H}$ equals the statistical difficulty of distinguishing parameters $h$ and $0$ based on the Gaussian shift experiment $\mathcal{G}_{\theta_0,H}$. It is in this sense that the sequence of local experiments $\mathcal{P}_{\theta_0,n,H}$ converges to the limit experiment $\mathcal{G}_{\theta_0,H}$.\footnote{More generally, they converge as statistical experiments in the sense of Le Cam (see, e.g., \citealt{van2000asymptotic} Definition 9.1).}

In particular, at the parameter $0$ in $\mathcal{G}_{\theta_0,H}$, $X \sim N(0,I_{\theta_0}^{-1})$ and the log-likelihood ratio is distributed 
\[
\log \frac{dN(h,I_{\theta_0}^{-1})}{dN(0,I_{\theta_0}^{-1})}(X) \sim N\left(-\frac{1}{2}h'I_{\theta_0}h,h'I_{\theta_0}h\right)
\]
which has negative mean and nontrivial variance for any $h\neq 0$. In words, the evidence is (correctly) in favor of parameter $0$ on average, but the difficulty of distinguishing $h$ and $0$ is nontrivial. The limiting stabilization in problem difficulty is obtained by letting the parameter $\theta_0+h_n/\sqrt{n}$ approach $\theta_0$ at the same rate as sampling-based uncertainty vanishes.

\subsection{Proofs of results in Section~\ref{sec:setup}}
\label{app-sec-2-proofs}

\begin{proof}[Proof of Proposition \ref{prop:stable-local-risks}]
Let $\delta$ be an estimator sequence satisfying \eqref{eq:stable-local-risks} for some $r_{\theta_0,\delta}: H\rightarrow \overline{\mathbb{R}}_+$. Fix any $I\in\cI$. There are two cases. First, suppose $\max_{h\in I} r_{\theta_0,\delta}(h)<+\infty$. By \eqref{eq:stable-local-risks}, $\max_{h\in I} R_{\theta_0,\delta}(n,h)<+\infty$ eventually for sufficiently large $n$. Since $I$ is finite, \eqref{eq:stable-local-risks} implies
\[
\left|
\max_{h\in I}R_{\theta_0,\delta}(n,h)
-
\max_{h\in I}r_{\theta_0,\delta}(h)
\right|
\leq
\max_{h\in I}
\left|R_{\theta_0,\delta}(n,h)-r_{\theta_0,\delta}(h)\right|
\to 0.
\]
Hence, 
\[
\limsup_{n\to\infty} \sup_{h \in I} R_{\theta_0,\delta}(n,h)=\liminf_{n\to\infty} \sup_{h \in I} R_{\theta_0,\delta}(n,h)=\max_{h\in I} r_{\theta_0,\delta}(h)
\]
Second, suppose $r_{\theta_0,\delta}(h)=+\infty$ for some $h\in I$. By \eqref{eq:stable-local-risks}, $R_{\theta_0,\delta}(n,h)\leq \max_{h'\in I} R_{\theta_0,\delta}(n,h') \to +\infty=\max_{h\in I} r_{\theta_0,\delta}(h)$, so
\[
\limsup_{n\to\infty} \sup_{h \in I} R_{\theta_0,\delta}(n,h)=\liminf_{n\to\infty} \sup_{h \in I} R_{\theta_0,\delta}(n,h)=\max_{h\in I} r_{\theta_0,\delta}(h)=+\infty
\]
Finally, taking the supremum over $I\in\cI$ gives
\[
\CVLAM(\delta)
=
\overline{V}_{\mathrm{LAM},\theta_0}(\delta)
=
\sup_{I\in\cI}\max_{h\in I}r_{\theta_0,\delta}(h)=\sup_{h\in H}r_{\theta_0,\delta}(h)
\]
since $\cI$ contains every singleton. For each transparently consistent $V_{\theta_0}$,
\[
\CVLAM(\delta)\leq V_{\theta_0}(\delta)\leq \AVLAM(\delta)
\]
we conclude that each inequality in the displayed equation above binds, as desired.

\end{proof}

\begin{proof}[Proof of Corollary \ref{cor:regular-stable-losses}]
By Theorem~2.20 of \cite{van2000asymptotic}, for each $h \in H$,
\[
R_{\theta_0,\delta}(n,h)=E_{\theta_0,n,h}[\ell(Z_{\theta_0,\delta,n}(h))]
\to
E\!\left[\ell\!\left(Z_{\theta_0,\delta}(h)\right)\right]
\]
Conclude by applying Proposition~\ref{prop:stable-local-risks}.
\end{proof}

\begin{proof}[Proof of Observation \ref{obs:gen-alpha-LAM-iff-obv-cons}]
Part (i) follows since, for any $\alpha_{\theta_0}: \mathcal{D} \rightarrow [0,1]$,
\[
\CVLAM(\delta)\leq V_{\mathrm{gen}\text{-}\alpha\text{-}\mathrm{LAM},\theta_0}(\delta) \leq \AVLAM(\delta) \quad \forall \delta \in \mathcal{D}
\]
For part (ii), let $V_{\theta_0}$ be a transparently consistent risk index at $\theta_0$. Define 
\[
\alpha_{\theta_0}(\delta)=\begin{cases} 
      \frac{\AVLAM(\delta)-V_{\theta_0}(\delta)}{\AVLAM(\delta)-\CVLAM(\delta)} & \CVLAM(\delta)<\AVLAM(\delta)<+\infty \\
      \frac{1}{2} & \CVLAM(\delta)=\AVLAM(\delta)
\end{cases}
\]
and extend it arbitrarily to $\mathcal{D}$.
\end{proof}

\subsection{Anscombe--Aumann environment}\label{app:AA}
In Appendix \ref{app:AA}, we study preferences in the decision environment of \emph{Anscombe--Aumann acts}, which is a standard decision environment in the microeconomic decision theory literature on choice under uncertainty. The main goal of this section is to provide a set of standard conditions on these more primitive preferences which allow us to study the induced preferences over \emph{utility act sequences}. This eventually allows us to derive preferences in the main text's decision environment of \emph{local risk function sequences}. For the sake of completeness, we also begin by studying preferences which do not condition on a centering value $\theta_0 \in \Theta$. Once we obtain the induced preferences over utility act sequences, we derive the $\theta_0$-conditional preferences studied in the main text.

\paragraph{Anscombe--Aumann act sequences.} Since the estimand takes values in $\mathbb{R}^m$, let $E=\mathbb{R}^m$ be the set of possible values of estimation errors. Let $Y$ be the set of Borel probability distributions (lotteries) over $E$. Let $Y^s$ be the set of \emph{simple} (finitely-supported) lotteries over $E$. For $e \in E$, let $D[e]$ be the Dirac lottery on $e$. An \emph{act} is a function $f: \Theta \times H \rightarrow Y$. An \emph{act sequence} is a function $f: \mathbb{N}\times \Theta \times H \rightarrow Y$. Let $\mathcal{F}_{AA}$ be the set of all act sequences. 

\begin{example-app}
\label{ex:induced-AA-acts}
\normalfont Recall that $Z_{\theta_0,\delta,n}(h)$ is the \emph{local estimation error} of estimator sequence $\delta$ at $\theta_0\in \Theta$, $n\geq 1$, and $h \in H$. Let $\mathcal{L}_\delta(n,\theta_0,h) \in Y$ denote the law of $Z_{\theta_0,\delta,n}(h)$ under $X_n \sim P_{\theta_0,n,h}$. Let $f_\delta \in \mathcal{F}_{AA}$ be the \emph{act sequence induced by $\delta$}, defined as $f_\delta(n,\theta_0,h)=\mathcal{L}_\delta(n,\theta_0,h)$. $f_\delta$ maps each $\theta_0,n,h$ to the distribution over local estimation errors induced by $\delta_n$ at the centering value $\theta_0$, when the local parameter is $h$ and the data is drawn according to $P_{\theta_0,n,h}$.
\end{example-app}

We take as primitive a binary relation $\succsim$ on $\mathcal{F}_{AA}$, which we interpret as the researcher's preferences over act sequences (which are collections of distributions over estimation errors indexed by the sample size, centering value, and local parameter). For each $p \in Y$, let $\overrightarrow{p} \in \mathcal{F}_{AA}$ be the act sequence defined as: $\overrightarrow{p}(n,\theta,h)=p$ for all $(n,\theta,h)$.\footnote{By identifying each $p \in Y$ with $\overrightarrow{p} \in \mathcal{F}_{AA}$, we may view $Y$ as a subset of $\mathcal{F}_{AA}$.} 

\begin{definition-app}
\begin{itemize}
    \item[(i)] An act sequence $f \in \mathcal{F}_{AA}$ is \emph{$\succsim$-bounded} if there exist $p,q \in Y^s$ such that
    \[
    \overrightarrow{p}\succsim \overrightarrow{f(n,\theta,h)} \ \ \text{ and }\ \ \overrightarrow{f(n,\theta,h)}\succsim \overrightarrow{q} \quad \forall (n,\theta,h)
    \]
    \item[(ii)] An act sequence $f \in \mathcal{F}_{AA}$ is \emph{$\succsim$-measurable} if, for each $p \in Y^s$, the following set is measurable:
    \[
    \Big\{(n,\theta,h) \in \mathbb{N}\times \Theta \times H: \overrightarrow{p}\succsim \overrightarrow{f(n,\theta,h)}\Big\}
    \]
\end{itemize}
\end{definition-app}

Let $\mathcal{F}_{AA}(\succsim)$ be the set of $\succsim$-bounded, $\succsim$-measurable, $Y^s$-valued act sequences. If the restriction of $\succsim$ to $Y^s$ is a weak order, $Y^s\subseteq \mathcal{F}_{AA}(\succsim)$. 

\paragraph{Utility act sequences.} A \emph{utility act sequence} is a bounded, measurable function $F:\mathbb N\times\Theta\times H\to\mathbb R_-$. Let $\mathcal F$ be the set of utility act sequences. A \emph{local utility act sequence} is a bounded, measurable function $F:\mathbb N\times H\to\mathbb R_-$. Let $\mathcal F_H$ denote the set of local utility act sequences.\footnote{By identifying each $F \in \mathcal{F}_H$ with the constant-in-$\theta$ utility act sequence $(n,\theta,h)\mapsto F(n,h)$, we may view $\mathcal{F}_H$ as a subset of $\mathcal{F}$.} For each $\theta_0\in\Theta$ and $F \in \mathcal{F}$, let $F_{\theta_0} \in \mathcal{F}_H$ be the \emph{local version of $F$ at $\theta_0$}: $F_{\theta_0}(n,h):=F(n,\theta_0,h)$. 

\paragraph{LAM representations on act sequences.} Let $\mathcal{I}$ be the set of nonempty finite subsets of $H$.

\begin{definition-app}
\label{defn:gen-alpha-LAM-AA}
\begin{itemize}
    \item[(i)] $\succsim$ has a \emph{generalized $\alpha$-LAM utility representation} on $\mathcal F_{AA}(\succsim)$ if there exist an onto, affine function $u:Y^s\to(-\infty,0]$ with $u(D[0])=0$ and a function $a:\mathcal F_H\to[0,1]$ such that $u$ represents the restriction of $\succsim$ to $Y^s$, and for each $f,g\in\mathcal F_{AA}(\succsim)$,
    \[
    f\succsim g
    \iff
    U_a((u\circ f)_{\theta_0})
    \geq
    U_a((u\circ g)_{\theta_0})
    \quad\forall\theta_0\in\Theta
    \]
    where for $F \in \mathcal{F}_H$:
    \begin{align*}
    \label{eq:gen-alpha-LAM-AA}
        U_a(F)=a(F) \inf_{I\in \mathcal{I}}\limsup_{n\to\infty} \min_{h\in I} F(n,h)+(1-a(F)) \inf_{I\in \mathcal{I}}\liminf_{n\to\infty} \min_{h\in I} F(n,h)
    \end{align*}
    \item[(ii)] $\succsim$ has an \emph{$\alpha$-LAM utility representation} on $\mathcal F_{AA}(\succsim)$ if it has a generalized $\alpha$-LAM utility representation on $\mathcal F_{AA}(\succsim)$ where $a$ is constant.
\end{itemize}
\end{definition-app}

\addtocounter{example-app}{-1}
\begin{example-app}[Continued]
\normalfont Suppose $(u,a)$ is a generalized $\alpha$-LAM representation of $\succsim$ on $\mathcal{F}_{AA}(\succsim)$, where $u(p)=-\int \ell \ dp$. Hence,
\[
-(u\circ f_{\delta})_{\theta_0}(n,h)=-u(f_\delta(n,\theta_0,h))=E_{\theta_0,n,h} \ell(Z_{\theta_0,\delta,n})=R_{\theta_0,\delta}(n,h)
\]
as defined in Definition \ref{defn:lrfs}. Hence, for each $\delta,\delta' \in \mathcal{D}$ with $f_{\delta},f_{\delta'} \in \mathcal{F}_{AA}(\succsim)$,
\[
f_{\delta} \succsim f_{\delta'} \iff V_{\mathrm{gen}-\alpha-\mathrm{LAM},\theta_0}(\delta') \geq V_{\mathrm{gen}-\alpha-\mathrm{LAM},\theta_0}(\delta) \quad \forall \theta_0 \in \Theta
\]
as defined in Definition \ref{defn-gen-alpha-LAM-estimators}, with $\alpha_{\theta_0}(\delta)=a(-R_{\theta_0,\delta})$. In this sense, the preference over estimator sequences expressed in Definition \ref{defn:gen-alpha-LAM-AA}(i) is the unanimity rule established by the set of preferences over estimator sequences expressed in Definition \ref{defn-gen-alpha-LAM-estimators}, where $\delta$ is \emph{unconditionally} preferred to $\delta'$ if it achieves lower generalized $\alpha$-LAM risk local to $\theta_0$, \emph{for every centering value $\theta_0$}.
\end{example-app}

Note that the function $u$ in an $\alpha$-LAM utility representation is unique up to positive scaling: if $(u,\alpha)$ represents $\succsim$ on $\mathcal F_{AA}(\succsim)$, then $(cu,\alpha)$ does as well for any $c>0$. Also note that if $(u,a)$ is a generalized $\alpha$-LAM representation of $\succsim$ on $\mathcal{F}_{AA}(\succsim)$, then since $u$ represents $\succsim$ on $Y^s$, $Y^s \subseteq \mathcal F_{AA}(\succsim)$. Since for each $p\in Y^s$, $U_a(\overrightarrow{u(p)})=u(p)$, the second requirement is consistent with the first.

\begin{axiom-app}[Unbounded EU]\label{ax:unbounded-eu}
$\succsim$ restricted to $Y^s$ satisfies:
\begin{itemize}
    \item[(i)] \textnormal{Weak Order:} $\succsim$ is complete and transitive on $Y^s$.
    \item[(ii)] \textnormal{Independence:} For each $p,q,r \in Y^s$ and $\alpha \in (0,1)$, $p\succ q$ implies $\alpha p+(1-\alpha)r \succ \alpha q+(1-\alpha)r$.
    \item[(ii)] \textnormal{Mixture Continuity:} For each $p,q,r \in Y^s$ with $p \succ q \succ r$, the following sets are open in $[0,1]$:
    \[
    \{\alpha \in [0,1]: \alpha p+(1-\alpha)r\succ q\} \quad \text{ and } \quad \{\alpha \in [0,1]: q\succ \alpha p+(1-\alpha)r\}
    \]
    \item[(iv)] \textnormal{Unboundedness From Below:} There exist $p\succ q$ in $Y^s$ such that, for each $\alpha \in (0,1)$, there exists $r \in Y^s$ with $q \succ \alpha r+(1-\alpha)p$.
    \item[(v)] \textnormal{Zero is best:} $D[0]\succsim D[z]$ for all $z\in E$.
\end{itemize}
\end{axiom-app}

\begin{lemma-app}
\label{lem:eu}
$\succsim$ restricted to $Y^s$ satisfies Unbounded EU if and only if it has an onto, affine utility representation $u: Y^s \rightarrow (-\infty,0]$ with $u(D[0])=0$.
\end{lemma-app}

\begin{proof}[Proof of Lemma \ref{lem:eu}]
Necessity is straightforward, so we prove sufficiency. By the main result of \cite{herstein1953axiomatic}, there exists an affine utility representation $v: Y^s\rightarrow \mathbb{R}$.\footnote{See, e.g., Theorem 5.27 of \cite{strzalecki2023decision}.} By Axiom \ref{ax:unbounded-eu}(v), $v(p)\leq v(D[0])$ for all $p\in Y^s$. By Axiom \ref{ax:unbounded-eu}(iv), $v(Y^s)$ is unbounded below. Since $v$ is affine, $v(Y^s)$ is convex, so $v(Y^s)=(-\infty,v(D[0])]$. Define $u:=v-v(D[0])$.
\end{proof}

The next axiom ensures that $\succsim$ induces a well-defined binary relation on the domain of utility act sequences.

\begin{axiom-app}[Utility Equivalence]\label{ax:mono}
$\succsim$ is transitive on $\mathcal F_{AA}(\succsim)$ and, for each $f,g\in\mathcal F_{AA}(\succsim)$,
\[
\overrightarrow{f(n,\theta,h)}\sim\overrightarrow{g(n,\theta,h)}
\quad\forall(n,\theta,h) \implies f\sim g.
\]
\end{axiom-app}

Whenever $\succsim$ satisfies Lemma \ref{lem:eu}, for each onto, affine function $u:Y^s\to(-\infty,0]$ with $u(D[0])=0$ which represents $\succsim$ on $Y^s$, define the binary relation $\succsim^u$ on $\mathcal F$ as:
\[
F\succsim^u G \iff f\succsim g
\text{ for some }f,g\in\mathcal F_{AA}(\succsim)
\text{ with }u\circ f=F\text{ and }u\circ g=G.
\]
Utility Equivalence ensures that the definition of $\succsim^u$ does not depend on the particular preimages $f,g$. We may define the analog of Definition \ref{defn:gen-alpha-LAM-AA} for $\succsim^u$. 

\begin{definition-app}
\label{defn:alpha-LAM-util}
\begin{itemize}
    \item[(i)] $\succsim^u$ has a \emph{generalized $\alpha$-LAM utility representation} on $\mathcal{F}$ if there exists a function $a:\mathcal{F}_H \rightarrow [0,1]$ such that, for each $F,G\in\mathcal F$,
    \[
    F\succsim^u G
    \iff
    U_a(F_{\theta_0})
    \geq
    U_a(G_{\theta_0})
    \quad\forall\theta_0\in\Theta
    \]
    \item[(ii)] $\succsim^u$ has an \emph{$\alpha$-LAM utility representation} on $\mathcal F$ if it has a generalized $\alpha$-LAM utility representation on $\mathcal F$ where $a$ is constant.
\end{itemize}
\end{definition-app}

The goal of Appendix \ref{app:AA} is to reduce the axiomatic exercise to studying $\succsim^u$, the induced preference over utility act sequences. To this end, the following lemma adapts standard arguments in the decision theory literature to our setting.\footnote{See, e.g., the arguments between Lemmas 1 and 2 of \cite{ball2023simple}.}

\begin{lemma-app}\label{lem:AA}
Fix any $a: \mathcal{F}_H\rightarrow [0,1]$ and onto, affine function $u:Y^s\to(-\infty,0]$ with $u(D[0])=0$. The following are equivalent:
\begin{itemize}
    \item[(i)] $\succsim$ satisfies Axiom \ref{ax:mono}, $u$ represents $\succsim$ on $Y^s$, and $a$ is a generalized $\alpha$-LAM utility representation of $\succsim^u$ on $\mathcal F$.
    \item[(ii)] $(u,a)$ is a generalized $\alpha$-LAM utility representation of $\succsim$ on $\mathcal F_{AA}(\succsim)$.
\end{itemize}
\end{lemma-app}

\begin{proof}[Proof of Lemma \ref{lem:AA}]
\underline{Forwards direction}: suppose $\succsim$ and $(u,a)$ satisfy part (i). Then, $u$ represents $\succsim$ on $Y^s$ by assumption. Fix any $f,g\in\mathcal F_{AA}(\succsim)$ and define $f_u=u\circ f$ and $g_u=u\circ g$. By definition of $\mathcal F_{AA}(\succsim)$ and since $u$ represents $\succsim$ on $Y^s$, $f_u,g_u\in \mathcal{F}$. Note that
\[
f\succsim g
\iff
f_u\succsim^u g_u.
\]
where the forwards implication follows from the definition of $\succsim^u$ and the backwards direction follows from Utility Equivalence. Since $a$ represents $\succsim^u$ on $\mathcal{F}$,
\[
f\succsim g
\iff
    U_a((f_u)_{\theta_0})
\geq
    U_a((g_u)_{\theta_0})
\quad\forall\theta_0\in\Theta
\]
Hence, $(u,a)$ represents $\succsim$ on $\mathcal{F}_{AA}(\succsim)$.

\underline{Backwards direction}: suppose $\succsim$ and $(u,a)$ satisfy part (ii). By definition, $u$ represents $\succsim$ on $Y^s$. $\succsim$ is transitive on $\mathcal{F}_{AA}(\succsim)$ because it has a utility representation on $\mathcal{F}_{AA}(\succsim)$. To verify the rest of Utility Equivalence, suppose $\overrightarrow{f(n,\theta,h)}\sim\overrightarrow{g(n,\theta,h)}$ for all $(n,\theta,h)$. Then, $f_u=g_u$ implies $U_a((f_u)_{\theta_0})=U_a((g_u)_{\theta_0})$ for all $\theta_0\in \Theta$ implies $f \sim g$. Hence, Utility Equivalence holds.

It remains to show that $a$ is a generalized $\alpha$-LAM utility representation of $\succsim^u$ on $\mathcal F$. Fix any $F,G\in\mathcal F$, and choose $c<0$ with $c\leq \min\{\inf F,\inf G\}$. Since $u$ is onto, we may choose $q\in Y^s$ with $u(q)=c$. For $t\in[c,0]$, set
\[
p_t:=\frac{c-t}{c}D[0]+\frac{t}{c}q.
\]
Note that $p_t\in Y^s$ and $u(p_t)=t$. Define $f,g\in \mathcal{F}_{AA}$ as $f(n,\theta,h):=p_{F(n,\theta,h)}$ and $g(n,\theta,h):=p_{G(n,\theta,h)}$. Then, $f,g\in \mathcal F_{AA}(\succsim)$ and satisfy $u\circ f=F$ and $u\circ g=G$. Hence,
\[
F\succsim^u G
\iff f \succsim g \iff U_a(F_{\theta_0})
\geq
U_a(G_{\theta_0})
\quad\forall\theta_0\in\Theta.
\]
where the first equivalence is due to Utility Equivalence and the definition of $\succsim^u$, and the second equivalence is because $(u,a)$ represents $\succsim$ on $\mathcal F_{AA}(\succsim)$ and $f_u=F,g_u=G$.
\end{proof}

Since the function $u$ in an $\alpha$-LAM utility representation is unique up to positive scaling, the following lemma follows by an analogous argument.

\begin{lemma-app}\label{lem:AA-fixed-alpha}
Fix any $\alpha \in [0,1]$. The following are equivalent:
\begin{itemize}
    \item[(i)] $\succsim$ satisfies Axioms~\ref{ax:unbounded-eu} and \ref{ax:mono}, and \emph{for each} onto, affine function $u:Y^s\to(-\infty,0]$ with $u(D[0])=0$ which represents $\succsim$ on $Y^s$, $\alpha$ is an $\alpha$-LAM utility representation of $\succsim^u$ on $\mathcal F$.
    \item[(ii)] There exists $u$ such that $(u,\alpha)$ is an $\alpha$-LAM utility representation of $\succsim$ on $\mathcal F_{AA}(\succsim)$.
\end{itemize}
\end{lemma-app}

\begin{proof}[Proof of Lemma \ref{lem:AA-fixed-alpha}]
\underline{Forwards direction}: suppose part (i) holds. By Unbounded EU, there exists an onto, affine function $u:Y^s\to(-\infty,0]$ with $u(D[0])=0$ which represents $\succsim$ on $Y^s$. By assumption, $\alpha$ is an $\alpha$-LAM utility representation of $\succsim^u$ on $\mathcal{F}$. By an exactly analogous argument as the proof of the forwards direction of Lemma \ref{lem:AA}, $(u,\alpha)$ represents $\succsim$ on $\mathcal{F}_{AA}(\succsim)$.

\underline{Backwards direction}: suppose part (ii) holds and let $(u,\alpha)$ represent $\succsim$ on $\mathcal{F}_{AA}(\succsim)$. Unbounded EU and Utility Equivalence follow from previous arguments. By cardinal uniqueness of EU, $v: Y^s\rightarrow (-\infty,0]$ is an onto, affine function with $v(D[0])=0$ which represents $\succsim$ on $Y^s$ if and only if $v=cu$ for some $c>0$. Fix any such $v$: by positive-scaling uniqueness of $\alpha$-LAM, $(v,\alpha)$ also represents $\succsim$. By an exactly analogous argument as the proof of the backwards direction of Lemma \ref{lem:AA}, $\alpha$ represents $\succsim^v$.
\end{proof}

With Lemma \ref{lem:AA} in hand, assume that $\succsim$ satisfies Axioms~\ref{ax:unbounded-eu} and \ref{ax:mono}, fix an onto, affine $u: Y^s\rightarrow (-\infty,0]$ with $u(D[0])=0$ which represents $\succsim$ on $Y^s$, and define $\succsim^u$ as above. We take $\succsim^u$ as primitive and seek necessary and sufficient conditions on $\succsim^u$ such that it has a generalized $\alpha$-LAM utility representation on $\mathcal{F}$, as well as necessary and sufficient conditions on $\succsim^u$ such that it has an $\alpha$-LAM utility representation on $\mathcal{F}$.\footnote{For the latter case, Lemma \ref{lem:AA-fixed-alpha} ensures that the choice of $u$ does not matter for the axiomatic exercise.} Since the rest of the Appendix works with utility act sequences, with abuse of notation, we henceforth denote $\succsim^u$ by $\succsim$ (unless otherwise stated).

\subsection{Utility act sequence environment}
Throughout this subsection, unless otherwise stated, all references to main-text axioms mean their utility-act versions under the bijection $R=-F$. Hence, the domains $\mathcal R_H$, $\mathcal R_I$, $\mathbb R_+^I$, and $B_+^{\mathbb N}$ are matched with $\mathcal F_H$, $\mathcal F_I$, $\mathbb R_-^I$, and $B_-^{\mathbb N}$, respectively. In axioms involving $F_n$, the function $F_n$ is identified with the utility act sequence that repeats that function at every sample size.

\label{app:util-act-lemmas}
\paragraph{Unanimity.} Recall that, for each $\theta_0 \in \Theta$ and $F \in \mathcal{F}$, we defined $F_{\theta_0} \in \mathcal{F}_H$ to be the local version of $F$ at $\theta_0$. We identify $F_{\theta_0} \in \mathcal{F}_H$ with the constant-in-$\theta$ utility act sequence in $\mathcal{F}$ defined as $(n,\theta,h)\mapsto F(n,\theta_0,h)$. Next, we use $\succsim$ to derive preferences over utility act sequences in $\mathcal{F}$ \emph{conditional} on each centering value $\theta_0 \in \Theta$. For each $\theta_0 \in \Theta$, define the binary relation $\succsim_{\theta_0}$ on $\mathcal{F}$ as:
\[
F \succsim_{\theta_0} G \iff F_{\theta_0} \succsim G_{\theta_0}
\]

Note that for each $F\in\mathcal F$ and $\theta_0,\theta\in\Theta$, $(F_{\theta_0})_\theta=F_{\theta_0}$, and hence for any $a: \mathcal{F}_H\rightarrow [0,1]$, $U_a((F_{\theta_0})_{\theta})=U_a(F_{\theta_0})$, which does not depend on $\theta$.

\begin{axiom-app}[$\Theta$-Unanimity]\label{ax:Theta-unanimity}
For each $F,G \in \mathcal{F}$,
\[
F \succsim G \iff F \succsim_{\theta_0} G \quad \forall \theta_0 \in \Theta
\]
\end{axiom-app}

Axiom \ref{ax:Theta-unanimity} requires that $F$ is preferred to $G$ if and only if the local version of $F$ at $\theta_0$ is preferred to the local version of $G$ at $\theta_0$, unanimously across $\theta_0 \in \Theta$. Analogously to Definition \ref{defn:alpha-LAM-util}, say that $\succsim_{\theta_0}$ has a \emph{generalized $\alpha$-LAM utility representation} on $\mathcal F$ if there exists $a: \mathcal{F}_H\rightarrow [0,1]$ such that the function $F\mapsto U_a(F_{\theta_0})$ represents $\succsim_{\theta_0}$ on $\mathcal{F}$, and say that $\succsim_{\theta_0}$ has an \emph{$\alpha$-LAM utility representation} on $\mathcal F$ if it has a generalized $\alpha$-LAM utility representation where $a$ is constant. The following lemma shows that $\Theta$-Unanimity allows us to fix a centering value $\theta_0\in \Theta$ and study the conditional preference $\succsim_{\theta_0}$.

\begin{lemma-app}\label{lem:unanimity-char}
For any $a: \mathcal{F}_H\rightarrow [0,1]$, the following are equivalent:
\begin{itemize}
    \item[(i)] $(\succsim,\{\succsim_{\theta_0}\}_{\theta_0\in\Theta})$ satisfy Axiom~\ref{ax:Theta-unanimity} and, for each $\theta_0\in \Theta$, $F\mapsto U_a(F_{\theta_0})$ is a generalized $\alpha$-LAM utility representation of $\succsim_{\theta_0}$ on $\mathcal{F}$.  
    \item[(ii)] $a$ is a generalized $\alpha$-LAM utility representation of $\succsim$ on $\mathcal F$.
\end{itemize}
\end{lemma-app}

\begin{proof}[Proof of Lemma \ref{lem:unanimity-char}]
\underline{Backwards direction}: suppose part (ii) holds. For each $\theta_0 \in \Theta$,
\[
F\succsim_{\theta_0}G
\iff F_{\theta_0}\succsim G_{\theta_0}
\iff U_a(F_{\theta_0})\geq U_a(G_{\theta_0})
\]
by definition of $\succsim_{\theta_0}$ and by the identity $U_a((F_{\theta_0})_{\theta})=U_a(F_{\theta_0})$. $\Theta$-Unanimity then follows by definition of generalized $\alpha$-LAM utility representation of $\succsim$ on $\mathcal{F}$.

\underline{Forwards direction}: suppose part (i) holds. Then,
\[
F\succsim G
\iff
F\succsim_{\theta_0}G\quad\forall\theta_0 \in \Theta
\iff
U_{a}(F_{\theta_0})\geq U_{a}(G_{\theta_0})\quad\forall\theta_0 \in \Theta
\]
by $\Theta$-Unanimity and by assumption that $a$ represents each $\succsim_{\theta_0}$ on $\mathcal{F}$.
\end{proof}

Note that by choosing constant $a$, the exact analog of Lemma~\ref{lem:unanimity-char} holds for $\alpha$-LAM utility representations. 

\begin{lemma-app}
\label{lem:common-generalized-selector}
The following are equivalent:
\begin{itemize}
    \item[(i)] For each $\theta_0 \in \Theta$, there exists $a_{\theta_0}: \mathcal{F}_H\rightarrow [0,1]$ such that $a_{\theta_0}$ is a generalized $\alpha$-LAM utility representation of $\succsim_{\theta_0}$ on $\mathcal{F}$.
    \item[(ii)] There exists $a: \mathcal{F}_H\rightarrow [0,1]$ such that, for each $\theta_0 \in \Theta$, $a$ is a generalized $\alpha$-LAM utility representation of $\succsim_{\theta_0}$ on $\mathcal{F}$.
\end{itemize}
\end{lemma-app}

\begin{proof}[Proof of Lemma \ref{lem:common-generalized-selector}]
The backwards direction is immediate. For the forwards direction, fix any $\theta_0 \in \Theta$, set $a:=a_{\theta_0}$, and note that
\[
F_{\theta_0} \succsim G_{\theta_0} \iff F \succsim_{\theta_0} G \iff U_{a}(F_{\theta_0}) \geq U_{a}(G_{\theta_0}) \quad \forall F,G \in \mathcal{F}
\]
This implies that $U_a$ represents $\succsim$ restricted to the subset of $\mathcal{F}$ which are constant in $\theta$. Hence, for any $\theta_0' \in \Theta$,
\[
F \succsim_{\theta_0'} G \iff F_{\theta_0'} \succsim G_{\theta_0'} \iff U_a(F_{\theta_0'})\geq U_a(G_{\theta_0'})
\]
as desired.
\end{proof}

Note that the exact analog of Lemma \ref{lem:common-generalized-selector} holds for constant $a$. With Lemma~\ref{lem:unanimity-char} and Lemma \ref{lem:common-generalized-selector} in hand, we fix $\theta_0\in\Theta$ and seek to axiomatically characterize the set of binary relations $\succsim_{\theta_0}$ with a generalized $\alpha$-LAM utility representation on $\mathcal{F}$, as well as the set of binary relations $\succsim_{\theta_0}$ with a $\alpha$-LAM utility representation on $\mathcal{F}$.

\paragraph{Restricting to $\boldsymbol{\mathcal{F}_H}$.} 
The following lemma shows that to achieve these goals, it suffices to study the restriction of  $\succsim_{\theta_0}$ to $\mathcal{F}_H$. 

\begin{lemma-app}\label{lem:F_H}
For any $a:\mathcal F_H\to[0,1]$, the following are equivalent.
\begin{itemize}
    \item[(i)] The function $U_a$ represents $\succsim_{\theta_0}$ on $\mathcal F_H$.
    \item[(ii)] The function $F\mapsto U_a(F_{\theta_0})$ represents $\succsim_{\theta_0}$ on $\mathcal F$.
\end{itemize}
\end{lemma-app}

\begin{proof}[Proof of Lemma \ref{lem:F_H}]
For the backwards direction, note that by identifying any $F\in \mathcal{F}_H$ with its constant-in-$\theta$ utility act sequence, $U_a(F_{\theta_0})=U_a(F)$. For the forwards direction, fix any $F,G \in \mathcal{F}$. Then,
\[
F\succsim_{\theta_0}G
\iff (F_{\theta_0})_{\theta_0}=F_{\theta_0} \succsim G_{\theta_0}=(G_{\theta_0})_{\theta_0} \iff F_{\theta_0}\succsim_{\theta_0}G_{\theta_0} \iff U_a(F_{\theta_0})\geq U_a(G_{\theta_0})
\]
by two applications of the definition of $\succsim_{\theta_0}$ and by part (i), since $F_{\theta_0},G_{\theta_0}\in\mathcal F_H$.
\end{proof}

Note that by choosing constant $a$, the exact analog of Lemma \ref{lem:F_H} holds for $\alpha$-LAM utility representations. Say that $\succsim_{\theta_0}$ has a \emph{generalized $\alpha$-LAM utility representation} on $\mathcal F_H$ if there exists $a: \mathcal{F}_H\rightarrow [0,1]$ such that $U_a$ represents $\succsim_{\theta_0}$ on $\mathcal F_H$, and say that $\succsim_{\theta_0}$ has an \emph{$\alpha$-LAM utility representation} on $\mathcal F_H$ if it has a generalized $\alpha$-LAM utility representation with constant $a$. Note that these representations are the equivalent utility-index analogs of Definition \ref{defn:LAM-risk-main-text}, and $\mathcal{F}_H$ is the utility-valued analog of the main text's decision environment $\mathcal{R}_H$. Henceforth, we fix $\theta_0 \in \Theta$ and seek to axiomatically characterize the set of binary relations $\succsim_{\theta_0}$ on $\mathcal{F}_H$ which have generalized $\alpha$-LAM utility representations, and which have $\alpha$-LAM utility representations.

\paragraph{Formal statement of Axiom \ref{as:2}.} We may motivate the axiomatic exercise described above directly in terms of the researcher's \emph{preference over estimator sequences}. Formally, in addition to the binary relation $\succsim_{\theta_0}$ on $\mathcal{F}_H$ (which we interpret as the researcher's preferences over local utility act sequences conditional on the centering value $\theta_0$), suppose we take as primitive the statistical decision environment defined in Section \ref{sec:setup} and the binary relation $\succsim_{\mathcal{D},\theta_0}$. We interpret $\succsim_{\mathcal{D},\theta_0}$ as the preferences over estimator sequences of the researcher, who faces the sequence of local informational environments and local decision problems at the centering value $\theta_0$ defined in Section \ref{sec:setup}.

\begin{axiom-app}[Local Risk Sufficiency]
\label{ax:cond-risk-relevance}
For each $\delta,\delta' \in \mathcal{D}$ with $R_{\theta_0,\delta},R_{\theta_0,\delta'} \in \mathcal{R}_H$,
\[
\delta \succsim_{\mathcal{D},\theta_0} \delta' \iff -R_{\theta_0,\delta} \succsim_{\theta_0} -R_{\theta_0,\delta'}
\]
\end{axiom-app}

Axiom \ref{ax:cond-risk-relevance} requires that performance in the sequence of local decision problems at $\theta_0$ is \emph{all that matters} for ranking estimator sequences conditional on $\theta_0$. It motivates our axiomatic exercise because, if $\succsim_{\theta_0}$ has a generalized $\alpha$-LAM utility representation on $\mathcal{F}_H$, then Local Risk Relevance holds if and only if $\succsim_{\mathcal{D},\theta_0}$ has a generalized $\alpha$-LAM risk representation $\mathcal{D}$ (for estimator sequences with bounded local risk function sequences at $\theta_0$, and where $a(\cdot)$ only depends on $\delta$ via $R_{\theta_0,\delta}$).

\begin{lemma-app}
\label{lem:eq:directed-alpha-identity}
Let 
\[
\overline{U}_I(F)=\limsup_{n\to\infty} \min_{h\in I} F(n,h) \quad \text{and} \quad \underline{U}_I(F)=\liminf_{n\to\infty} \min_{h\in I} F(n,h)
\]
For each $F \in \mathcal{F}_H$ and $t\in [0,1]$,
\begin{equation}\label{eq:directed-alpha-identity}
t \inf_{I \in \mathcal{I}} \overline{U}_I(F)+(1-t) \inf_{I \in \mathcal{I}} \underline{U}_I(F)=\inf_{I \in \mathcal{I}} \Big(t\overline{U}_I(F)+(1-t) \underline{U}_I(F) \Big)
\end{equation}
\end{lemma-app}

\begin{proof}[Proof of Lemma \ref{lem:eq:directed-alpha-identity}]
Fix any $F \in \mathcal{F}_H$ and $t\in [0,1]$. The $\leq$ direction is immediate. For the reverse inequality, fix $\epsilon>0$ and choose $\overline I,\underline I\in\mathcal I$ such that $\overline U_{\overline I}(F)<\inf_{I \in \mathcal{I}} \overline U_{I}(F)+\epsilon$ and $\underline U_{\underline I}(F)<\inf_{I \in \mathcal{I}} \underline U_{I}(F)+\epsilon$, and let $I=\overline I \cup \underline I$. Both $\overline U_I(F)$ and $\underline U_I(F)$ are weakly decreasing under set inclusion, so
\begin{align*}
    \inf_{J \in \mathcal{I}} \Big(t\overline{U}_J(F)+(1-t) \underline{U}_J(F) \Big)\leq t\overline{U}_I(F)+(1-t) \underline{U}_I(F) \\
    \leq t \overline U_{\overline I}(F)+(1-t) \underline U_{\underline I}(F) \leq t \inf_{I \in \mathcal{I}} \overline U_{I}(F)+(1-t)\inf_{I \in \mathcal{I}} \underline U_{I}(F)+\epsilon
\end{align*}
Conclude by taking $\epsilon\downarrow0$.
\end{proof}

\paragraph{Characterizing $\boldsymbol{\alpha}$-LAM.} It is convenient to begin with $\alpha$-LAM, since many of the intermediate lemmas prove useful for our characterization of generalized $\alpha$-LAM. In this section, we state a collection of lemmas which, when combined, axiomatically characterize the set of binary relations $\succsim_{\theta_0}$ on $\mathcal{F}_H$ which have an $\alpha$-LAM utility representation, hence proving Theorem \ref{thm:fixed-alpha-LAM}.

First, we show that Axiom \ref{ax: basics-main-text} ensures that $\succsim_{\theta_0}$ on $\mathcal{F}_H$ admits certainty equivalents. Say that a function $J: \mathcal{F}_H\rightarrow \mathbb{R}_-$ is \emph{normalized} if $J(\overrightarrow{k})=k$ for each $k\leq 0$.

\begin{lemma-app}\label{lem:CEs}
Suppose $\succsim_{\theta_0}$ satisfies Axiom \ref{ax: basics-main-text}. For each $F \in \mathcal{F}_H$, there exists a unique $J_{\theta_0}(F) \in \mathbb{R}_-$ such that $F \sim \overrightarrow{J_{\theta_0}(F)}$. Hence, $J_{\theta_0}: \mathcal{F}_H \rightarrow \mathbb{R}_-$ is a normalized utility representation of $\succsim_{\theta_0}$.
\end{lemma-app}

\begin{proof}[Proof of Lemma \ref{lem:CEs}]
Fix any $F \in \mathcal{F}_H$. Since $F$ is bounded, there exist $K,k \in \mathbb{R}_-$ such that $K \geq F(n,h) \geq k$ for all $(n,h)$. By monotonicity (and transitivity), $\overrightarrow{K} \succsim_{\theta_0} F \succsim_{\theta_0} \overrightarrow{k}$. By mixture continuity, the sets $P_F=\{\alpha \in [0,1]: \alpha \overrightarrow{K}+(1-\alpha)\overrightarrow{k}\succsim_{\theta_0} F\}$ and $L_F=\{\alpha \in [0,1]: F \succsim_{\theta_0} \alpha \overrightarrow{K}+(1-\alpha)\overrightarrow{k}\}$ are closed. By completeness, $P_F\cup L_F=[0,1]$. By above, $P_F$ and $L_F$ are nonempty, since $1 \in P_F$ and $0 \in L_F$. Since $[0,1]$ is connected, there exists $\alpha' \in P_F \cap L_F$. Define $J_{\theta_0}(F):=\alpha' K+(1-\alpha')k \in \mathbb{R}_-$. Uniqueness immediately follows from monotonicity and transitivity.
\end{proof}

\paragraph{Consistency and caution.}
For convenience, we state the utility act version of Definition~\ref{def:I-cond} here. For each $I\in\mathcal I$, define the \emph{$I$-conditional preference on $\mathcal{F}_H$} as:
\[
F\succsim_{\theta_0,I}G
\iff
F_I\overrightarrow{0}\succsim_{\theta_0}G_I\overrightarrow{0}
\]

\begin{lemma-app}\label{lem:I-rep}
Fix $\alpha\in[0,1]$. Suppose $\succsim_{\theta_0}$ is represented on $\mathcal F_H$ by
\[
U_\alpha(F)=\alpha \inf_{J\in\mathcal I} \limsup_{n\to\infty} \min_{h\in J}F(n,h)+(1-\alpha) \inf_{J\in\mathcal I} \liminf_{n\to\infty} \min_{h\in J}F(n,h)
\]
For each $I\in\mathcal I$, $\succsim_{\theta_0,I}$ is represented on $\mathcal F_H$ by
\[
U_{\alpha,I}(F)= \alpha \limsup_{n\to\infty} \min_{h\in I}F(n,h)+(1-\alpha) \liminf_{n\to\infty} \min_{h\in I}F(n,h)
\]
\end{lemma-app}

\begin{proof}[Proof of Lemma \ref{lem:I-rep}]
Fix any $F \in \mathcal{F}_H$ and $I\in\mathcal I$. By Lemma \ref{lem:eq:directed-alpha-identity}, we may write
\[
U_\alpha(F_I\overrightarrow{0})=\inf_{J\in \mathcal{I}} U_{\alpha,J}(F_I\overrightarrow{0})
\]
For each $J \in\mathcal I$, there are two cases. First, if $J$ and $I$ are disjoint, then $\min_{h\in J} (F_I\overrightarrow{0})(n,h)=0$ for all $n\geq 1$, so $U_{\alpha,J}(F_I \overrightarrow{0})=0\geq U_{\alpha,I}(F)$. Second, if $J\cap I\neq\varnothing$, then since $F\leq 0$,
\[
\min_{h\in J}(F_I\overrightarrow{0})(n,h)
=
\min_{h\in J\cap I}F(n,h)
\geq
\min_{h\in I}F(n,h)
\quad\forall n\geq 1
\]
which implies $U_{\alpha,J}(F_I\overrightarrow{0})\geq U_{\alpha,I}(F)$. In particular, setting $J=I$ gives equality. Hence, $U_\alpha(F_I\overrightarrow{0})=U_{\alpha,I}(F)$. The conclusion follows by definition of $\succsim_{\theta_0,I}$.
\end{proof}

The next lemma shows that Consistency and Caution characterize the $\inf_{I\in\mathcal I}$ aggregator and force the neighborhood-specific coefficients to equal one common $\alpha$.

\begin{lemma-app}\label{lem:consistency-caution}
For each $\theta_0\in\Theta$, the following are equivalent:
\begin{itemize}
    \item[(i)] $\succsim_{\theta_0}$ satisfies Axiom~\ref{ax: basics-main-text}, $(\succsim_{\theta_0},\{\succsim_{\theta_0,I}\}_{I\in\mathcal I})$ satisfy Axiom~\ref{ax:cons-and-caution-main-text}, and, for each $I\in\mathcal I$, $\succsim_{\theta_0,I}$ is represented on $\mathcal F_H$ by $U_{\alpha_I,I}$ for some $\alpha_I\in[0,1]$.
    \item[(ii)] For some $\alpha\in[0,1]$, $\succsim_{\theta_0}$ is represented on $\mathcal F_H$ by
    \[
    U_\alpha(F)=\inf_{I\in\mathcal I}U_{\alpha,I}(F).
    \]
\end{itemize}
Moreover, the coefficients $\{\alpha_I\}_I$ in part (i) are all equal to the coefficient $\alpha$ in part (ii).
\end{lemma-app}

\begin{proof}[Proof of Lemma \ref{lem:consistency-caution}]
\underline{Forwards direction.} Suppose part (i) holds. Since $\succsim_{\theta_0}$ satisfies Axiom~\ref{ax: basics-main-text}, Lemma~\ref{lem:CEs} implies the normalized utility representation $J_{\theta_0}$. Define $\widetilde U(F):=\inf_{I\in\mathcal I}U_{\alpha_I,I}(F)$. We have:
\begin{align*}
    U_{\alpha_I,I}(F)\geq\widetilde U(F) \quad \forall I\in \mathcal{I} \implies F \succsim_{\theta_0,I} \overrightarrow{\Tilde{U}(F)} \quad \forall I\in \mathcal{I} \\
    \implies F \succsim_{\theta_0} \overrightarrow{\Tilde{U}(F)} \implies J_{\theta_0}(F)\geq \Tilde{U}(F)
\end{align*}
since each $U_{\alpha_I,I}$ is normalized and by Consistency.

If $\widetilde U(F)=0$, then $J_{\theta_0}(F)\leq0=\widetilde U(F)$. If $\widetilde U(F)<0$, for each $\epsilon>0$ with $\epsilon<-\widetilde U(F)$, there exists $I\in \mathcal{I}$ such that $U_{\alpha_I,I}(F)<\widetilde U(F)+\epsilon$, which implies:
\begin{align*}
    \overrightarrow{\widetilde U(F)+\epsilon} \succ_{\theta_0,I} F \implies \overrightarrow{\widetilde U(F)+\epsilon} \succsim_{\theta_0} F \implies \widetilde U(F)+\epsilon\geq J_{\theta_0}(F)
\end{align*}
since each $U_{\alpha_I,I}$ is normalized and by Caution. Taking $\epsilon\downarrow0$ yields
\[
J_{\theta_0}(F)=\inf_{I\in\mathcal I}U_{\alpha_I,I}(F).
\tag{A.1}
\]
It remains to show that $\alpha_I=\alpha_J=:\alpha$ for all $I,J\in\mathcal{I}$.

Fix $I\in\mathcal I$. By definition of $\succsim_{\theta_0,I}$, the function $F\mapsto J_{\theta_0}(F_I\overrightarrow{0})$ represents $\succsim_{\theta_0,I}$ on $\mathcal{F}_H$. Furthermore, this function is normalized: for each $k\in \mathbb{R}_-$ and $J\in \mathcal{I}$, 
\[
\min_{h\in J}(\overrightarrow{k}_I\overrightarrow{0})(n,h)=\begin{cases} 
    0 \ \forall n\geq 1 & J\cap I=\emptyset \\
    k \ \forall n\geq 1 & \text{else}
\end{cases} \implies U_{\alpha_J,J}(\overrightarrow{k}_I\overrightarrow{0})=\begin{cases} 
    0 & J\cap I=\emptyset \\
    k & \text{else}
\end{cases}
\]
and hence $J_{\theta_0}(\overrightarrow{k}_I\overrightarrow{0})=k$ by (A.1). By assumption, $U_{\alpha_I,I}$ is also a (normalized) utility representation of $\succsim_{\theta_0,I}$ on $\mathcal{F}_H$. Since normalized utility representations are unique, 
\[
\inf_{J\in \mathcal{I}} U_{\alpha_J,J}(F_I\overrightarrow{0})=J_{\theta_0}(F_I\overrightarrow{0})=U_{\alpha_I,I}(F) \quad \forall F\in \mathcal{F}_H
\]
For $K\in B_-^{\mathbb N}$, an analogous computation as above yields:
\[
U_{\alpha_J,J}(K_I\overrightarrow{0})=\begin{cases} 
    0 & J\cap I=\emptyset \\
    \alpha_J \limsup_{n\to\infty} K_n+(1-\alpha_J) \liminf_{n\to\infty} K_n & \text{else}
\end{cases}
\]
Substituting $K=(-1,0,-1,0,\ldots)$ yields:
\[
\inf_{\substack{J\in\mathcal I\\J\cap I\neq\varnothing}}\alpha_J=\alpha_I
\]
This identity holds for each $I\in \mathcal{I}$. Finally, fix any $I,J\in \mathcal{I}$. There are two cases: if $I\cap J\neq \emptyset$, then $\alpha_J\geq \alpha_I$ and $\alpha_I \geq \alpha_J$, so $\alpha_I=\alpha_J$. Else, $I\cup J$ intersects both $I$ and $J$, so $\alpha_I=\alpha_{I\cup J}=\alpha_J$. Hence all $\alpha_I$ equal some common $\alpha$.

\underline{Backwards direction.} $\succsim_{\theta_0}$ satisfying Axiom~\ref{ax: basics-main-text} is immediate. By Lemma~\ref{lem:I-rep}, each $\succsim_{\theta_0,I}$ is represented by $U_{\alpha,I}$. Consistency and Caution follow directly from $U_\alpha=\inf_{I\in\mathcal I}U_{\alpha,I}$.
\end{proof}

With Lemma~\ref{lem:consistency-caution} in hand, we fix $I \in \mathcal{I}$ and axiomatically characterize the set of binary relations $\succsim_{\theta_0,I}$ on $\mathcal{F}_H$ which have a representation $U_{\alpha,I}$ for some $\alpha\in[0,1]$.

\paragraph{Restricting $\boldsymbol{\succsim_{\theta_0,I}}$ to $\boldsymbol{\mathcal{F}_I}$.} Endow $I$ with its power set $\sigma$-algebra. Let $\mathcal{F}_I$ be the set of bounded, measurable act sequences $F: \mathbb{N} \times I \rightarrow \mathbb{R}_-$. For each $F\in \mathcal{F}_H$, let $F_I \in \mathcal{F}_I$ be the restriction of $F$ to $I$. Next, we show it suffices to focus attention on the preference that $\succsim_{\theta_0,I}$ induces on $\mathcal{F}_I$.

\begin{lemma-app}\label{lem:properties-of-I-cond}
Suppose $\succsim_{\theta_0}$ is a weak order on $\mathcal{F}_H$. Then $\succsim_{\theta_0,I}$ satisfies:
\begin{itemize}
    \item[(i)] \emph{$I$-Transitivity}: $\succsim_{\theta_0,I}$ is transitive on $\mathcal F_H$.
    \item[(ii)] \emph{$I$-Relevance}: for each $F,G\in \mathcal{F}_H$, if $F_I=G_I$, then $F\sim_{\theta_0,I}G$.
\end{itemize}
\end{lemma-app}

\begin{proof}[Proof of Lemma \ref{lem:properties-of-I-cond}]
Part (i) follows from transitivity of $\succsim_{\theta_0}$. For part (ii), $F_I=G_I$ implies $F_I\overrightarrow{0}=G_I\overrightarrow{0}$, so reflexivity of $\succsim_{\theta_0}$ and the definition of $\succsim_{\theta_0,I}$ give $F\sim_{\theta_0,I}G$.
\end{proof}

Next, with abuse of notation, define the binary relation $\succsim_{\theta_0,I}$ on $\mathcal{F}_I$ as: 
\[
F \succsim_{\theta_0,I} G \iff \exists F',G' \in \mathcal{F}_H \text{ with } (F')_I=F, (G')_I=G  \text{ and } F' \succsim_{\theta_0,I} G'
\]

\begin{lemma-app}\label{lem:I-relevance}
Fix $\alpha\in[0,1]$. The following are equivalent:
\begin{itemize}
    \item[(i)] $\succsim_{\theta_0,I}$ satisfies $I$-Transitivity and $I$-Relevance on $\mathcal F_H$, and $U_{\alpha,I}$ represents $\succsim_{\theta_0,I}$ on $\mathcal F_I$.
    \item[(ii)] $U_{\alpha,I}$ represents $\succsim_{\theta_0,I}$ on $\mathcal F_H$.
\end{itemize}
\end{lemma-app}

\begin{proof}[Proof of Lemma \ref{lem:I-relevance}]
\underline{Forwards direction.} Under $I$-Relevance and $I$-Transitivity,
\[
F\succsim_{\theta_0,I}G
\iff
F_I\succsim_{\theta_0,I}G_I
\quad\text{for all }F,G\in\mathcal F_H.
\]
The forwards implication follows from the definition of $\succsim_{\theta_0,I}$ on $\mathcal F_I$. For the reverse implication, choose $F',G'\in\mathcal F_H$ with $(F')_I=F_I$ and $(G')_I=G_I$ such that $F' \succsim_{\theta_0,I} G'$. By $I$-Relevance and $I$-Transitivity, $F\succsim_{\theta_0,I}G$. Finally, note that $U_{\alpha,I}(F)$ depends only on $F_I$.

\underline{Backwards direction.} $\succsim_{\theta_0,I}$ satisfying $I$-Transitivity and $I$-Relevance is immediate. For $F,G\in\mathcal F_I$, extend them arbitrarily to $F',G'\in\mathcal F_H$. Then
\[
U_{\alpha,I}(F)\geq U_{\alpha,I}(G)
\iff
U_{\alpha,I}(F')\geq U_{\alpha,I}(G')
\iff
F'\succsim_{\theta_0,I}G',
\]
which, by above, is equivalent to $F\succsim_{\theta_0,I}G$.
\end{proof}

With Lemma~\ref{lem:I-relevance} in hand, we axiomatically characterize the set of binary relations $\succsim_{\theta_0,I}$ on $\mathcal{F}_I$ which have a representation $U_{\alpha,I}$ for some $\alpha \in [0,1]$. We first record another implication of $I$-Transitivity and $I$-Relevance.

\begin{lemma-app}\label{lem:I-transitivity}
Suppose $\succsim_{\theta_0,I}$ satisfies $I$-Transitivity and $I$-Relevance on $\mathcal{F}_H$. $\succsim_{\theta_0,I}$ is transitive on $\mathcal{F}_I$.
\end{lemma-app}

\begin{proof}[Proof of Lemma \ref{lem:I-transitivity}]
Fix $F,G,H \in \mathcal{F}_I$, and assume $F \succsim_{\theta_0,I} G$ and $G \succsim_{\theta_0,I} H$. By definition of $\succsim_{\theta_0,I}$ on $\mathcal{F}_I$, there exist $F',G' \in \mathcal{F}_H$ with $(F')_I=F,(G')_I=G$ and $F' \succsim_{\theta_0,I} G'$, and there exist $G'',H'' \in \mathcal{F}_H$ with $(G'')_I=G,(H'')_I=H$ and $G'' \succsim_{\theta_0,I} H''$. By $I$-Relevance, $(G')_I=G=(G'')_I$ implies $G' \sim_{\theta_0,I} G''$. By $I$-Transitivity, $F' \succsim_{\theta_0,I} H''$. By definition of $\succsim_{\theta_0,I}$ on $\mathcal{F}_I$, $F \succsim_{\theta_0,I} H$.
\end{proof}

Next, it is useful to study $\succsim_{\theta_0,I}$ restricted to two sub-choice domains of interest. First, let $\mathbb{R}_-^I$ denote the set of \emph{constant act sequences} in $\mathcal{F}_I$. Second, let $B_-^{\mathbb{N}}$ denote the set of \emph{sequences of constant acts} in $\mathcal{F}_I$.

\begin{definition-app}
$\succsim_{\theta_0,I}$ on $\mathbb{R}_-^I$ has a \emph{maxmin expected utility (MEU) representation with ambiguity set $\Delta(I)$} if the function
\[
U_{I,MEU}(f)=\min_{h\in I} f(h)
\]
represents $\succsim_{\theta_0,I}$ on $\mathbb{R}_-^I$.
\end{definition-app}

\begin{definition-app}
Fix $\alpha\in[0,1]$. The relation $\succsim_{\theta_0,I}$ has an \emph{$\alpha$-limit representation} on $B_-^{\mathbb N}$ if the function
\[
H_\alpha(K)=\alpha \limsup_{n\to\infty} K_n+(1-\alpha) \liminf_{n\to\infty} K_n
\]
represents $\succsim_{\theta_0,I}$ on $B_-^{\mathbb N}$.
\end{definition-app}

The next lemma combines the MEU representation over $h$ with an $\alpha$-limit representation over $n$.

\begin{lemma-app}\label{lem:LAM-iff-MEU-and-LS}
Fix $\alpha\in[0,1]$. The following are equivalent:
\begin{itemize}
    \item[(i)] $\succsim_{\theta_0,I}$ has an MEU representation with ambiguity set $\Delta(I)$ on $\mathbb R_-^I$; it has an $\alpha$-limit representation on $B_-^{\mathbb N}$; it satisfies Axiom~\ref{ax:-sample-size-mono-main-text}; and it is transitive on $\mathcal F_I$.
    \item[(ii)] $\succsim_{\theta_0,I}$ is represented by $U_{\alpha,I}$ on $\mathcal F_I$.
\end{itemize}
\end{lemma-app}

\begin{proof}[Proof of Lemma \ref{lem:LAM-iff-MEU-and-LS}]
For the forwards direction, let $k_{F_n}:=\min_{h\in I}F(n,h)$ and define $k_{G_n}$ analogously. The MEU representation implies
\[
F_n\sim_{\theta_0,I}\overrightarrow{k_{F_n}},
\qquad
G_n\sim_{\theta_0,I}\overrightarrow{k_{G_n}}
\quad\forall n\geq 1
\]
Axiom~\ref{ax:-sample-size-mono-main-text} implies
\[
F\sim_{\theta_0,I}(\overrightarrow{k_{F_n}})_n,
\qquad
G\sim_{\theta_0,I}(\overrightarrow{k_{G_n}})_n
\]
Hence, by transitivity on $\mathcal F_I$, the $\alpha$-limit representation therefore gives
\[
F\succsim_{\theta_0,I}G
\iff
H_\alpha((k_{F_n})_n)\geq H_\alpha((k_{G_n})_n)
\iff
U_{\alpha,I}(F)\geq U_{\alpha,I}(G).
\]
For the backwards direction, the restriction of $U_{\alpha,I}$ to $\mathbb R_-^I$ is $\min_{h\in I}$, and its restriction to $B_-^{\mathbb N}$ is $H_\alpha$. Monotonicity of $H_\alpha$ gives $I$-Sample Monotonicity, and the existence of a utility representation gives transitivity.
\end{proof}

With Lemma \ref{lem:LAM-iff-MEU-and-LS} in hand, we may characterize the two representations on the subdomains $\mathbb{R}_-^I$ and $B_-^{\mathbb{N}}$ separately. We use results from \cite{gilboa1989maxmin} to obtain the MEU representation on $\mathbb R_-^I$, and we build on results from \cite{marinacci1998axiomatic} to obtain the $\alpha$-limit representation on $B_-^{\mathbb N}$.

\begin{lemma-app}\label{lem-gs}
$\succsim_{\theta_0,I}$ on $\mathbb{R}_-^I$ satisfies Axioms \ref{ax:gs-main-text}-\ref{ax:I-Caution-Main-Text} if and only if it has a MEU representation with ambiguity set $\Delta(I)$.
\end{lemma-app}

\begin{proof}[Proof of Lemma \ref{lem-gs}]
Necessity of the axioms is straightforward, so we show sufficiency. By an exactly analogous argument as Lemma 6 of \cite{andrews2026misspecification}, Axiom \ref{ax:gs-main-text} implies that $\succsim_{\theta_0,I}=\geq$ on constant acts $\mathbb{R}_-$. Since the consequence set $\mathbb{R}_-$ is a convex subset of the vector space $\mathbb{R}$, the above implies that Axiom \ref{ax:gs-main-text} is equivalent to Axioms A.1, A.2', and A.3-A.6 of \cite{maccheroni2006ambiguity}. Hence, there exists a nonconstant, affine $u: \mathbb{R}_- \rightarrow \mathbb{R}$ and nonempty, closed, convex subset $C\subseteq \Delta(I)$ such that $\succsim_{\theta_0,I}$ on $\mathbb{R}_-^I$ is represented by 
\[
f\mapsto \min_{p\in C} \langle u(f),p\rangle
\]
By cardinal uniqueness, the function
\[
W_I(f)=\min_{p\in C} \langle f,p\rangle
\]
also represents $\succsim_{\theta_0,I}$ on $\mathbb{R}_-^I$. Note that $W_I$ is normalized. By Axiom \ref{ax:I-Caution-Main-Text}, for each $h \in I$, $f \in \mathbb{R}_-^I$, and $k\leq 0$, $W_I(f)\geq k$ implies $f(h)\geq k$, which implies for each $h \in I$, $f \in \mathbb{R}_-^I$, $f(h)\geq W_I(f)$, which implies $\min_{h\in I} f(h)\geq W_I(f)$. Conversely, since $C\subseteq \Delta(I)$, $\min_{h\in I} f(h)\leq W_I(f)$. Combining the two inequalities gives $W_I(f)=\min_{h\in I}f(h)$ for each $f\in\mathbb R_-^I$. Hence the normalized MEU index represents $\succsim_{\theta_0,I}$ on $\mathbb R_-^I$.
\end{proof}

\begin{lemma-app}\label{lem:marinacci-1998-rep}
Suppose $\succsim_{\theta_0,I}$ satisfies Axiom~\ref{ax:-sample-size-mono-main-text} and agrees with $\geq$ on constant utility acts. Its restriction to $B_-^{\mathbb N}$ satisfies Axiom~\ref{ax:mar-98-main-text} if and only if there exists a unique $\alpha_I\in[0,1]$ such that
\[
H_{\alpha_I}(K)
=
\alpha_I\limsup_{n\to\infty}K_n
+(1-\alpha_I)\liminf_{n\to\infty}K_n.
\]
represents $\succsim_{\theta_0,I}$ on $B_-^{\mathbb{N}}$.
\end{lemma-app}

The proof of Lemma~\ref{lem:marinacci-1998-rep} builds upon the Choquet expected utility representation obtained in Theorem~7 of \cite{marinacci1998axiomatic}. Removing Sample Uncertainty Affinity leaves one free parameter, which is $\alpha_I$.

\begin{proof}[Proof of Lemma \ref{lem:marinacci-1998-rep}]
Necessity is straightforward, so we show sufficiency. By standard and analogous arguments as before, $\succsim_{\theta_0,I}$ admits certainty equivalents on $B_-^{\mathbb{N}}$: for each $K \in B_-^{\mathbb{N}}$, there exists a unique $J_I(K)\leq0$ such that $K\sim_{\theta_0,I}\overrightarrow{J_I(K)}$. Note that $J_I$ is a normalized utility representation of $\succsim_{\theta_0,I}$ on $B_-^{\mathbb N}$.

Let $B_1^{\mathbb{N}}$ be the space of bounded sequences with values in $(-\infty,1]$, and define $\widehat J_I: B_1^{\mathbb{N}}\rightarrow (-\infty,1]$ as
\[
\widehat J_I(X):=J_I(X-\overrightarrow 1)+1
\]
The consequence interval $(-\infty,1]$ is convex and contains $[-1,1]$. Note that $\widehat J_I$ is normalized and inherits comonotonic independence and monotonicity from $J_I$, in addition to the other properties implied by Axiom~\ref{ax:mar-98-main-text}. By the Corollary of the main result in Section 3 of \cite{schmeidler1986integral}, there exists a unique normalized capacity $\nu_I$ on $2^{\mathbb N}$ such that
\[
\widehat J_I(X)=\int X\,d\nu_I
\quad\forall X\in B_1^{\mathbb{N}}
\]
In particular, we may apply this representation to indicator sequences of subsets $A\subseteq \mathbb{N}$. If $A$ is finite, a sufficiently long truncation of $\boldsymbol 1_A$ is zero, so Truncation Invariance gives $\nu_I(A)=0$. If $A$ is cofinite, a sufficiently long truncation is one, so $\nu_I(A)=1$. Finally, Permutation Invariance implies that all infinite coinfinite subsets $A,B\subseteq \mathbb N$ have the same capacity. To see this, we may enumerate $A$, $A^c$, $B$, and $B^c$ as:
\[
A=\{a_1<a_2<a_3<\cdots\}\quad
A^c=\{c_1<c_2<c_3<\cdots\}
\]
and
\[
B=\{b_1<b_2<b_3<\cdots\}\quad
B^c=\{d_1<d_2<d_3<\cdots\}
\]
Define $\pi:\mathbb N\to\mathbb N$ by
\[
\pi(a_k)=b_k
\quad\text{and}\quad
\pi(c_k)=d_k
\quad \forall k\in\mathbb N
\]
This is well-defined because $A$ and $A^c$ partition $\mathbb N$. It is straightforward to show that $\pi$ is one-to-one and onto. By construction, $n\in A$ iff $\pi(n)\in B$, so
\[
\mathbf 1[n\in A]=\mathbf 1[\pi(n)\in B]
\quad \forall n\geq 1
\]
Hence by Permutation Invariance,
\[
\mathbf 1_A \sim_{\theta_0,I} \mathbf 1_B
\]
which implies $\nu_I(A)=\nu_I(B)$. Let $c_I=\nu_I(A)$ for all infinite coinfinite $A\subseteq \mathbb{N}$. We have therefore shown
\[
\nu_I(A)
=
\begin{cases}
0&A\text{ finite} \\
c_I&A\text{ and }A^c\text{ infinite} \\
1 &A\text{ cofinite}
\end{cases}
\]
The formula for the Choquet integral now gives (again, see the Corollary in Section 3 of \cite{schmeidler1986integral}), for each $X \in B_1^{\mathbb{N}}$,
\begin{align*}
    \int X\,d\nu_I
    =\int_{x=-\infty}^0 \Big(\nu_I(X>x)-1\Big) \ dx+\int_{x=0}^1 \nu_I(X>x) \ dx \\
\end{align*}
Note that for almost all $x\in (-\infty,1]$,
\[
\{n\in \mathbb{N}: X_n>x\} \text{ is }\begin{cases} 
      \text{cofinite} & x< \liminf X \\
      \text{coinfinite and infinite} & \liminf X< x< \limsup X \\
      \text{finite} & x> \limsup X
   \end{cases}
\]
There are two cases. For $X$ with $\liminf X<0$,
\[
\int X \ d\nu_I=\int_{x=\liminf X}^0 (c_I-1) \ dx+\int_{x=0}^{\limsup X} c_I \ dx=(1-c_I)\liminf X+c_I \limsup X
\]
For $X$ with $\liminf X\geq 0$,
\[
\int X \ d\nu_I=\int_{x=0}^{\liminf X} 1 \ dx+\int_{x=\liminf X}^{\limsup X} c_I \ dx=(1-c_I)\liminf X+c_I\limsup X
\]
Hence, for $K\in B_-^{\mathbb N}$,
\[
J_I(K)
=
\widehat J_I(K+\overrightarrow 1)-1
=
(1-c_I)\liminf_{n\to\infty}K_n
+c_I\limsup_{n\to\infty}K_n
\]
Setting $\alpha_I:=c_I$ yields the desired representation $H_{\alpha_I}$. To see uniqueness, let $K=(-1,0,-1,0,\ldots)$. Then $H_\alpha(K)=\alpha-1$. If $H_{\alpha'}$ represents $\succsim_{\theta_0,I}$ on $B_-^{\mathbb{N}}$, then $\overrightarrow{\alpha-1} \sim_{\theta_0,I} \overrightarrow{\alpha'-1}$ and $\alpha=\alpha'$, since $H_\alpha,H_{\alpha'}$ are normalized and since $\succsim_{\theta_0,I}=\geq$ on constant acts.
\end{proof}

\paragraph{$\boldsymbol{\alpha}$-LAM representation theorem.} 
For the sake of completeness, we characterize the class of preferences on $\mathcal{F}$ with a $\theta_0$-unconditional $\alpha$-LAM utility representation, as defined in Definition \ref{defn:alpha-LAM-util}(ii). The $\theta_0$-conditional case in Theorem \ref{thm:fixed-alpha-LAM} immediately follows.

\begin{theorem-app}\label{app-thm-alpha-LAM-rep}
Let $\succsim$ be a binary relation on $\mathcal F$. The following are equivalent:
\begin{itemize}
    \item[(i)] $(\succsim,\{\succsim_{\theta_0}\}_{\theta_0\in\Theta})$ satisfy Axiom~\ref{ax:Theta-unanimity}; for each $\theta_0 \in \Theta$, $\succsim_{\theta_0}$ satisfies Axiom~\ref{ax: basics-main-text} and $(\succsim_{\theta_0},\{\succsim_{\theta_0,I}\}_{I\in\mathcal I})$ satisfy Axiom~\ref{ax:cons-and-caution-main-text}; and for each $\theta_0 \in \Theta$ and $I \in \mathcal{I}$, $\succsim_{\theta_0,I}$ satisfies Axioms~\ref{ax:gs-main-text}-\ref{ax:mar-98-main-text}.
    \item[(ii)] For some $\alpha\in[0,1]$, $\succsim$ has an $\alpha$-LAM utility representation on $\mathcal F$.
\end{itemize}
\end{theorem-app}

\begin{proof}[Proof of Theorem~\ref{app-thm-alpha-LAM-rep}]
\underline{Forwards direction.} Fix $\theta_0\in\Theta$ and consider the restriction of $\succsim_{\theta_0}$ to
$\mathcal F_H$. We establish a conditional representation before
aggregating across centering values. 

Fix $I\in\mathcal I$. By Weak Order and
Lemma~\ref{lem:properties-of-I-cond}, $\succsim_{\theta_0,I}$ satisfies $I$-Transitivity and $I$-Relevance on
$\mathcal F_H$. Consequently, its induced relation $\succsim_{\theta_0,I}$ on $\mathcal F_I$ is
well-defined and, by Lemma~\ref{lem:I-transitivity}, transitive.
Lemma~\ref{lem-gs} gives the representation $f \mapsto \min_{h\in I}f(h)$ on $\mathbb R_-^I$, which also gives the usual order $\geq$ on constant
utility acts. Together with $I$-Sample Monotonicity, the hypotheses of
Lemma~\ref{lem:marinacci-1998-rep} are therefore satisfied. This provides a unique $\alpha_{\theta_0,I}\in[0,1]$ such that $K \mapsto \alpha_{\theta_0,I}\limsup_{n\to\infty}K_n
 +(1-\alpha_{\theta_0,I})\liminf_{n\to\infty}K_n$ represents the restriction to $B_-^{\mathbb N}$. We have established all four conditions in
Lemma~\ref{lem:LAM-iff-MEU-and-LS}(i): the MEU representation, the
$\alpha_{\theta_0,I}$-limit representation, $I$-Sample Monotonicity,
and transitivity on $\mathcal F_I$. Hence,
$U_{\alpha_{\theta_0,I},I}$ represents $\succsim_{\theta_0,I}$ on
$\mathcal F_I$. Lemma~\ref{lem:I-relevance} extends this representation to
$\mathcal F_H$.

Since this conclusion holds for every $I \in \mathcal{I}$, we apply
Lemma~\ref{lem:consistency-caution}, which implies that all $\alpha_{\theta_0,I}$ equal a common
$\alpha_{\theta_0}\in[0,1]$ and that $F\mapsto \inf_{I\in\mathcal I}U_{\alpha_{\theta_0},I}(F)$ represents $\succsim_{\theta_0}$ on $\mathcal F_H$.
By Lemma~\ref{lem:eq:directed-alpha-identity}, this index equals
\[
 U_{\alpha_{\theta_0}}(F)
 =\alpha_{\theta_0}\inf_{I\in\mathcal I}
       \limsup_{n\to\infty}\min_{h\in I}F(n,h)
 +(1-\alpha_{\theta_0})\inf_{I\in\mathcal I}
       \liminf_{n\to\infty}\min_{h\in I}F(n,h)
\]
This proves the desired characterization at a fixed centering value.

Lemma~\ref{lem:F_H} now implies that
$F\mapsto U_{\alpha_{\theta_0}}(F_{\theta_0})$ represents
$\succsim_{\theta_0}$ on $\mathcal F$. By Lemma~\ref{lem:common-generalized-selector}, we may choose $\alpha\in[0,1]$ such that $\alpha=\alpha_{\theta_0}$ for each $\theta_0\in \Theta$. Finally, $\Theta$-Unanimity and
Lemma~\ref{lem:unanimity-char} give
\[
 F\succsim G
 \iff U_\alpha(F_{\theta_0})\geq U_\alpha(G_{\theta_0})
 \quad\text{for every }\theta_0\in\Theta
\]
as desired.

\underline{Backwards direction.} Suppose that $\alpha\in[0,1]$ gives the representation in part (ii).
Lemma~\ref{lem:unanimity-char} yields $\Theta$-Unanimity and the
representation $F\mapsto U_\alpha(F_{\theta_0})$ of every
$\succsim_{\theta_0}$ on $\mathcal F$.
Lemma~\ref{lem:F_H} therefore gives the representation $U_\alpha$ on
$\mathcal F_H$. By Lemma~\ref{lem:eq:directed-alpha-identity},
$U_\alpha=\inf_{I\in\mathcal I}U_{\alpha,I}$. Lemma~\ref{lem:consistency-caution} gives
the Basic Axioms and Consistency and Caution. 

For each $I$, Lemma~\ref{lem:I-rep} identifies the induced conditional
representation as $U_{\alpha,I}$, and Lemma~\ref{lem:I-relevance}
restricts it to $\mathcal F_I$. Lemma~\ref{lem:LAM-iff-MEU-and-LS} gives
the MEU and $\alpha$-limit representations and $I$-Sample Monotonicity.
Lemma~\ref{lem-gs} then gives the GS89 axioms and $I$-Maximal Caution,
and Lemma~\ref{lem:marinacci-1998-rep} gives the Marinacci axioms.
Each of these implications hold for every $\theta_0\in \Theta$ and every $I\in \mathcal{I}$, which shows part (i).
\end{proof}

\paragraph{Classical and attainment LAM representation theorems.}

\begin{proof}[Proof of Corollary \ref{cor:CLAM-and-ALAM}]
Suppose $\succsim_{\theta_0}$ is represented on $\mathcal{F}_H$ by
\[
U_{\alpha}(F)=\alpha \inf_{I\in \mathcal{I}} \limsup_{n\to\infty} \min_{h\in I} F(n,h)+(1-\alpha) \inf_{I\in \mathcal{I}} \liminf_{n\to\infty} \min_{h\in I} F(n,h)
\]

\underline{Part (i)}: Suppose $\alpha=0$, and fix any $K,K' \in B_-^{\mathbb{N}}$ with $K \sim_{\theta_0} K'$, which implies $\liminf_{n\to\infty} K_n=\liminf_{n\to\infty} K_n'$. Moreover,
\[
\liminf_{n\to\infty}\bigl(\beta K_n+(1-\beta)K_n'\bigr)
\geq
\beta\liminf_{n\to\infty}K_n+(1-\beta)\liminf_{n\to\infty}K_n'
\]
Thus Sample Uncertainty Aversion holds. Conversely, suppose Sample Uncertainty Aversion holds. Choose a partition of infinite sets $A,B,C$ of $\mathbb N$, and let $K=-\boldsymbol 1_{A\cup B}$ and $K'=-\boldsymbol 1_{A\cup C}$. Then we have
\[
U_\alpha(K)=U_\alpha(K')=-(1-\alpha), \ U_\alpha((K+K')/2)=\alpha (-1/2)+(1-\alpha)(-1)
\]
By Sample Uncertainty Aversion,
\[
U_\alpha((K+K')/2) \geq U_\alpha(K)
\]
so $\alpha=0$.

\underline{Part (ii)}: Suppose $\alpha=1$, and fix any $K,K' \in B_-^{\mathbb{N}}$ with $K \sim_{\theta_0} K'$, which implies $\limsup_{n\to\infty} K_n=\limsup_{n\to\infty} K_n'$. Moreover,
\[
\limsup_{n\to\infty}\bigl(\beta K_n+(1-\beta)K_n'\bigr)
\leq
\beta\limsup_{n\to\infty}K_n+(1-\beta)\limsup_{n\to\infty}K_n'
\]
Thus Sample Uncertainty Affinity holds. Conversely, suppose Sample Uncertainty Affinity holds. Choose a partition of infinite sets $A,B,C$ of $\mathbb N$, and let $K=-\boldsymbol 1_B$ and $K'=-\boldsymbol 1_C$. Then we have
\[
U_\alpha(K)=U_\alpha(K')=-(1-\alpha), \ U((K+K')/2)=(1-\alpha)(-1/2)
\]
By Sample Uncertainty Affinity,
\[
U((K+K')/2) \leq U_\alpha(K)
\]
so $\alpha=1$.
\end{proof}

\paragraph{Characterizing generalized $\boldsymbol{\alpha}$-LAM.} In this section, we axiomatically characterize the set of binary relations $\succsim_{\theta_0}$ on $\mathcal{F}_H$ which have a generalized $\alpha$-LAM utility representation by proving Theorem \ref{thm-main-result}. For this section only, it is convenient to revert to the language of risk used in the main text, such that $\succsim_{\theta_0}$ refers to the ranking on $\mathcal R_H$ studied in the main text. We begin by stating and proving some useful lemmas. Note that we may identify the subset of local risk function sequences $\mathcal{R}_H$ that are constant in $n\geq 1$ with the set $\mathbb{R}_+^H$ of bounded, measurable functions $R: H\rightarrow \mathbb{R}_+$.

\begin{lemma-app}
\label{lem:generalized-alpha-static-I-representation}
Suppose $\succsim_{\theta_0}$ has a generalized $\alpha$-LAM risk representation on $\mathcal R_H$ for some $a:\mathcal R_H\to[0,1]$. For each $I\in\mathcal I$, the restriction of $\succsim_{\theta_0,I}$ to $\mathbb{R}_+^H$ is represented by $M_I(R):=\max_{h\in I} R(h)$.
\end{lemma-app}

\begin{proof}
Fix $I\in\mathcal I$, and fix any $R,R'\in\mathbb R_+^H$. 
Observe that
\[
R\succsim_{\theta_0,I}R'
\iff
R_I\overrightarrow{0} \succsim_{\theta_0} R'_I\overrightarrow{0} \iff V_{\mathrm{gen}\text{-}\alpha\text{-}\mathrm{LAM}}(R'_I\overrightarrow{0})\geq V_{\mathrm{gen}\text{-}\alpha\text{-}\mathrm{LAM}}(R_I\overrightarrow{0})
\]
by definition of $\succsim_{\theta_0,I}$. Recall that, for each $\Tilde{R} \in \mathcal{R}_H$,
\[
V_{\mathrm{gen}\text{-}\alpha\text{-}\mathrm{LAM}}(\Tilde{R})
=a(\Tilde{R})
\sup_{J\in\mathcal I} \liminf_{n\to\infty}\max_{h\in J}\Tilde{R}(n,h)
+
\bigl(1-a(\Tilde{R})\bigr)
\sup_{J\in\mathcal I} \limsup_{n\to\infty}\max_{h\in J}\Tilde{R}(n,h)
\]
Let $\Tilde{R}:=R_I\overrightarrow{0}$.
Since $\Tilde{R}$ is constant in $n$, for each $J\in \mathcal{I}$ with $J\cap I$ nonempty,
\[
\liminf_{n\to\infty}\max_{h\in J}\Tilde{R}(n,h)=\limsup_{n\to\infty}\max_{h\in J}\Tilde{R}(n,h)=\max_{h\in J\cap I} R(h)
\]
and equals $0$ otherwise for $J$ disjoint from $I$. Hence,
\[
\sup_{J\in \mathcal{I}}\liminf_{n\to\infty}\max_{h\in J}\Tilde{R}(n,h)=\sup_{J\in \mathcal{I}}\limsup_{n\to\infty}\max_{h\in J}\Tilde{R}(n,h)=\max_{h\in I} R(h)
\]
which implies that $V_{\mathrm{gen}\text{-}\alpha\text{-}\mathrm{LAM}}(R_I\overrightarrow{0})
=\max_{h\in I} R(h)$. By an exactly analogous argument, $V_{\mathrm{gen}\text{-}\alpha\text{-}\mathrm{LAM}}(R_I'\overrightarrow{0})
=\max_{h\in I} R'(h)$, and we have shown
\[
R \succsim_{\theta_0,I} R' \iff \max_{h\in I} R'(h) \geq \max_{h\in I} R(h)
\]
as desired.
\end{proof}

\begin{lemma-app}
\label{lem:ABD-equivalence}
Suppose $\succsim_{\theta_0}$ satisfies Axiom \ref{ax:gen-basics-main-text} and, for each $I\in \mathcal{I}$, $\succsim_{\theta_0,I}$ satisfy Axioms \ref{ax:gs-main-text}-\ref{ax:I-Caution-Main-Text}. $(\succsim_{\theta_0},\{\succsim_{\theta_0,I}\}_{I\in \mathcal{I}})$ satisfies Axiom \ref{ax:asymptotic-benchmark-dominance} if and only if:
\begin{equation}
\label{eq:lam-bounds}
    \overrightarrow{\underline V_{\mathrm{LAM}}(R)}
    \succsim_{\theta_0}R
    \succsim_{\theta_0}
    \overrightarrow{\overline V_{\mathrm{LAM}}(R)}
    \qquad
    \text{for each }R\in\mathcal R_H \tag{LAM Bounds}
\end{equation}
\end{lemma-app}

\begin{proof}[Proof of Lemma \ref{lem:ABD-equivalence}]
For $R\in\mathcal R_H$ and $I\in\mathcal I$, write
\[
m_I^R(n):=\max_{h\in I}R(n,h),
\qquad
a_I(R):=\liminf_{n\to\infty}m_I^R(n),
\qquad
b_I(R):=\limsup_{n\to\infty}m_I^R(n),
\]
and
\[
L(R):=\sup_{I\in\mathcal I}a_I(R),
\qquad
B(R):=\sup_{I\in\mathcal I}b_I(R).
\]

By Lemma \ref{lem-gs}, each $\succsim_{\theta_0,I}$ is represented by the function $M_I(r)=\max_{h\in I} r(h)$ on $\mathbb{R}_+^I$. By an exactly analogous argument to Lemma \ref{lem:I-relevance}, each $\succsim_{\theta_0,I}$ is represented by the function $M_I(R)=\max_{h\in I} R(h)$ on $\mathbb{R}_+^H$, the set of bounded, measurable local risk function sequences that are constant in $n$.

\underline{Backwards direction}: suppose \eqref{eq:lam-bounds} holds. First, we show Axiom \ref{ax:asymptotic-benchmark-dominance}(i). Fix $R \in \mathcal{R}_H$ and $k\in \mathbb{R}_+$, and suppose that $R_n \succsim_{\theta_0,I} \overrightarrow{k}$ e.v. for all $I \in \mathcal{I}$. By above, this implies $k\geq M_I(R_n)$ e.v. for all $I \in \mathcal{I}$. Then,
\[
\overline V_{\mathrm{LAM}}(R)=\sup_{I\in \mathcal{I}} \limsup_{n\to\infty} M_I(R_n)\leq k
\]
and hence by Axiom \ref{ax:gen-basics-main-text} and \eqref{eq:lam-bounds},
\[
R \succsim_{\theta_0} \overrightarrow{\overline V_{\mathrm{LAM}}(R)} \succsim_{\theta_0} \overrightarrow{k}
\]
Similarly for Axiom \ref{ax:asymptotic-benchmark-dominance}(ii), suppose there exists $I\in\mathcal I$ such that $\overrightarrow{k} \succsim_{\theta_0,I} R_n$ e.v., which by above implies $M_I(R_n)\geq k$ e.v. Then,
\[
\underline V_{\mathrm{LAM}}(R)=\sup_{I\in \mathcal{I}} \liminf_{n\to\infty} M_I(R_n) \geq k
\]
and hence by Axiom \ref{ax:gen-basics-main-text} and \eqref{eq:lam-bounds},
\[
k \succsim_{\theta_0} \overrightarrow{\underline V_{\mathrm{LAM}}(R)} \succsim_{\theta_0} R
\]

\underline{Forwards direction}: suppose Axiom \ref{ax:asymptotic-benchmark-dominance} holds. We begin by showing
\begin{equation}
\label{eq:lower-bound-from-ABD}
\overrightarrow{\underline V_{\mathrm{LAM}}(R)}\succsim_{\theta_0}R
\end{equation}
There are two cases. If $\underline V_{\mathrm{LAM}}(R)=0$, then since $R(n,h)\geq0$ for each $(n,h)$, Axiom \ref{ax:asymptotic-benchmark-dominance}(ii) with $k=0$ gives \eqref{eq:lower-bound-from-ABD}. Suppose instead that $\underline V_{\mathrm{LAM}}(R)>0$. For each $m\geq2$, let
\[
k_m:=\left(1-\frac{1}{m}\right)\underline V_{\mathrm{LAM}}(R)
\]
For each $m\geq 2$, there exists $I_m\in\mathcal I$ such that $\liminf_{n\to\infty} M_{I_m}(R_n)>k_m$ and hence $M_{I_m}(R_n)\geq k_m$ eventually, which by Axiom \ref{ax:asymptotic-benchmark-dominance}(ii) implies $\overrightarrow{k_m}\succsim_{\theta_0}R $. Since this holds for each $m\geq 2$, the set
\[
\left\{
\beta\in[0,1]:
\beta\overrightarrow{\underline V_{\mathrm{LAM}}(R)}\succsim_{\theta_0}R
\right\}
\]
contains $1-1/m$ for each $m\geq2$. Since it is closed by Constant Mixture Continuity, it contains $1$, which proves \eqref{eq:lower-bound-from-ABD}.

Similarly, for each $m\geq 1$ let $j_m=\overline V_{\mathrm{LAM}}(R)+1/m$. For each $m\geq 1$ and $I\in\mathcal I$,
\[
\limsup_{n\to\infty} M_I(R_n)\leq \overline V_{\mathrm{LAM}}(R)<j_m \implies M_I(R_n)\leq j_m \text{ e.v.} \implies R\succsim_{\theta_0}\overrightarrow{j_m} 
\]
by Axiom \ref{ax:asymptotic-benchmark-dominance}(i). Since this holds for each $m\geq 1$, the set
\[
\left\{
\beta\in[0,1]:
R\succsim_{\theta_0}
\beta\overrightarrow{\overline V_{\mathrm{LAM}}(R)+1}+(1-\beta)\overrightarrow{\overline V_{\mathrm{LAM}}(R)}
\right\}
\]
contains $1/m$ for each $m\geq1$, and hence contains $0$ by Constant Mixture Continuity. Hence,
\begin{equation}
\label{eq:upper-bound-from-ABD}
R\succsim_{\theta_0}\overrightarrow{\overline V_{\mathrm{LAM}}(R)}
\end{equation}
Equations~\eqref{eq:lower-bound-from-ABD} and \eqref{eq:upper-bound-from-ABD} are \eqref{eq:lam-bounds}.
\end{proof}

\paragraph{Generalized $\boldsymbol{\alpha}$-LAM representation theorem for utility act sequences.} For the sake of completeness, we characterize the class of preferences on $\mathcal{F}$ with a $\theta_0$-unconditional generalized $\alpha$-LAM utility representation, as defined in Definition \ref{defn:alpha-LAM-util}(i). The $\theta_0$-conditional case in Theorem \ref{thm-main-result} immediately follows.

\begin{theorem-app}\label{app-thm-gen-alpha-LAM-rep}
Let $\succsim$ be a binary relation on $\mathcal F$. The following are equivalent:
\begin{itemize}
    \item[(i)] $(\succsim,\{\succsim_{\theta_0}\}_{\theta_0\in\Theta})$ satisfy Axiom~\ref{ax:Theta-unanimity}; for each $\theta_0 \in \Theta$, $\succsim_{\theta_0}$ satisfies Axiom~\ref{ax:gen-basics-main-text} and $(\succsim_{\theta_0},\{\succsim_{\theta_0,I}\}_{I\in\mathcal I})$ satisfy Axiom~\ref{ax:asymptotic-benchmark-dominance}; and for each $\theta_0\in \Theta$ and $I \in \mathcal{I}$, $\succsim_{\theta_0,I}$ satisfies Axioms~\ref{ax:gs-main-text} and \ref{ax:I-Caution-Main-Text}.
    \item[(ii)] $\succsim$ has a generalized $\alpha$-LAM utility representation on $\mathcal F$.
\end{itemize}
\end{theorem-app}

\begin{proof}[Proof of Theorem~\ref{app-thm-gen-alpha-LAM-rep}]
In this proof, we work in the language of risk and let $\succsim^r$ denote the preference over risk function sequences.

\underline{Forwards direction.}
First, fix $\theta_0\in\Theta$. By Lemma~\ref{lem:ABD-equivalence},
\begin{equation}\label{eq:app-generalized-CE-bracket}
 \overrightarrow{\underline V_{\mathrm{LAM}}(R)}\succsim^r_{\theta_0}R
 \succsim^r_{\theta_0}\overrightarrow{\overline V_{\mathrm{LAM}}(R)} \quad \forall R\in \mathcal{R}_H
\end{equation}
Next, we construct certainty equivalents using Constant Mixture
Continuity. Fix $R \in \mathcal{R}_H$. For each $\beta\in[0,1]$, define $c_\beta(R)=(1-\beta)\underline V_{\mathrm{LAM}}(R)+\beta \overline V_{\mathrm{LAM}}(R)$ and define
\[
 U_R=\{\beta:\overrightarrow{c_\beta(R)}
                    \succsim^r_{\theta_0}R\}
 \quad
 L_R=\{\beta:R\succsim^r_{\theta_0}
                    \overrightarrow{c_\beta(R)}\}
\]
These are closed subsets of $[0,1]$ by Constant Mixture Continuity.
Their union is $[0,1]$ by completeness, and \eqref{eq:app-generalized-CE-bracket} gives $0\in U_R$ and $1\in L_R$.
Connectedness of $[0,1]$ therefore implies $U_R\cap L_R\neq\varnothing$.
Consequently, there exists $V_{\theta_0}(R)\in[\underline V_{\mathrm{LAM}}(R),\overline V_{\mathrm{LAM}}(R)]$ such that $R\sim^r_{\theta_0}\overrightarrow{V_{\theta_0}(R)}$. By Transitivity and Constant Calibration, $V_{\theta_0}(R)$ is unique and for every $R,R'\in\mathcal R_H$,
\[
 R\succsim^r_{\theta_0}R'
 \iff V_{\theta_0}(R)\leq V_{\theta_0}(R')
\]
In particular, $V_{\theta_0}$ is a normalized risk representation. Define
\[
 \alpha^r_{\theta_0}(R)=
 \begin{cases}
 \displaystyle\frac{\overline V_{\mathrm{LAM}}(R)-V_{\theta_0}(R)}{\overline V_{\mathrm{LAM}}(R)-\underline V_{\mathrm{LAM}}(R)}
       &\underline V_{\mathrm{LAM}}(R)<\overline V_{\mathrm{LAM}}(R),\\[6pt]
 0    &\underline V_{\mathrm{LAM}}(R)=\overline V_{\mathrm{LAM}}(R)
 \end{cases}
\]
The certainty-equivalent bounds imply
$\alpha^r_{\theta_0}(R)\in[0,1]$, and by construction
\[
 V_{\theta_0}(R)
 =\alpha^r_{\theta_0}(R)\underline V_{\mathrm{LAM}}(R)
  +(1-\alpha^r_{\theta_0}(R))\overline V_{\mathrm{LAM}}(R)
\]
Define $a_{\theta_0}(F):=\alpha^r_{\theta_0}(-F)$ for $F\in\mathcal F_H$. Then, $U_{a_{\theta_0}}(F)=-V_{\theta_0}(-F)$ represents $\succsim_{\theta_0}$ on
$\mathcal F_H$. By Lemma~\ref{lem:F_H}, $F\mapsto U_{a_{\theta_0}}(F_{\theta_0})$ represents $\succsim_{\theta_0}$ on $\mathcal F$. By Lemma~\ref{lem:common-generalized-selector}, we may choose
$a:\mathcal F_H\to[0,1]$ which represents each $\succsim_{\theta_0}$ on $\mathcal F$. Finally, $\Theta$-Unanimity and Lemma~\ref{lem:unanimity-char} imply
\[
 F\succsim G
 \iff U_a(F_{\theta_0})\geq U_a(G_{\theta_0})
 \quad\text{for every }\theta_0\in\Theta
\]
as desired.

\underline{Backwards direction.}
Suppose that $a$ gives the representation in part (ii). Lemmas~\ref{lem:unanimity-char} and \ref{lem:F_H} yield
$\Theta$-Unanimity and the representation $U_a$ of each
$\succsim_{\theta_0}$ on $\mathcal F_H$.
Fix $\theta_0$ and define $V(R):=-U_a(-R)$. Weak Order, Constant Calibration, and Constant Mixture
Continuity are immediate. Lemma~\ref{lem:generalized-alpha-static-I-representation} shows that for each $I\in \mathcal{I}$, $\succsim_{\theta_0,I}$ is represented on $\mathbb{R}_+^H$ (and hence $\mathbb{R}_+^I$) by $r_I\mapsto\max_{h\in I}r_I(h)$. The GS89 axioms and $I$-Maximal Caution follow from Lemma~\ref{lem-gs}.
Moreover, $\underline V_{\mathrm{LAM}}(R)\leq V(R)\leq \overline V_{\mathrm{LAM}}(R)$ gives
\eqref{eq:app-generalized-CE-bracket}, which by Lemma~\ref{lem:ABD-equivalence} yields Asymptotic Consistency and Caution. This proves part (i).
\end{proof}

\end{document}